\newtheorem{corollary}{Corollary}
\newtheorem{theorem}{Theorem}
\newtheorem{proposition}{Proposition}
\newtheorem{lemma}{Lemma}
\newtheorem{fact}{Fact}
\newtheorem{remark}{Remark}
\newtheorem{definition}{Definition}
\newtheorem{observation}{Observation}
\newcommand{\eg}{\emph{e.g., }}
\newcommand{\ie}{\emph{i.e., }}
\newcommand\figcaption{\def\@captype{figure}\caption}
\newcommand\tabcaption{\def\@captype{table}\caption}
\begin{document}
\ifodd 1
\newcommand{\revh}[1]{{\color{magenta}#1}}
\newcommand{\rev}[1]{{\color{black}#1}} 
\newcommand{\com}[1]{\textbf{\color{red} (COMMENT: #1)}} 
\newcommand{\response}[1]{\textbf{\color{green} (RESPONSE: #1)}} 
\else
\newcommand{\revh}[1]{#1}
\newcommand{\rev}[1]{#1}
\newcommand{\com}[1]{}
\newcommand{\response}[1]{}
\fi

\title{Price Differentiation for Communication Networks}

\author{Shuqin Li, \emph{Student Member, IEEE},  and Jianwei Huang, \emph{Senior Member, IEEE} \thanks{\IEEEcompsocthanksitem Shuqin Li is with Bell Labs Shanghai, Alcatel-Lucent Shanghai Bell Co., Ltd., D400, Bldg. 3, 388 Ningqiao Rd, Shanghai, 201206, China.
E-mail: Shuqin.Li@alcatel-sbell.com.cn This work was done when Shuqin Li was in The Chinese University of Hong Kong.
\IEEEcompsocthanksitem Jianwei Huang is with Department of Information Engineering, The  Chinese University of Hong Kong. E-mail: jwhuang@ie.cuhk.edu.hk.
\IEEEcompsocthanksitem Part of the results have appeared in \emph{IEEE GLOBECOM} 2009\cite{li2009revenue}. This work is supported by the General Research Funds (Project Number 412710 and 412511) established under the University Grant Committee of the Hong Kong Special Administrative Region, China.
}}

\maketitle

\begin{abstract}
We study the optimal usage-based pricing problem in a resource-constrained network with one profit-maximizing \rev{service provider} and multiple groups of surplus-maximizing users. With the assumption that the \rev{service provider} knows the utility function of each user (thus complete information), we find that the complete price differentiation scheme can achieve a large revenue gain (e.g., 50\%) compared to no price differentiation, when the total network resource is comparably limited and the high willingness to pay users are minorities. However, the complete price differentiation scheme may lead to a high implementational complexity. To trade off the revenue against the  implementational complexity, we further study the partial price differentiation scheme, and design a polynomial-time algorithm that can compute the optimal partial differentiation prices. We also consider the incomplete information case where the service provider does not know which group each user belongs to. We show that it is still possible to realize price differentiation under this scenario, and provide the sufficient and necessary condition under which an incentive compatible differentiation scheme can achieve the same revenue as under complete information.
\end{abstract}
\begin{IEEEkeywords}
Network Pricing, Price Differentiation, Revenue Management, Resource Allocation.
\end{IEEEkeywords}

\section{Introduction}

Pricing is important for the design, operation, and management of communication networks. Pricing has been used with two different meanings in the area of communication networks.  One is the ``optimization-oriented'' pricing for network resource allocation: it is made popular by Kelly's seminal work on network congestion control\cite{kelly1997charging,kelly1998rate}. For example, the  Transmission Control Protocol (TCP) has been successfully reverse-engineered as a congestion pricing based solution to a network optimization problem \cite{low1999optimization,kunniyur2003end}. A more general framework of Network Utility Maximization (NUM) was subsequently developed to forward-engineer many new network protocols (see a recent survey in \cite{chiang2007layering}). In various NUM formulations, the ``optimization-oriented'' prices often represent the Lagrangian multipliers of various resource constraints and are used to coordinate different network entities to achieve the maximum system performance in a distributed fashion. The other is the ``economics-based'' pricing, which is used by a network service provider to various objectives including revenue maximization. The proper design of such a pricing becomes particularly challenging today due to the exponential growth of data volume and applications in both wireline and wireless networks. In this paper, we focus on studying the ``economics-based'' pricing schemes for managing communication networks.

Economists have proposed many sophisticated pricing mechanisms to extract surpluses from the consumers and maximize revenue (or profits) for the providers. A typical example is the optimal nonlinear pricing\cite{mussa1978monopoly, stokey1979intertemporal, maskin1984monopoly}.  In practice, however, we often observe simple pricing schemes deployed by the service providers. Typical examples include flat-fee pricing and (piecewise) linear usage-based pricing. One potential reason behind the gap between ``theory'' and ``practice'' is that the optimal pricing schemes derived in economics often has a high implementational complexity. Besides a higher maintenance cost,  complex pricing schemes are not ``customer-friendly'' and discourage customers from using the services \cite{shakkottai2008price,valancius2011many}.  Furthermore, achieving the highest possible revenue often with complicated pricing schemes requires knowing the information (identity and preference) of each customer, which can be challenging in large scale communication networks. It is then natural to ask the following two questions:
\begin{enumerate}
    \item  How to design simple pricing schemes to achieve the best tradeoff between complexity and performance?
	 \item  How does the network information structure impact the design of pricing schemes?
\end{enumerate}

This paper tries to answer the above two questions with some stylized communication network models.
Different from some previous work that considered a flat-fee pricing scheme where the payment does not depend on the resource consumption (\eg \cite{acemoglu2004marginal,gibbens2000internet, shakkottai2008price}), here we study the revenue maximization problem with the linear usage-based pricing schemes, where a user's total payment is linearly proportional to allocated resource. In  wireless communication networks, however, the usage-based pricing scheme seems to become increasingly popular due to the rapid growth of wireless data traffic. In June 2010,  AT$\&$T in the US switched from the flat-free based pricing (\ie unlimited data  for a fixed fee) to the  usage-based pricing schemes for 3G wireless data \cite{news1}.  Verizon followed up with similar plans in July 2011. Similar usage-based pricing plans have been adopted by major Chinese wireless service providers including China Mobile and China UniCom.
Thus the research on the usage-based pricing is of great practical importance.

In this paper,  we consider the revenue maximization problem of a monopolist service provider facing multiple groups of users. Each user determines its optimal resource demand to maximize the surplus, which is the difference between its utility and payment. The service provider chooses the pricing schemes to maximize his revenue, subject to a limited resource. We consider both complete information and incomplete information scenarios and design different pricing schemes with different implementational complexity levels.

Our main contributions are as follows.
\begin{itemize}
    \item \emph{Complete network information}: We propose a polynomial algorithm to compute the optimal solution of the partial price differentiation problem, which includes the complete price differentiation scheme and the single pricing scheme as special cases. The optimal solution has a threshold structure, which allocates positive resources to high willingness to pay users with priorities.
	 \item \emph{Revenue gain under the complete network information}: Compared to the single pricing scheme, we identify the two important factors behind the revenue increase of the (complete and partial) price differentiation schemes: the differentiation gain and the effective market size. The revenue gain is the most significant when high users are minority among the whole population and total resource is limited but not too small.
	 \item \emph{Incomplete network information}: We design an incentive-compatible scheme with the goal to achieve the same maximum revenue that can be achieved with the complete information. We find that if the differences of willingness to pays of users are larger than some thresholds, this incentive-compatible scheme can achieve the same maximum revenue. We further characterize the necessary and sufficient condition for the thresholds.
\end{itemize}

It is interesting to compare our results under the complete network information scenario with results in \cite{shakkottai2008price} and \cite{chau2010viability}. In \cite{shakkottai2008price}, the authors showed that the revenue gain of price differentiation is small with a flat entry-fee based  Paris Metro Pricing (\eg \cite{odlyzko1999paris}), and a complicated differentiation strategy may not be worthwhile. Chau et al.\cite{chau2010viability} further derived the sufficient conditions of congestion functions that guarantee the viability of these Paris Metro Pricing schemes.
By contrast, our results show that the revenue gain of price differentiation can be substantial for a usage-based pricing system.

Some recent work of usage-based pricing and revenue management in communication network includes \cite{basar2002revenue, shen2007optimal, daoud2008stackelberg, jiang2008time, hande2010pricing, he2005pricing,shakkottai2006economics, voja}. Basar and Srikant in \cite{basar2002revenue} investigated the bandwidth allocation problem in a single link network with the single pricing scheme. Shen and Basar in \cite{shen2007optimal} extended the study to a more general nonlinear pricing case with the incomplete network information scenario. They discussed the single pricing scheme under incomplete information with a continuum distribution of users' types. In contrast, our study on the incomplete information focuses on the linear pricing with a discrete setting of users' types. We also show that, besides the single pricing scheme, it is also possible to design differentiation pricing schemes under incomplete information. Daoud \emph{et al.}  \cite{daoud2008stackelberg} studied a uplink power allocation problem in a CDMA system, where the interference among users are the key constraint instead of the limited total resource considered in our paper.
Jiang et al. in \cite{jiang2008time} and Hande et al. in \cite{hande2010pricing}  focused on the study of the time-dependent pricing. He and Walrand in  \cite{he2005pricing},  Shakkottai and Srikant in \cite{shakkottai2006economics} and Gajic et al. in \cite{voja} focused on the interaction between different service providers embodied in the pricing strategies, rather than the design of the pricing mechanism. Besides, none of the related work considered the partial differential pricing as in our paper.

%

\section{System Model}
We consider a network with a total amount of $S$ limited resource (which can be in the form of rate, bandwidth, power, time slot, etc.). The resource is allocated by a monopolistic \rev{service provider} to a set $\mathcal{I}=\{1,\ldots,I\}$ of user groups.  Each group $i\in\mathcal{I}$ has $N_i$ homogeneous users\footnote{A special case is $N_{i}$=1 for each group, i.e., all users in the network are different.} with the same utility function:
\begin{equation}
    u_i(s_i)=\theta_i \ln(1+s_i),
\label{eq:log_u}
\end{equation}
where $s_i$ is the allocated resource to one user in group~$i$ and $\theta_i$ represents the willingness to pay of group $i$. \rev{The logarithmic utility function is commonly used to model the proportionally fair resource allocation in communication networks (see \cite{basar2002revenue} for detailed explanations).
The analysis of the complete information case can also be extended to more general utility functions (see Appendix~\ref{sec:general_utility}). } Without loss of generality, we assume that $\theta_1>\theta_2>\dots>\theta_I$. In other words, group~$1$ contains users with the highest valuation, and group~$I$ contains users with the lowest valuation.

We consider two types of information structures:
\begin{enumerate}
    \item \textbf{Complete information}: the \rev{service provider} knows each user's utility function.
 Though the complete information is a very strong assumption, it is the most frequently studied scenario the network pricing literature \cite{basar2002revenue,shen2007optimal,daoud2008stackelberg,jiang2008time, hande2010pricing, he2005pricing, shakkottai2006economics, voja}. The significance of studying the complete information is two-fold. It serves as the benchmark of practical designs and provides important insights for the incomplete information analysis.

	 \item \textbf{Incomplete information}: the \rev{service provider} knows the total number of groups $I$, the number of users in each group $N_{i}, i\in\mathcal{I}$, and the utility function of each group $u_{i}, i\in\mathcal{I}$. It does not know which user belongs to which group.
Such assumption in our discrete setting is analogous to that the service provider knows only the users' types distribution in a continuum case. Such statistical information can be obtained through long term observations of a stationary user population.
\end{enumerate}

The interaction between the \rev{service provider} and users can be characterized as a two-stage Stackelberg model shown in Fig.~\ref{fig:model}.
\begin{figure}[ht]
\centering
\includegraphics[scale=0.45]{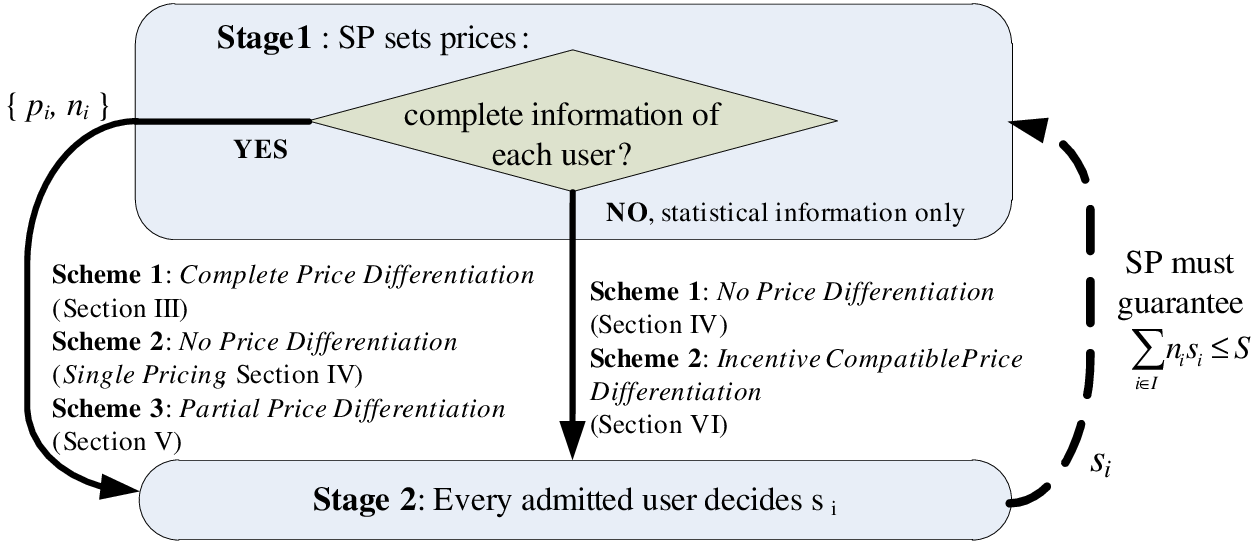}
\caption{A two-stage Stackelberg model} \label{fig:model}
\end{figure}
The service provider publishes the pricing scheme in Stage~1, and users respond with their demands in Stage~2. The users want to maximize their surpluses by optimizing their demands according to the pricing scheme. The service provider wants to maximize its revenue by setting the right pricing scheme to induce desirable demands from users. Since the service provider has a limited total resource, it must guarantee that the total demand from users is no larger than what it can supply.

The details of pricing schemes depend on the information structure of the service provider. Under complete information, since the service provider can distinguish different groups of users, it announces the pricing and the admission control decisions to different groups of users. It can choose from the complete price differentiation scheme, the single pricing scheme, and the partial price differentiation scheme to realize a desired trade-off between the implementational complexity and the total revenue. Under incomplete information, it publishes a common price \emph{menu} to all users, and allow users to freely choose a particular price option in this menu. All these pricing schemes will be discussed one by one in the following sections.

Note that it is possible for a user to achieve an ``arbitrage'' by splitting into several smaller users, each of which requests a small amount of resource and enjoys a lower unit price. Fortunately, preventing arbitrage of services is often easier and less costly than that of goods \cite{pashigian1995price}. For example, we can often uniquely  identify a user through its MAC address. Full discussion of arbitrage prevention, however, is beyond the scope of this paper.

%
\section{Complete Price Differentiation under complete information}
\label{sec:cpd}
We first consider the complete information case. Since the \rev{service provider} knows the utility and the identity of each user, it is possible to maximize the revenue by charging a different price to each group of users. The analysis will be based on backward induction, starting from Stage~2 and then moving to Stage~1.
\subsection{User's Surplus Maximization Problem in Stage 2}
If a user in group $i$ has been admitted into the network and offered a linear price $p_i$ in Stage~1, then it solves the following surplus maximization problem,
\begin{equation}
\label{eq:user's_problem}
    \underset{s_{i}\ge 0}{\text{maximize}}\; u_{i}(s_{i})-p_{i} s_{i},
\end{equation}
which leads to the following unique optimal demand
\begin{equation}
    s_i(p_{i}) =\left(\frac{\theta_i }{p_i} - 1\right)^+,\,\;\text{where}\;(\cdot)^+ \triangleq \max(\cdot,0).
\label{eq:user's_demand}
\end{equation}
\begin{remark}
The analysis of the Stage~2 user surplus maximization problem is the same for all pricing schemes. The result in (\ref{eq:user's_demand}) will be also used in Sections \ref{sec:single_price}, \ref{sec:PPD} and \ref{sec:incomplete_information}.
\end{remark}

\subsection{Service Provider's Pricing and Admission Control Problem in Stage 1}
\label{sec:sp_cpd}
In Stage~1, the service provider maximizes its revenue by choosing  the price $p_i$ and the  number of admitted users $n_i$ for each group $i$ subject to the limited total resource $S$. The key idea is to perform a Complete Price differentiation ($CP$) scheme, i.e., charging each group with a different price.
\begin{eqnarray}
    CP:& \underset{\boldsymbol{p}\geq 0, \boldsymbol{s}\geq 0,\boldsymbol{n}}{\text{maximize}} & \sum\limits_{i\in\mathcal{I}} n_i p_{i} s_{i} \label{eq:cp_obj}\\
    &\text{subject to}& s_i =\left(\frac{\theta_i }{p_i} - 1\right)^+,\;\; i\in\mathcal{I}, \label{demand_function}\\
    && n_i \in\{0,\ldots,N_{i}\} \;\;,\;\;  i\in\mathcal{I}, \label{adimission_control}\\
    &&\sum\limits_{i\in\mathcal{I}} n_i s_{i} \le S. \label{resource_constraint}
\end{eqnarray}
where $\boldsymbol{p}\overset{\Delta}{=}\{p_i, i\in\mathcal{I}\}$, $\boldsymbol{s}\overset{\Delta}{=}\{s_i, i\in\mathcal{I}\}$, and $\boldsymbol{n}\overset{\Delta}{=}\{n_i, i\in\mathcal{I}\}$. We use bold symbols to
denote vectors in the sequel. Constraint~(\ref{demand_function}) is the solution of the Stage~2 user surplus maximization problem in (\ref{eq:user's_demand}). Constraint
(\ref{adimission_control}) denotes the admission control decision, and
constraint (\ref{resource_constraint}) represents the total limited
resource in the network.

The $CP$ problem  is not straightforward to solve, since it is a non-convex
optimization problem with a non-convex objective function (summation of
products of $n_{i}$ and $p_{i}$), a coupled constraint
(\ref{resource_constraint}), and integer variables $\boldsymbol{n}$.
However, it is possible to convert it into an equivalent convex
formulation  through a series of transformations, and thus the problem can be solved efficiently.

First, we can remove the $(\cdot)^+$ sign in constraint (\ref{demand_function}) by realizing the fact that there is no need to set $p_i$ higher than $\theta_i$  for users in group
$i$; users in group $i$ already demand zero resource and
generate zero revenue  when $p_i = \theta_i$. This means that we can rewrite constraint
(\ref{demand_function}) as
\begin{equation}
    p_i=\frac{\theta_i}{s_i+1} \;{\rm{and}}\;s_i\ge 0,  i\in\mathcal{I}.
\label{transform_con}
\end{equation}
Plugging (\ref{transform_con}) into (\ref{eq:cp_obj}), then the objective function becomes $\sum\limits_{i\in\mathcal{I}} n_i \frac{\theta_is_i}{s_i+1}$.
We can further decompose the $CP$ problem in the following two subproblems:
\begin{enumerate}
\item \emph{Resource allocation}: for a fixed admission control decision $\boldsymbol{n}$,  solve for the optimal resource allocation $\boldsymbol{s}$.
\begin{eqnarray}
    CP_1:&\underset{\boldsymbol{s}\geq 0}{ \text{maximize}}\;  & \sum\limits_{i\in\mathcal{I}} n_i \frac{\theta_is_i}{s_i+1}\nonumber\\
    &\text{subject to}\;&\sum\limits_{i\in\mathcal{I}} n_i s_{i} \le S. \label{resource_c}
\end{eqnarray}

Denote the solution of $CP_1$ as $\boldsymbol{s}^* =(s_i^*(\boldsymbol{n}),\forall i\in\mathcal{I})$. We further maximize the revenue of the integer admission control variables $\boldsymbol{n}$.
\item \textit{Admission control}:
    \begin{eqnarray}
    CP_2:& \underset{\boldsymbol{n}}{\text{maximize}}\; & \sum\limits_{i\in\mathcal{I}} n_i \frac{\theta_is_i^*(\boldsymbol{n})}{s_i^*(\boldsymbol{n})+1} \label{revenue_function} \\
    &{\text{subject to}}\;& n_i \in\{0,\ldots,N_{i}\} \;\;,\;\;  i\in\mathcal{I} \nonumber
\end{eqnarray}
\end{enumerate}

Let us first solve the $CP_1$ subproblem   in $\boldsymbol{s}$. Note that it is a convex optimization problem. By using Lagrange multiplier technique, we can get the first-order necessary and sufficient condition:
\begin{equation}
\label{eq:cd_resource}
s^*_{i}(\lambda) = \left(\sqrt{\frac{\theta_i}{\lambda}}-1\right)^+,
\end{equation}
where $\lambda$ \rev{is} the Lagrange multiplier corresponding to the resource constraint (\ref{resource_c}).

Meanwhile, we note the resource constraint (\ref{resource_c}) must hold with
equality, since the objective is strictly increasing function in $s_{i}$. Thus, by plugging (\ref{eq:cd_resource}) into (\ref{resource_c}), we have
\begin{equation}
\label{eq:cd_water_filling}
    \sum\limits_{i\in\mathcal{I}} n_i \left(\sqrt{\frac{\theta_i}{\lambda}}-1\right)^+ = S.
\end{equation}
This weighted water-filling problem (where $\frac{1}{\sqrt{\lambda}}$ can be viewed as the water level) in general has no closed-form solution  for $\lambda$. However, we can efficiently determine the optimal solution $\lambda^*$ by exploiting the special structure of our problem. Note that since $\theta_1>\dots>\theta_I$, then $\lambda^*$ must satisfy  the following condition:

\begin{equation}
\label{eq:water-filling_cpd}
    \sum_{i=1}^{K^{cp}} n_i\left(\sqrt{\frac{\theta_i}{\lambda^*}}-1\right)=S,
\end{equation}
for a group index threshold value $K^{cp}$ satisfying
\begin{equation}
\label{eq:cd_threshold}
\frac{\theta_{K^{cp}}}{\lambda^*}> 1\; {\rm{and}}\; \frac{\theta_{K^{cp}+1}}{\lambda^*}\le 1.
\end{equation}
In other words, only groups with index no larger than $K_{cp}$ will be allocated the positive resource.
This property leads to the following simple Algorithm~\ref{alg:1} to compute $\lambda^*$ and group index threshold $K^{cp}$: we start by assuming $K^{cp}=I$ and compute $\lambda$. If (\ref{eq:cd_threshold}) is not satisfied, we decrease $K^{cp}$ by one and recompute $\lambda$ until (\ref{eq:cd_threshold}) is satisfied.
\begin{algorithm}
\caption{Solving the Resource Allocation Problem $CP_1$}     
\label{alg:1}
\begin{algorithmic}[1]
 \Function {$CP$} {$\{n_i,\theta_i\}_{i\in{\mathcal{I}}}$, $S$}
    \State $k\gets I$,  $\lambda(k) \gets \left(\frac{\sum_{i=1}^{k} n_i \sqrt {\theta_i}}{S+\sum_{i=1}^{k}n_i}\right)^2$
    \While {$\theta_k\le \lambda(k)$}
        \State $k \gets k-1$,  $\lambda(k) \gets \left(\frac{\sum_{i=1}^{k} n_i \sqrt {\theta_i}}{S+\sum_{i=1}^{k}n_i}\right)^2$
    \EndWhile
    \State $K^{cp}\gets k$, $\lambda^*\gets\lambda(k)$
\State \textbf{return} $K^{cp}$, $\lambda^*$
\EndFunction
\end{algorithmic}                       
\end{algorithm}

Since $\theta_1>\lambda(1)=(\frac{n_1}{s+n_1})^2\theta_1$, Algorithm~\ref{alg:1} always converges and returns the unique values of $K^{cp}$ and $\lambda^{\ast}$. The complexity is $\mathcal{O}(I)$, i.e., linear in the number of user groups (not the number of users).

With $K^{cp}$ and $\lambda^*$, the solution of the resource allocation problem can be written as
\begin{equation}
\label{cp_resource_solution}
    s^*_i= \left\{{\begin{array}{cc}
  \sqrt{\frac{\theta_i}{\lambda^*}}-1, & i=1,\dots,K^{cp};\\
  0, & {\rm{otherwise}}.  \\
\end{array}} \right.
\end{equation}
For the ease of discussions, we introduce a new notion of the \emph{effective market}, which denotes all the groups allocated non-zero resource. For the resource allocation subproblem $CP_1$, the threshold $K^{cp}$ describes the size of the effective market. All groups with indices no larger than $K^{cp}$ are \emph{effective group}s, and users in these groups as \emph{effective user}s. An example of the effective market is illustrated in Fig.~\ref{fig:effective_market_concept}.
\begin{figure}[hbt]
\centering
  \includegraphics[scale=0.5]{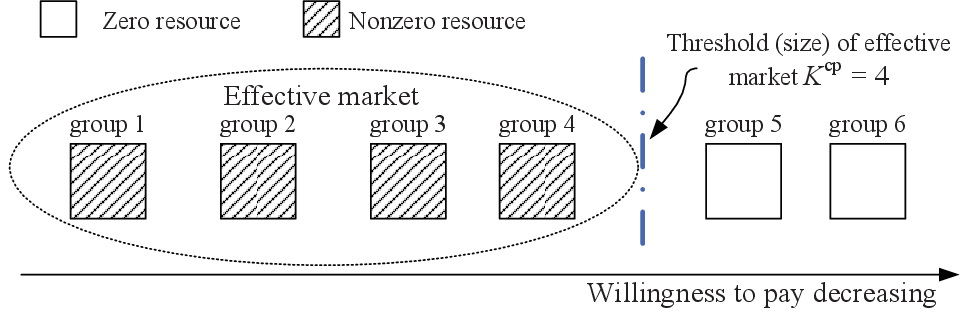}
  \caption{A 6-group example for the effective market:  the willingness to pays decrease from group~1 to group~6. The effective market threshold can be obtained by Algorithm~\ref{alg:1}, and is 4 in this example.}
  \label{fig:effective_market_concept}
\end{figure}

Now let us solve the admission control \rev{subproblem}~$CP_2$. Denote the objective  (\ref{revenue_function}) as $R_{cp}(\boldsymbol{n})$, by (\ref{cp_resource_solution}), then $R_{cp}(\boldsymbol{n})\overset{\Delta}{=}\sum\limits_{i=1}^{K^{cp}}\! n_i\!
\left(\! \sqrt{\frac{\theta_i}{\lambda^*(\boldsymbol{n})}}-1 \!\right)\sqrt{\theta_i\lambda^*\!(\boldsymbol{n})}$. We first relax the integer domain constraint of $n_i$ as $n_i\in [0,N_i]$. Since (\ref{eq:water-filling_cpd}), by taking the derivative of the objective function $R_{cp}(\boldsymbol{n})$ with respect to $n_i$, we have
\begin{align}
\label{eq:cp_n}
\frac{\partial R_{cp}(\boldsymbol{n})}{\partial n_i}= n_i \left(\! \sqrt{\frac{\theta_i}{\lambda^*(\boldsymbol{n})}}-1 \!\right)\frac{\partial \sqrt{\theta_i\lambda^*(\boldsymbol{n})}}{\partial n_i},
\end{align}
Also from (\ref{eq:water-filling_cpd}), we have
$\lambda^*\!\! =\!\! \left(\frac{\sum_{i=i}^{K^{cp}} n_i \sqrt {\theta_i}}{S+\sum_{i=1}^{K^{cp}}n_i}\right)^2$,
thus $\frac{\partial \sqrt{\lambda^*(\boldsymbol{n})}}{\partial n_i}>0$, for $i=1,\dots,K^{cp}$, and $\frac{\partial \sqrt{\lambda^*(\boldsymbol{n})}}{\partial n_i}=0$, for $i=K^{cp}+1,\dots,I$. This means that the objective $R_{cp}(\boldsymbol{n})$
is strictly increasing in $n_i$ for all $i=1,\dots,K^{cp}$, thus it is optimal to admit all users in the effective market. The admission decisions for groups not in the effective market is irrelevant to the optimization, since those users consume zero resource. Therefore, one of the optimal solutions of the
$CP_1$ subproblem is $n^*_i=N_i$ for all $ i\in\mathcal{I}$.
Solving the $CP_1$ and the $CP_2$ subproblems  leads to the optimal solution of  the $CP$ problem:
\begin{theorem}\label{thm:opt}
There exists an optimal solution of th3 $CP$ problem that satisfies the following conditions:
\begin{itemize}
    \item All users are admitted: $n^*_i=N_i$ for all $i\in\mathcal{I}.$
    \item There exist a value $\lambda^\ast$ and a group index threshold $K^{cp} \leq I$, such that only the top $K^{cp}$ groups of users receive positive resource allocations,
$$s^*_i= \left\{{\begin{array}{ll}
  \sqrt{\frac{\theta_i}{\lambda^*}}-1, \;& i=1,\dots,K^{cp};\\
  0, & {\rm{otherwise}}.  \\
\end{array}} \right.$$
with the prices
$$
    p^*_i= \left\{{\begin{array}{ll}
  \sqrt{\theta_i\lambda^*}, \; & i=1,\dots,K^{cp};\\
  \theta_i, & {\rm{otherwise}}.  \\
\end{array}} \right.$$
The values of $\lambda^*$ and $K^{cp}$ can be computed as in Algorithm \ref{alg:1} by setting $n_i=N_i$, for all $i\in\mathcal{I}$.
\end{itemize}
\end{theorem}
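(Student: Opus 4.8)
The plan is to certify the claimed solution by exploiting the two-level decomposition into $CP_1$ and $CP_2$ already set up in the excerpt, and then to translate the optimal resource allocation back into prices through the substitution (\ref{transform_con}). First I would argue that this transformation is without loss of optimality: capping $p_i$ at $\theta_i$ never discards revenue, because any $p_i \ge \theta_i$ drives the induced demand (\ref{eq:user's_demand}) to zero, so the reparametrized objective $\sum_{i} n_i \theta_i s_i/(s_i+1)$ has exactly the same optimal value as (\ref{eq:cp_obj}). This makes $CP$ separable into an inner resource-allocation problem $CP_1$ over $\boldsymbol{s}$ for each fixed admission vector $\boldsymbol{n}$, and an outer integer problem $CP_2$ over $\boldsymbol{n}$.

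For the inner problem I would verify that each term $\theta_i s_i/(s_i+1)$ is strictly concave in $s_i$ (its second derivative is $-2\theta_i/(s_i+1)^3 < 0$), so $CP_1$ is a convex program with a single linear coupling constraint; hence the KKT conditions are necessary and sufficient and yield the water-filling form $s_i^*(\lambda) = (\sqrt{\theta_i/\lambda}-1)^+$ of (\ref{eq:cd_resource}). Because the objective is strictly increasing in each $s_i$, the resource constraint (\ref{resource_c}) binds, giving (\ref{eq:cd_water_filling}). The ordering $\theta_1 > \dots > \theta_I$ then forces the set of groups receiving positive resource to be a prefix $\{1,\dots,K^{cp}\}$, characterized by the threshold condition (\ref{eq:cd_threshold}); I would confirm that Algorithm~\ref{alg:1} locates the unique pair $(\lambda^*, K^{cp})$ by monotonically shrinking the candidate prefix, and that it terminates because the $k=1$ candidate always satisfies $\theta_1 > \lambda(1) = (n_1/(S+n_1))^2\theta_1$.

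The outer problem $CP_2$ is where the real work lies. Substituting $\boldsymbol{s}^*(\boldsymbol{n})$ back produces $R_{cp}(\boldsymbol{n}) = \sum_{i=1}^{K^{cp}} n_i(\theta_i - \sqrt{\theta_i\lambda^*(\boldsymbol{n})})$, and the goal is to show this is increasing in each $n_i$ for $i \le K^{cp}$, so that $n_i^* = N_i$ is optimal. The subtlety, and what I expect to be the main obstacle, is that $\lambda^*$ itself depends on $\boldsymbol{n}$ through (\ref{eq:water-filling_cpd}), so a naive termwise derivative is wrong; the dependence must be handled as in (\ref{eq:cp_n}), effectively an envelope-type argument in which the binding constraint (\ref{eq:cd_water_filling}) ties the implicit variation of $\lambda^*$ to the explicit $n_i$ terms, leaving a residual sign one must show is positive (using $\partial\sqrt{\lambda^*}/\partial n_i > 0$ for effective groups and $=0$ otherwise). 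Once continuous monotonicity over the relaxed box $n_i \in [0,N_i]$ is established, integrality is restored by noting that pushing an integer $n_i$ up to $N_i$ strictly raises the relaxed objective, so the integer optimum is attained at $n_i^* = N_i$; for non-effective groups $n_i$ merely multiplies zero resource and is immaterial, hence setting $n_i^* = N_i$ there as well is harmless.

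Finally I would assemble the pieces: with $\boldsymbol{n}^* = \boldsymbol{N}$ fixed, Algorithm~\ref{alg:1} returns $\lambda^*$ and $K^{cp}$, equation (\ref{cp_resource_solution}) gives the stated $s_i^*$, and the price formula follows by plugging $s_i^*$ into (\ref{transform_con}): for an effective group $p_i^* = \theta_i/(s_i^*+1) = \theta_i/\sqrt{\theta_i/\lambda^*} = \sqrt{\theta_i\lambda^*}$, while for a non-effective group $s_i^* = 0$ gives $p_i^* = \theta_i$, matching the theorem exactly.
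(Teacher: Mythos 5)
Your proposal is correct and follows essentially the same route as the paper: the same removal of the $(\cdot)^+$ by capping $p_i$ at $\theta_i$, the same $CP_1$/$CP_2$ decomposition with a convex water-filling inner problem solved by Algorithm~\ref{alg:1}, the same continuous relaxation of $\boldsymbol{n}$ with the envelope-style monotonicity argument of (\ref{eq:cp_n}) to conclude $n_i^*=N_i$, and the same back-substitution into (\ref{transform_con}) for the prices. The one step you flag as ``the real work'' is exactly the paper's computation; it can in fact be closed in one line, since with the binding constraint one gets $R_{cp}(\boldsymbol{n})=\sum_i n_i\theta_i-\bigl(\sum_i n_i\sqrt{\theta_i}\bigr)^2/\bigl(S+\sum_i n_i\bigr)$ and hence $\partial R_{cp}/\partial n_i=(\sqrt{\theta_i}-\sqrt{\lambda^*})^2>0$ for every effective group.
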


Theorem \ref{thm:opt} provides the right economic intuition: the service provider maximizes its revenue by charging a higher price to users with a higher willingness to pay. It is easy to check that $p_i>p_j$ for any $i<j$. The low willingness to pay users are excluded from the markets.

\subsection{Properties}
\label{sub_sec:cpd}
Here we summarize some interesting properties of the optimal complete price differentiation scheme:
\subsubsection{Threshold structure} The threshold based resource allocation means that higher willingness to pay groups have higher priories of obtaining the resource at the optimal solution.

To see this clearly, assume the effective market size is $K^{(1)}$ under parameters $\{\theta_i,N_i^{(1)}\}_{i\in\mathcal{I}}$ and $S$.
Here the superscript (1) denotes the first parameter set.  Now consider another set of parameters $\{\theta_i,N_i^{(2)}\}_{i\in\mathcal{I}}$ and $S$, where $N_i^{(2)}\ge N_i^{(1)}$ for each group $i$ and the new  The effective market size is $K^{(2)}$. By (\ref{eq:water-filling_cpd}), we can see that $K^{(2)}\le K^{(1)}$. Furthermore, we can show that if some high willingness to pay group has many more users under the latter system parameters, i.e., $N_i^{(2)}$ is much larger than $N_i^{(1)}$ for some $i<K^{(1)}$, then the effective size will be strictly decreased, i.e., $K^{(2)}< K^{(1)}$. In other words, the increase of high willingness to pay users will drive the low willingness to pay users out of the effective market.

\subsubsection{Admission control with pricing} Theorem~\ref{thm:opt} shows the explicit admission control is not necessary  at the optimal solution. Also from Theorem~\ref{thm:opt}, we can see that when the number of users in any effective group increases, the price $p^*_i$, for all $i\in\mathcal{I}$ increases and resource $s^*_i$, for all $\forall\;i\le K^{cp}$ decreases. The prices serve as the indications of the scarcity of the resources and will automatically prevent the low willingness to pay users to consume the network resource.

\section{Single Pricing Scheme}
\label{sec:single_price}
In last section, we \rev{showed} that the $CP$ scheme is the optimal pricing scheme to maximize the revenue under complete
information. However, such a complicated pricing scheme is of high implementational complexity. In this section we study the single pricing scheme. It is clear that the scheme will in general suffer a revenue loss comparing with the $CP$ scheme. We will try to characterize the impact of various system parameters on such revenue loss.

\subsection{Problem Formulation and Solution}
Let us first formulate the Single Pricing ($SP$) problem.
\begin{eqnarray*}
    SP:& \underset{p\ge 0,\;\; \boldsymbol{n}}{\text{maximize}} & p\sum\limits_{i\in\mathcal{I}} n_is_{i} \nonumber\\
    &\text{subject to}& s_i =\left(\frac{\theta_i }{p} - 1\right)^+,\;\; i\in\mathcal{I} \label{r_sp}\\
    &&n_i \in\{0,\ldots,N_{i}\} \;\;,\;\;  i\in\mathcal{I}\nonumber\\
    &&\sum\limits_{i\in\mathcal{I}} n_i s_{i} \le S. \label{r_c}
\end{eqnarray*}
Comparing with the $CP$ problem in Section~\ref{sec:cpd}, here the service provider charges a single price $p$ to all groups of users. After a similar transformation as in Section~\ref{sec:cpd}, we can show that the optimal single price satisfies the following the weighted water-filling condition
\begin{equation*}
    \sum\limits_{i\in\mathcal{I}} N_i \left(\frac{\theta_i}{p}-1\right)^+ = S.
\end{equation*}
Thus we can obtain the following solution that shares a similar structure as complete price differentiation.
\begin{theorem}\label{thm:sp}
There exists an optimal solution of the $SP$ problem that satisfies the following conditions:
\begin{itemize}
    \item All users are admitted: $n^*_i=N_i,$ for all $i\in\mathcal{I}.$
    \item There exist a price $p^*$ and a group index threshold $K^{sp} \leq I$, such that only the top $K^{sp}$ groups of users receive positive resource allocations,
$$s^*_i= \left\{{\begin{array}{ll}
  \frac{\theta_i}{p^*}-1, & i=1,2,\dots,K^{sp},\\
  0, & {\rm{otherwise}} , \\
\end{array}} \right.$$
with the price
$$p^*=p(K^{sp})=\frac{\sum_{i=1}^{K^{sp}} N_i \theta_i}{S+\sum_{i=1}^{K^{sp}}N_i}.$$  The value of $K^{sp}$ and $p^*$ can be computed as in Algorithm \ref{alg:2}.
\end{itemize}
\end{theorem}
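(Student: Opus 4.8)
The plan is to mirror the backward-induction analysis of the $CP$~Problem, reusing the Stage~2 demand $s_i=(\theta_i/p-1)^+$ from (\ref{eq:user's_demand}). The essential difference is that the SP now controls a single scalar price $p$ rather than a price vector, so the resource allocation is no longer a free variable: once $p$ is fixed, every admitted user's demand is pinned down. My strategy has three parts. First, I would show that at optimality the resource constraint binds and the revenue equals $p\,S$. Second, I would derive the threshold structure and the closed-form price from the resulting weighted water-filling equation. Third, I would dispatch the admission control by proving that admitting everyone is optimal.

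For the first part, I fix the admission vector $\boldsymbol{n}$ and write the revenue as $R(p)=p\sum_{i\in\mathcal{I}}n_i(\theta_i/p-1)^+=\sum_{i:\,\theta_i>p}n_i(\theta_i-p)$. On any interval of $p$ over which the served set $\{i:\theta_i>p\}$ is constant, $R(p)$ is strictly decreasing with slope $-\sum_{i:\theta_i>p}n_i$, and $R$ is continuous across the breakpoints $p=\theta_i$ because the entering or leaving group contributes $\theta_i-p=0$ there. Hence $R(p)$ is strictly decreasing in $p$ wherever at least one group is served, so the SP wants $p$ as small as the resource permits. Since the total demand $\sum_{i}n_i(\theta_i/p-1)^+\to\infty$ as $p\to0^+$, the smallest feasible price is exactly the one that makes the resource constraint $\sum_{i\in\mathcal{I}}n_i s_i\le S$ bind; this yields the weighted water-filling condition stated just before the theorem and gives optimal revenue $p\,S$.

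For the second part, let $D(p)\triangleq\sum_{i\in\mathcal{I}}N_i(\theta_i/p-1)^+$ denote the total demand with all users admitted. $D$ is continuous and strictly decreasing on $(0,\theta_1]$ with $D(p)\to\infty$ as $p\to0^+$ and $D(\theta_1)=0$, so there is a unique $p^*$ solving $D(p^*)=S$. Because $\theta_1>\dots>\theta_I$, the served groups are exactly the top ones: there is a threshold $K^{sp}$ with $\theta_{K^{sp}}>p^*\ge\theta_{K^{sp}+1}$, and only these groups receive $s_i^*=\theta_i/p^*-1>0$. Substituting the effective set into $D(p^*)=S$ and solving the resulting linear equation in $1/p^*$ produces the claimed $p^*=\big(\sum_{i=1}^{K^{sp}}N_i\theta_i\big)/\big(S+\sum_{i=1}^{K^{sp}}N_i\big)$; an analogue of Algorithm~\ref{alg:1} locates $K^{sp}$ by decreasing the trial threshold until $\theta_{K^{sp}}>p^*\ge\theta_{K^{sp}+1}$ holds.

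The main obstacle, and the step that differs most from the $CP$ case, is the admission control: because a single price is common to all groups, one might hope to gain by \emph{excluding} low-value users and lowering the price. I would rule this out with a clean global bound. For any feasible $(p,\boldsymbol{n})$ the resource constraint gives $R=p\sum_{i}n_i s_i\le pS$. If $p\ge p^*$, then $R\le\sum_{i:\theta_i>p}N_i(\theta_i-p)=R_{\text{all}}(p)\le R_{\text{all}}(p^*)=p^*S$ by the monotonicity established above; if $p<p^*$, then $R\le pS<p^*S$. Either way $R\le p^*S$, and this value is attained by admitting all users ($n_i^*=N_i$) and charging $p^*$, which settles both the admission claim and global optimality. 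Equivalently, following the paper's decomposition one fixes $\boldsymbol{n}$, obtains $p(\boldsymbol{n})=\big(\sum_{j=1}^{K}n_j\theta_j\big)/\big(S+\sum_{j=1}^{K}n_j\big)$, and checks that $\partial p/\partial n_i=(\theta_i-p)/\big(S+\sum_{j=1}^{K}n_j\big)>0$ for every effective group; thus the revenue $p(\boldsymbol{n})\,S$ increases in each effective $n_i$ and is maximized at $n_i^*=N_i$.
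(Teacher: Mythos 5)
Your proof is correct, and it reaches the same structural conclusions as the paper (binding resource constraint, water-filling equation, threshold $K^{sp}$ with $\theta_{K^{sp}}>p^*\ge\theta_{K^{sp}+1}$, closed-form $p^*$, admit everyone), but the route is more elementary and more self-contained than the paper's. The paper proves Theorem~\ref{thm:sp} essentially by analogy with the $CP$ analysis: it invokes the same transformation that removes the $(\cdot)^+$, asserts the weighted water-filling condition, and (implicitly) reuses the $CP_2$-style argument that the relaxed revenue is increasing in each effective $n_i$ to justify full admission. You instead exploit the fact that a single price pins down all demands, so no Lagrangian or convexity machinery is needed: the revenue $\sum_{i:\theta_i>p}n_i(\theta_i-p)$ is piecewise linear and strictly decreasing in $p$ wherever someone is served, so the optimum sits at the smallest feasible price, where the constraint binds and $R=pS$. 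Your treatment of admission control is the genuinely different piece: the global bound $R\le pS\le p^*S$ (splitting on $p\ge p^*$ versus $p<p^*$) disposes of the "exclude low-value users to lower the price" temptation in one stroke, without relaxing the integer constraint or worrying about the effective set changing as $n_i$ varies --- a concern the paper's derivative argument glosses over. Your closing remark that $\partial p/\partial n_i=(\theta_i-p)/(S+\sum_j n_j)>0$ for effective groups recovers the paper's derivative-style argument as a special case, so the two approaches are reconciled. The only cosmetic caveat is the degenerate case $\boldsymbol{n}=\boldsymbol{0}$ (zero revenue, trivially dominated), which your monotonicity claim tacitly excludes.
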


\begin{algorithm}[htb]                     
\caption{Search the threshold of the $SP$ problem}     
\label{alg:2}
\begin{algorithmic}[1]
 \Function {$SP$} {$\{N_i,\theta_i\}_{i\in{\mathcal{I}}}$, $S$}
    \State $k\gets I$, $p(k)\gets\frac{\sum_{i=1}^{k} N_i {\theta_i}}{S+\sum_{i=1}^{k}N_i}$
    \While {$\theta_k\le p(k)$}
        \State $k \gets k-1$, $p(k)\gets\frac{\sum_{i=1}^{k} N_i {\theta_i}}{S+\sum_{i=1}^{k}N_i}$
    \EndWhile
    \State $K^{sp}\gets k$, $p^*\gets p(k)$
    \State \textbf{return} $K^{sp}$, $p^*$
\EndFunction
\end{algorithmic}                       
\end{algorithm}

\subsection{Properties}
\label{sub_sp}
The $SP$ scheme shares several similar properties as
the $CP$ scheme (Sec.~\ref{sub_sec:cpd}), including the threshold structure and admission control with pricing. Similarly, we can define the effective market for the $SP$ scheme.

It is more interesting to notice the differences between these two schemes.  To distinguish solutions, we use the superscript ``cp'' for the $CP$ scheme, and ``sp'' for the $SP$ scheme.
\begin{proposition}
\label{pro:comparison_sp_cd}
Under same parameters $\{N_i,\theta_i\}_{i\in \mathcal{I}}$ and $S$:
\begin{enumerate}
    \item The effective market of the $SP$ scheme is no larger than the one of the $CP$ scheme, i.e.,
    $K^{sp}\le K^{cp}$.
     \item
There exists a threshold $\bar{k}\in\{1,2\dots,K^{sp}\}$, such that 
\begin{itemize}
    \item  Groups with indices less than $\bar{k}$ ({high} willingness to pay users) are charged with higher prices and allocated less resources in {the $CP$ scheme}, \ie
$p^{cp}_i\ge p^*$ and $s^{cp}_i\le s^{sp}_i$, $\forall\,i\le \bar{k}$,  where the equality holds if only if $i=\bar{k}$ and $\theta_{\bar{k}}=\frac{{p^*}^2}{\lambda^*}$.
     \item Groups with indices greater than $\bar{k}$ ({low} willingness to pay users) are charged with lower prices and allocated more resources {in the $CP$ scheme}, \ie
$p^{cp}_i< p^*$ and $s^{cp}_i> s^{sp}_i$, $\forall\,i\ge \bar{k}$.
\end{itemize}
Here $p^*$ is the optimal single price.
\end{enumerate}
\end{proposition}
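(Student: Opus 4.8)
The plan is to reduce the whole proposition to a single scalar comparison and then exploit the monotone (unimodal) structure of the two water-filling problems. The key preliminary observation is that the resource and price comparisons are \emph{equivalent}: for any group $i$ lying in both effective markets, substituting $s^{cp}_i=\sqrt{\theta_i/\lambda^*}-1$, $s^{sp}_i=\theta_i/p^*-1$ and $p^{cp}_i=\sqrt{\theta_i\lambda^*}$ gives the one-line chain
$$s^{cp}_i\le s^{sp}_i \iff \sqrt{\theta_i/\lambda^*}\le \theta_i/p^* \iff \theta_i\ge (p^*)^2/\lambda^* \iff p^{cp}_i\ge p^*.$$
Hence it suffices to control the single inequality $p^{cp}_i\ge p^*$. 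Writing $\theta^*:=(p^*)^2/\lambda^*$, this is just $\theta_i\ge\theta^*$; since $\theta_1>\cdots>\theta_I$ and $p^{cp}_i=\sqrt{\theta_i\lambda^*}$ is strictly decreasing in $i$ (also across the boundary, as $\theta_{K^{cp}+1}\le\lambda^*<\sqrt{\theta_{K^{cp}}\lambda^*}$), a unique crossover index exists, and I define $\bar k:=\max\{i:\theta_i\ge\theta^*\}$. This directly yields the stated threshold structure together with the equality case ($s^{cp}_{\bar k}=s^{sp}_{\bar k}$, $p^{cp}_{\bar k}=p^*$ iff $\theta_{\bar k}=\theta^*$). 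What remains is (i) part~1 and (ii) the location $1\le\bar k\le K^{sp}$.

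For part~1 I would work with the two ``water-level'' functions $p(k)=\frac{\sum_{i=1}^k N_i\theta_i}{S+\sum_{i=1}^k N_i}$ and $q(k)=\frac{\sum_{i=1}^k N_i\sqrt{\theta_i}}{S+\sum_{i=1}^k N_i}$, with $\lambda(k)=q(k)^2$ as in Algorithm~\ref{alg:1} (setting $n_i=N_i$), so that $p^*=p(K^{sp})$ and $\sqrt{\lambda^*}=q(K^{cp})$. First I would show each sequence is unimodal with its peak exactly at its own threshold, hence $p^*=\max_k p(k)$ and $\lambda^*=\max_k q(k)^2$: the increment $p(k)-p(k-1)$ has the sign of $\theta_k-p(k-1)$, once a step is non-positive all later steps stay negative (because $\theta_k$ decreases and $p(k)$ is a weighted average of $p(k-1)$ and $\theta_k$), and the threshold conditions $\theta_{K^{sp}}>p^*\ge\theta_{K^{sp}+1}$ pin the peak at $K^{sp}$. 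Second, a Cauchy--Schwarz step gives for every $k$ that $q(k)^2\le \frac{\sum_{i=1}^k N_i}{S+\sum_{i=1}^k N_i}\,p(k)<p(k)$, i.e.\ $\lambda(k)<p(k)$. Combining, $\lambda^*=q(K^{cp})^2<p(K^{cp})\le\max_k p(k)=p^*$, so $\lambda^*<p^*$. Since group $K^{sp}$ is effective in $SP$ we have $\theta_{K^{sp}}>p^*>\lambda^*$, whereas the $CP$ threshold gives $\theta_{K^{cp}+1}\le\lambda^*$; monotonicity of $\theta_i$ then forces $\theta_{K^{sp}}>\theta_{K^{cp}+1}$, i.e.\ $K^{sp}\le K^{cp}$.

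For the bounds on $\bar k$, the upper bound follows from $\theta^*>p^*$ (a consequence of $\lambda^*<p^*$) together with $\theta_{K^{sp}+1}\le p^*<\theta^*$, so no index beyond $K^{sp}$ satisfies $\theta_i\ge\theta^*$, giving $\bar k\le K^{sp}$. The lower bound $\bar k\ge1$ is equivalent to $p^{cp}_1=\sqrt{\theta_1\lambda^*}\ge p^*$, which I would establish by the chain $\sqrt{\theta_1}\,\sqrt{\lambda^*}=\sqrt{\theta_1}\,q(K^{cp})\ge \sqrt{\theta_1}\,q(K^{sp})\ge \frac{\sum_{i=1}^{K^{sp}}N_i\theta_i}{S+\sum_{i=1}^{K^{sp}}N_i}=p^*$, where the first inequality uses $K^{sp}\le K^{cp}$ and the monotonicity of $q$ up to its peak, and the second uses $\sqrt{\theta_1}\sqrt{\theta_i}\ge\theta_i$. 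With $1\le\bar k\le K^{sp}\le K^{cp}$ established, both bullets follow from the equivalence above on the effective range; moreover for $K^{sp}<i\le K^{cp}$ one has $s^{sp}_i=0<s^{cp}_i$ and $p^{cp}_i=\sqrt{\theta_i\lambda^*}<\sqrt{\theta^*\lambda^*}=p^*$, consistent with the $i>\bar k$ case.

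The main obstacle is the comparison $\lambda^*<p^*$, precisely because $\lambda^*$ and $p^*$ are evaluated at \emph{different} thresholds $K^{cp}\ne K^{sp}$; the pointwise bound $\lambda(k)<p(k)$ is not enough by itself, and the argument only closes once the peak characterizations $p^*=\max_k p(k)$ and $\lambda^*=\max_k q(k)^2$ are in hand. The remaining issues are pure bookkeeping: strictness in the two bullets holds except at $i=\bar k$ in the degenerate case $\theta_{\bar k}=\theta^*$, and for $i>K^{cp}$ both allocations vanish so the resource comparison degenerates to equality; I would therefore state the comparison on the effective ranges to sidestep this edge case.
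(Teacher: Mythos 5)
Your proof is correct, and the heart of it is the same as the paper's: both arguments hinge on comparing each $\theta_i$ against the single crossover value $\theta^*=(p^*)^2/\lambda^*$, which simultaneously controls the price and resource comparisons via the chain $s^{cp}_i\le s^{sp}_i \iff \theta_i\ge (p^*)^2/\lambda^* \iff p^{cp}_i\ge p^*$. Where you genuinely diverge is in how the ``global'' facts are obtained --- part~1 ($K^{sp}\le K^{cp}$) and the location $1\le\bar k\le K^{sp}$. The paper derives everything from the single identity that both water-filling sums equal $S$: if $\theta^*$ lay above $\theta_1$ or at/below $\theta_I$, every term of one sum would dominate its counterpart and the two sums could not both equal $S$; the surviving case directly yields the crossover index $\bar k$, and $K^{sp}\le K^{cp}$ follows from the sign pattern of the termwise differences on either side of $\bar k$. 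You instead prove the sharper scalar fact $\lambda^*<p^*$ by characterizing $p^*$ and $\lambda^*$ as the maxima of the unimodal water-level sequences $p(k)$ and $q(k)^2$ and applying Cauchy--Schwarz pointwise in $k$; the bounds $K^{sp}\le K^{cp}$ and $1\le\bar k\le K^{sp}$ then drop out of $\lambda^*<p^*$ together with the two threshold conditions. Your route is longer but each step is self-contained, it makes explicit two facts the paper leaves implicit (the peak characterizations of $p^*$ and $\lambda^*$, and the clean inequality $\lambda^*<p^*$), and it correctly flags the edge cases (equality only at $i=\bar k$ when $\theta_{\bar k}=\theta^*$, and the degenerate range $i>K^{cp}$ where both allocations vanish). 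The paper's route is more economical but terser --- its claims that ``$K^{cp}\ge k$ and $K^{sp}\ge k$, otherwise the equality cannot hold'' require some unpacking of the positive-part bookkeeping that you handle explicitly.
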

\rev{The proof is given in Appendix~\ref{appendix_cp_sp}}. An illustrative example is shown in Fig.~\ref{fig:price} and Fig.~\ref{fig:resource}.
\begin{figure}[ht]
\centering
\includegraphics[scale=0.65]{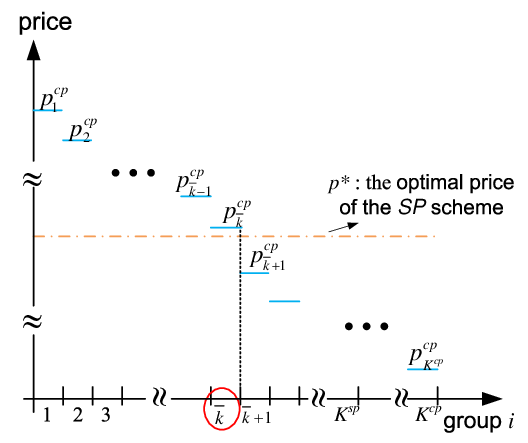}
\caption{The comparison of prices between the $CP$ scheme and the $SP$ scheme} \label{fig:price}
\includegraphics[scale=0.6]{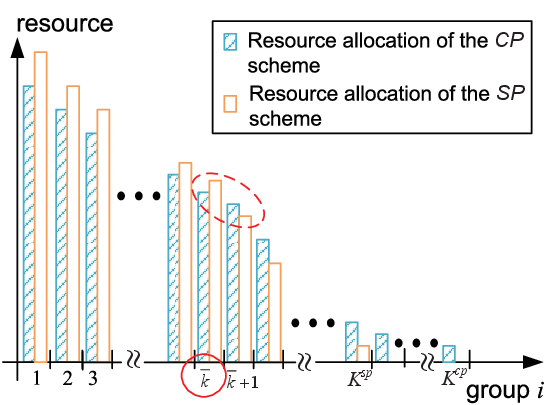}
\caption{The comparison of resource allocation between the $CP$ scheme and the $SP$ scheme}
\label{fig:resource}
\end{figure}

It is easy to understand that the $SP$ scheme makes less revenue, since it is a feasible solution to the $CP$ problem. A little bit more computation sheds more light on this comparison.
We introduce the following notations to streamline the comparison:
\begin{itemize}
    \item $N_{eff}(k) \triangleq \sum\limits_{i=1}^k N_i$: the number of effective users, where $k$ is the size of the effective market.
     \item $\gamma_i(k)\triangleq \frac{N_i}{N_{eff}(k)}$, $i=1,2,\dots,k$: the fraction of group $i$'s users in the  effective market.
     \item $\bar{s}(k)\triangleq \frac{S}{N_{eff}(k)}$: the average resource per an effective user.
     \item $\bar{\theta}(k) \triangleq \sum\limits_{i=1}^k \gamma_i\theta_i$: the average willingness to pay per an effective user.
\end{itemize}
Based on Theorem~\ref{thm:opt},  the revenue of the $CP$ scheme is
\begin{equation}
\label{eq:cp_r_simplified}
R^{cp}(K^{cp})= N_{eff}(K^{cp})\! \left(\frac{\bar{s}(K^{cp})\bar{\theta}(K^{cp})+g(K^{cp})}{\bar{s}(K^{cp})+1}\right)\!,
\end{equation}
where
\begin{equation}
\label{eq:differentiation_gain}
    g(K^{cp})=\frac{1}{\lambda^*}\sum\limits_{i=1}^{K^{cp}}\sum\limits_{j>i}^{K^{cp}} \gamma_i \gamma_j(p^{cp}_i-p^{cp}_j)^2.
\end{equation}
Based on Theorem~\ref{thm:sp}, the revenue of the $SP$ scheme is
\begin{equation}
\label{eq:sp_r_simplified}
R^{sp}(K^{sp})= N_{eff}(K^{sp}) \left(\frac{\bar{s}(K^{sp})\bar{\theta}(K^{sp})}{\bar{s}(K^{sp})+1}\right).
\end{equation}
From (\ref{eq:cp_r_simplified}) and (\ref{eq:sp_r_simplified}), it is clear to see that $R^{cp}\ge R^{sp}$ due to two factors: one is the non-negative term in (\ref{eq:differentiation_gain}), the other is $K^{cp}\ge K^{sp}$: a higher level of differentiation implies a no smaller effective market. Let us further discuss them in the following two cases:
\begin{itemize}
  \item If $K^{cp}=K^{sp}$, then the additional term of (\ref{eq:differentiation_gain}) in (\ref{eq:cp_r_simplified}) means that $R^{cp}\ge R^{sp}$. The equality holds if and only if $K^{cp}=1$, in which case $g(K^{cp})=0$. Note that in this case, the $CP$ scheme degenerates to the $SP$ scheme. We name the nonnegative term $g(K^{cp})$ in
(\ref{eq:differentiation_gain})  as \emph{price differentiation gain}, as it measures the average price difference between any effective users in the $CP$ scheme. The larger the price difference, the larger the gain. When there is no
differentiation in the degenerating case ($K^{cp}=1$), the gain is
zero.
  \item If $K^{cp}> K^{sp}$, since the common part of two revenue $N_{eff}(K) \left(\frac{\bar{s}(K)\bar{\theta}(K)}{\bar{s}(K)+1}\right)= \frac{S\bar{\theta}N_{eff}(K)}{S+N_{eff}(K)}$ is a strictly increasing function of $N_{eff}(K)$, price differentiation makes more revenue even if the positive differentiation gain $g(K^{cp})$ is not taken into consideration. This result is intuitive: more consumers with purchasing power  always mean more revenue in the service provider's pocket.
\end{itemize}

Finally, we note that the $CP$ scheme in Section~\ref{sec:cpd} requires the complete network information. The $SP$ scheme here, on the other hand, works in the incomplete information case as well. This distinction becomes important in Section~\ref{sec:incomplete_information}.

\section{Partial Price Differentiation under Complete Information}
\label{sec:PPD}
For a service provider facing thousands of user types, it is often impractical to design a price choice for each user type. \rev{The reasons behind this, as discussed in \cite{valancius2011many}, are mainly high system overheads and customers' aversion.}
However, as we have shown in Sec.~\ref{sec:single_price}, the single pricing scheme may suffer a considerable revenue loss.  How to achieve a good  tradeoff  between the implementational complexity and the total revenue?
In reality, we usually see that the service provider \rev{offers} only a few pricing plans for the entire users population; we term it as the \emph{partial price differentiation} scheme. In this section, we will answer the following question: if the service provider is constrained to maintain a limited number of prices, $p^1,\dots,p^J$, $J\le I$, then what is the optimal pricing strategy and the maximum revenue?
Concretely, the Partial Price differentiation ($PP$) problem is formulated as follows.
\begin{eqnarray}
   PP: &\underset{n_i,p_i,s_i,p^j,a_i^{j}}{ {\text{maximize}}} & \sum\limits_{i\in\mathcal{I}}  n_i p_i s_i \nonumber\\
    &\text{subject to} & s_i =\left(\frac{\theta_i }{p_i } - 1\right)^+,\, \forall\, i\in \mathcal{I}, \label{eq:ppd_ic}\\
    && n_i \in\{0,\ldots,N_{i}\}, \; \forall\, i\in \mathcal{I}, \label{eq:ad}\\
    && \sum\limits_{i\in\mathcal{I}} n_i s_i \le S, \label{eq:resource}\\
    && p_i=\sum\limits_{j\in\mathcal{J}}a_{i}^{j}p^j, \label{eq:ppd_price}\\
    && \sum_{j\in\mathcal{J}}\!\!a_{i}^{j}=1,\, a_i^j\in\{0,1\}, \forall\, i\in \mathcal{I}. \label{eq:ppd_price_2}
\end{eqnarray}
Here $\mathcal{J}$ denotes the set $\{1,2,\dots,J\}$. \rev{Since we consider the complete information scenario in this section, the service provider can choose the price charged to each group, thus constraints (\ref{eq:ppd_ic}) -- (\ref{eq:resource}) are the same as in the $CP$ problem.
Constraints (\ref{eq:ppd_price}) and (\ref{eq:ppd_price_2}) mean that $p_i$ charged to each group~$i$ is one of $J$ choices from the set $\{p^j, j\in\mathcal{J}\}$.}  For convenience, we define \emph{cluster}  $\mathcal{C}^j\overset{\Delta}{=}\{i\,|\,a_i^j=1\},\; j\in \mathcal{J}$,  which is a set of groups charged with the same price $p^j$.  We use superscript $j$ to denote clusters, and subscript $i$ to denote groups through this section. We term the binary variables $\boldsymbol{a}\overset{\Delta}{=}\{a_i^j,\;j\in\mathcal{J},\; i\in\mathcal{I}\}$ as the \emph{partition}, which determines which cluster each group belongs to.

The $PP$ problem is a combinatorial optimization problem, and is more difficult than  the previous $CP$ and $SP$ problems.
On the other hand, we notice that the $PP$ problem formulation includes the $CP$ scheme ($J=I$) and the $SP$ scheme scenario ($J=1$) as special cases. The insights we obtained from solving these two special cases in Sections~\ref{sec:cpd} and \ref{sec:single_price} will be helpful to solve the general $PP$ problem.

\subsection{Three-level Decomposition}
To solve the $PP$ problem, we decompose and tackle it in three levels. In the lowest level-3, we determine the pricing and resource allocation for each cluster, given a fixed partition and fixed resource allocation among clusters. In level-2, we compute the optimal resource allocation among clusters, given a fixed partition. In level-1, we optimize the partition among groups.
\subsubsection{Level-3: Pricing and resource allocation in each cluster}
 For a fix partition $\boldsymbol{a}$ and a cluster resource allocation $\boldsymbol{s}\overset{\Delta}{=}\{s^j\}_{j\in\mathcal{J}}$, we focus the pricing and resource allocation
problems within each cluster $\mathcal{C}^j$, $j\in \mathcal{J}$:
\begin{eqnarray*}
   \text{Level-3:}
 & \underset{n_i,s_i,p^j}{{\text{maximize}}}\;\; &\sum_{i\in C^j} n_ip^js_i  \nonumber\\
    &\text{subject to}\;\; & s_i =\left(\frac{\theta_i }{p^j } - 1\right)^+,\quad \forall\, i\in \mathcal{C}^j, \\
    &&  n_i \le N_i, \quad \forall\, i\in \mathcal{C}^j,\\
    && \sum\limits_{i\in \mathcal{C}^j} n_i s_i \le s^j. 
\end{eqnarray*}
The level-3 subproblem coincides with the $SP$ scheme  discussed in Section~\ref{sec:single_price}, since all groups within the same cluster $\mathcal{C}^j$ are charged with a single price $p^j$. We can then directly apply the results in Theorem~\ref{alg:2} to solve the Level-3 problem. We denote the effective market threshold\footnote{Note that we do not assume that the effective market threshold equals to the number of effective groups, e.g., there are 2 effective groups in Fig.~5, but threshold $K^j=5$. Later we will prove that there is unified threshold for the $PP$ problem. Then by this result, the group index threshold actually coincides with the number of effective groups.} for cluster $\mathcal{C}^j$  as $K^j$, which can be computed in Algorithm~\ref{alg:2}. An illustrative  example is shown in Fig.~\ref{fig:cluster}, where the cluster contains four groups (group~4, 5, 6 and 7), and the effective market contains groups~4 and 5, thus $K^j=5$. The service provider obtains the following maximum revenue obtained from cluster $\mathcal{C}^j$:
\begin{equation}
\label{eq:ppd_level_3}
R^j(s^j,\boldsymbol{a})=\frac{s^j \sum_{i\in C^j, \,i\le K^j}N_i\theta_i}{s^j +\sum_{i\in C^j, \,i\le K^j}N_i}.
\end{equation}
\begin{figure}[ht]
\centering
\includegraphics[scale=0.48]{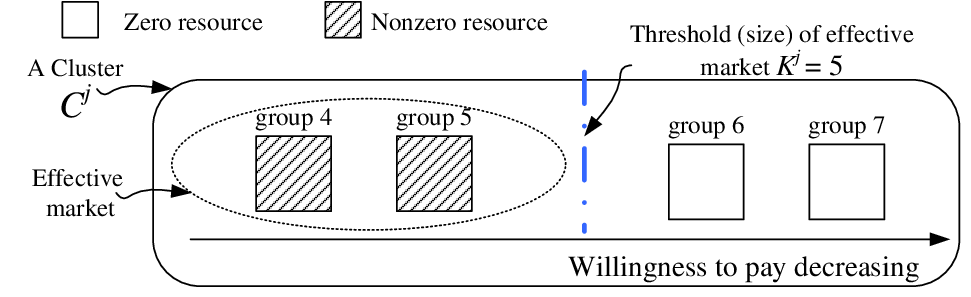}
\caption{An illustrative  example of Level-3: the cluster contains four groups, group~4, 5, 6 and 7; and the effective market contains group~4 and 5, thus $K^j=5$}
\label{fig:cluster}
\end{figure}

\subsubsection{Level-2: Resource allocation among clusters} For a fix partition $\boldsymbol{a}$, we then consider the resource allocation among clusters.
\begin{eqnarray*}
  \text{Level-2:} & \underset{ s^j\ge 0}{\text{maximize}}\;\; &\sum_{j\in \mathcal{J}} R^j(s^j,\boldsymbol{a})  \nonumber\\
    &\text{subject to}\;\; & \sum\limits_{j\in \mathcal{J}} s^j \le S \label{con:resource_2}
\end{eqnarray*}
We will show in Section~\ref{sub_L2L3} that subproblems in Level-2 and Level-3 can be transformed into a complete price differentiation problem under proper technique conditions. Let us denote the its optimal value as $R_{pp}(\boldsymbol{a})$.

\subsubsection{Level-1: cluster partition} Finally, we solve the cluster partition problem.
\begin{eqnarray*}
   \text{Level-1:} & \underset{a_i^j\in\{0,1\}} {\text{maximize}}& R_{pp}(\boldsymbol{a})  \nonumber\\
    &\text{subject to} &\sum_{j\in\mathcal{J}}a_{i}^{j}=1,\; i\in \mathcal{I}.
\end{eqnarray*}
This partition problem is a combinatorial optimization problem.
The size of its feasible set is $S(I,J)=\frac{1}{J!}\sum\limits_{t=1}^{J}(-1)^{J+t}C(J,t)t^I$, \emph{Stirling number of the second kind}
 \cite[Chap.13]{van2001course}, where $C(J,t)$ is the binomial coefficient. Some numerical examples are given in the third row in Table~\ref{tab:searching_space}. If the number of prices $J$ is given, the feasible set size is exponential in the total number of groups $I$. For our problem, however, it is possible to reduce the size of the feasible set by exploiting the special problem structure. More specifically, the group indices in each cluster should be consecutive at the optimum. This means that the size of the feasible set is $C(I-1,J-1)$  as shown in the last row in Table~\ref{tab:searching_space}, and thus is much smaller than $S(I,J)$.

\begin{table*}
\caption{Numerical examples for feasible set size of the partition problem in Level-1}
\label{tab:searching_space} \centering
\begin{tabular}{c|c|c|c|c|c}
\hline
Number of groups &\multicolumn{2}{|c|}{$I=10$}&\multicolumn{2}{|c|}{$I=100$}&{$I=1000$}\\
\hline
Number of prices &$J=2$ & $J=3$&$J=2$&$J=3$&$J=2$\\
\hline
$S(I,J)$&$511$&$9330$&$6.33825\times 10^{29}$&$8.58963\times 10^{46}$&$5.35754\times 10^{300}$\\
\hline
$C(I-1,J-1)$&$9$&$36$&$99$&$4851$&$999$\\
\hline
\end{tabular}
\end{table*}

Next we discuss how to solve the three level subproblems. A route map for the whole solving process is given in Fig.~\ref{fig:PPD}.

\begin{figure}[ht]
\centering
\includegraphics[scale=0.52]{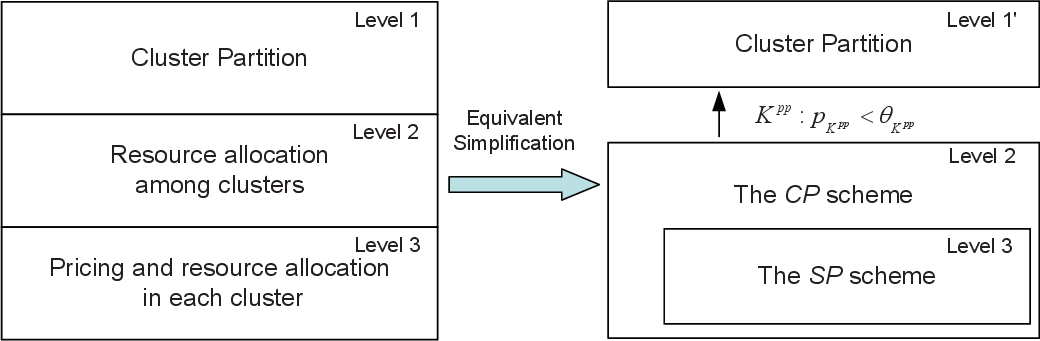}
\caption{Decomposition and simplification of the general $PP$ problem: The three-level decomposition structure of the $PP$ problem is shown in the left hand side. After simplifications in Section~\ref{sub_L2L3} and \ref{sub:L1}, the problem will be reduced to structure in right hand side.}
\label{fig:PPD}
\end{figure}

\subsection{Solving Level-2 and Level-3}
\label{sub_L2L3}
The optimal solution  (\ref{eq:ppd_level_3}) of the Level-3 problem  can be equivalently written as
\begin{equation}
\label{eq:ppd_l3}
R^j(\boldsymbol{s},\boldsymbol{a})=\frac{s^j \sum_{i\in C^j, \,i\le K^j}N_i\theta_i}{s^j +\sum_{i\in C^j, \,i\le K^j}N_i}\overset{(a)}{=} \frac{s^j N^j\theta^j}{s^j +N^j},
\end{equation}
\begin{equation}
\label{eq:ppd_level_3_condition}
\text{where}\;\;
\begin{cases}
N^j&={\sum_{i\in C^j, \,i\le K^j}N_i},\\
\theta^j&=\sum_{i\in C^j, \,i\le K^j}\frac{N_i \theta_i}{N^j}.
\end{cases}
\end{equation}
The equality (a) in (\ref{eq:ppd_l3}) means that each cluster  $\mathcal{C}^j$ can be equivalently treated as a group with $N^j$ homogeneous users with the same willings to pay $\theta^j$. We name this equivalent group as a \emph{super-group} (SG). We summarize the above result as the following lemma.

\begin{lemma}
\label{le:sp}
For every cluster $C^j$ and total resource $s^j$, $j\in\mathcal{J}$, we can find an equivalent super-group which satisfies conditions in (\ref{eq:ppd_level_3_condition}) and achieves the same revenue under the $SP$ scheme.
\end{lemma}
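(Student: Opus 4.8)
The plan is to show that the maximal cluster revenue (\ref{eq:ppd_level_3}) is reproduced exactly when a single fictitious group carrying the aggregate parameters (\ref{eq:ppd_level_3_condition}) is fed into the $SP$ scheme of Section~\ref{sec:single_price}. The argument splits into two halves: verifying the algebraic identity labelled $(a)$ in (\ref{eq:ppd_l3}), and then confirming that the super-group, treated as a one-group $SP$ instance with budget $s^j$, earns precisely $s^jN^j\theta^j/(s^j+N^j)$. The conceptual point underlying both halves is that the cluster revenue depends on the effective groups only through the two aggregates $N^j$ and $\theta^j$, not through their individual breakdown.

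First I would note that the Level-3 subproblem is literally an $SP$ instance restricted to the groups of $\mathcal{C}^j$ with resource $s^j$, so Theorem~\ref{thm:sp} and Algorithm~\ref{alg:2} apply verbatim and yield (\ref{eq:ppd_level_3}) with effective threshold $K^j$. Substituting the definitions of $N^j$ and $\theta^j$ from (\ref{eq:ppd_level_3_condition}) then collapses the numerator $s^j\sum_i N_i\theta_i$ to $s^jN^j\theta^j$ and the denominator $s^j+\sum_i N_i$ to $s^j+N^j$; this is identity $(a)$, essentially a definitional rewrite once $K^j$ is fixed.

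For the second half I would apply Theorem~\ref{thm:sp} to the single-group instance consisting of the super-group alone with resource $s^j$. Since there is only one group and $s^j>0$, Algorithm~\ref{alg:2} returns $K^{sp}=1$ (the test $\theta^j>p(1)$ reduces to $s^j>0$), so the optimal single price is $p^*=N^j\theta^j/(s^j+N^j)$, each of the $N^j$ users demands $\theta^j/p^*-1=s^j/N^j$, the resource constraint binds with total usage $s^j$, and the revenue equals $p^*N^j(s^j/N^j)=s^jN^j\theta^j/(s^j+N^j)$. This coincides with the right-hand side of $(a)$ and also reproduces the cluster price $p^j=p(K^j)$, establishing revenue-equivalence.

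The step requiring the most care---and the reason the statement is phrased ``for every total resource $s^j$''---is that the threshold $K^j$, and therefore the pair $(N^j,\theta^j)$, is itself a function of $s^j$ through Algorithm~\ref{alg:2}: as $s^j$ grows, more groups enter the effective market and the aggregates shift. The super-group is thus defined relative to a given resource level rather than being a fixed object. I would emphasize that this is harmless for the lemma, which is a pointwise statement in $s^j$ verified by the identity above, while flagging that controlling this $s^j$-dependence is exactly the technical condition that must be handled before the super-group abstraction can be lifted into the Level-2 resource allocation across clusters.
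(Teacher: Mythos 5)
Your proof is correct and follows essentially the same route as the paper: the paper's own justification is precisely the identity $(a)$ in (\ref{eq:ppd_l3}), obtained by substituting the definitions in (\ref{eq:ppd_level_3_condition}) into the Level-3 revenue (\ref{eq:ppd_level_3}) and recognizing the result as the single-group $SP$ revenue $s^jN^j\theta^j/(s^j+N^j)$. Your explicit verification of the one-group $SP$ instance (that $K^{sp}=1$, $p^*=N^j\theta^j/(s^j+N^j)$, and the constraint binds) and your remark that $K^j$, and hence $(N^j,\theta^j)$, depends on $s^j$ are both accurate elaborations of what the paper leaves implicit.
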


Based on Lemma~\ref{le:sp}, the level-2 and level-3 subproblems together can be viewed as the $CP$ problem for super-groups. Since a cluster and its super-group from a one-to-one mapping, we will use the two words interchangeably in the sequel.

However, simply combining Theorems~\ref{thm:opt} and \ref{thm:sp} to solve the level-2 and level-3 subproblems for a fixed partition $\boldsymbol{a}$ can result in a very high complexity.  This is because the effective markets within each super-group and between super-groups are coupled together. An illustrative  example of this coupling effective market is shown in Fig.~\ref{fig:two_level_threshold}, where $K^c$ is the threshold between clusters and has three possible positions (i.e., between group~2 and group~3, between group~5 and group~6, or after group~6); and $K_1$ and $K_2$ are thresholds within cluster $\mathcal{C}^1$ and  $\mathcal{C}^2$, which have two or three possible positions, respectively. Thus, there are $(2\times 3)\times 3=18$ possible thresholds possibilities in total.

\begin{figure}[ht]
\centering
\includegraphics[scale=0.41]{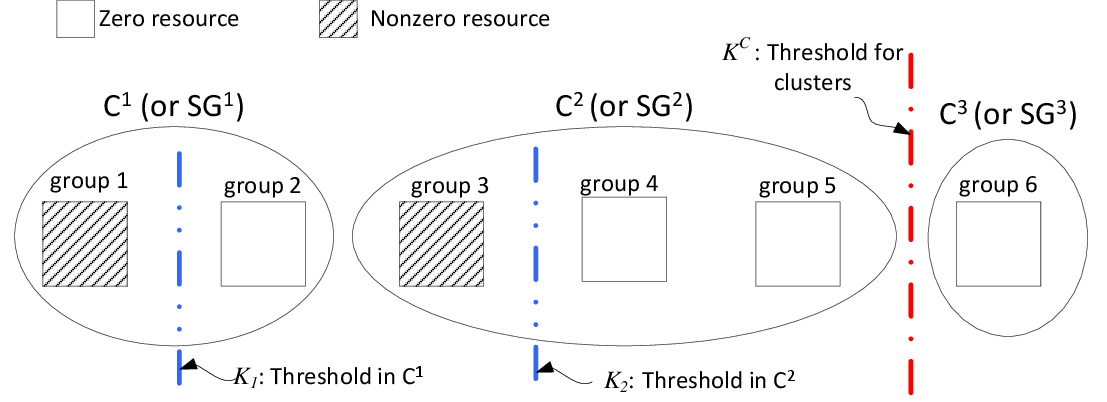}
\caption{An example of coupling thresholds. }
\label{fig:two_level_threshold}
\end{figure}

The key idea to resolve this coupling issue is to show that the situation in Fig.~\ref{fig:two_level_threshold} can not be an optimal solution of the $PP$ problem. The results in Sections~\ref{sec:cpd} and \ref{sec:single_price} show that there is a unified threshold at the optimum in both the $CP$ and $SP$ cases, e.g., Fig.~\ref{fig:effective_market_concept}.  Next we will show that a unified single threshold also exists in the $PP$ case.

\begin{lemma}
\label{le:consecutive_index} At any optimal solution of the $PP$ scheme, the group indices of the effective market is consecutive.
\end{lemma}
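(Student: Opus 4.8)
The plan is to argue by contradiction using a \emph{resource-preserving reassignment} that rules out the coupled-threshold configuration of Fig.~\ref{fig:two_level_threshold}. First I would suppose an optimal solution has a non-consecutive effective market. Since $\theta_1>\dots>\theta_I$, non-consecutiveness means there is a non-effective group $i$ and an effective group $i'$ with $i<i'$, i.e. $\theta_i>\theta_{i'}$. Because $i'$ is effective its cluster price obeys $q:=p_{i'}<\theta_{i'}$, while because $i$ is non-effective its cluster price obeys $p_i\ge\theta_i$. Hence $p_i\ge\theta_i>\theta_{i'}>q$, so $i$ and $i'$ must lie in two \emph{different} clusters, with the cluster $\mathcal{C}^b$ containing $i'$ charged a strictly lower price than the cluster $\mathcal{C}^a$ containing $i$. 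This is exactly the situation I must exclude.

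The key move is to reassign group $i$ from $\mathcal{C}^a$ into $\mathcal{C}^b$ while \emph{freezing} the total resource $s^b$ currently allocated to $\mathcal{C}^b$, leaving every other cluster's price and resource untouched. After $i$ joins $\mathcal{C}^b$, I re-solve the single-price Level-3 subproblem for $\mathcal{C}^b$ with the \emph{same} budget $s^b$; by Theorem~\ref{thm:sp} its new price $\tilde q$ is the water-filling price that makes the augmented total demand equal $s^b$. Since group $i$ contributes the strictly positive demand $N_i(\theta_i/q-1)$ already at the old price $q$ (using $\theta_i>q$), the cluster demand curve $\tilde D(p)=\sum_{g\in\mathcal{C}^b\cup\{i\}}N_g(\theta_g/p-1)^+$ satisfies $\tilde D(q)>s^b$; as $\tilde D$ is continuous and strictly decreasing where positive, the budget-clearing price must rise, $\tilde q>q$.

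Now I compare revenues. In any single-price cluster the SP revenue equals price times consumed resource, so $\mathcal{C}^b$'s revenue changes from $q\,s^b$ to $\tilde q\,s^b$, a strict gain of $(\tilde q-q)s^b>0$. Meanwhile group $i$ contributed zero resource and zero revenue in $\mathcal{C}^a$ (it was non-effective), so deleting it leaves $\mathcal{C}^a$'s price, resource, and revenue unchanged, and all other clusters are untouched. The total consumed resource is still $s^b$ plus the unchanged amounts, hence at most $S$, and at most $J$ distinct prices remain in use. Thus the constructed profile is feasible with strictly higher revenue, contradicting optimality; therefore no such inversion exists and the effective market is consecutive.

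The hard part is choosing the \emph{right} perturbation: freezing the cluster's resource budget rather than its price. It is precisely because $s^b$ is held fixed that inserting a higher-valuation group forces the price strictly up, while the donor cluster is unaffected because the moved group was idle there. I would stress in the write-up that a naive whole-group swap of $i$ and $i'$ fails, since the two groups have different populations $N_i\ne N_{i'}$, which is why the one-directional move with a frozen budget is the clean argument.
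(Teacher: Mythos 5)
Your proof is correct and follows essentially the same route as the paper: assume a non-effective group $i$ precedes an effective group $j$, move group $i$ into group $j$'s cluster, and show the revenue strictly increases, contradicting optimality. The only difference is that the paper encapsulates the key step as a separate lemma (adding a higher-willingness-to-pay group to a cluster with a fixed budget strictly raises its single-price revenue) whose proof is deferred to the technical report, whereas you supply that argument explicitly via the water-filling price increase $\tilde q>q$ at frozen budget $s^b$.
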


\rev{The proof of Lemma~\ref{le:consecutive_index} can be found in Appendix~\ref{proof_le:consectutive_index}.} The intuition is that the resource should be always allocated to high willingness to pay users at the optimum. Thus, it is not possible to have Fig.~\ref{fig:two_level_threshold}  at an optimal solution, where high willingness to pay users in group~2 are allocated zero resource while low willingness to pay users in group~3 are allocated positive resources.

Based on Lemma~\ref{le:consecutive_index}, we know that there is a unified effective market threshold for the $PP$ problem, denoted as $K^{pp}$. Since all groups with indices larger than $K^{pp}$ make zero contribution to the revenue,  we can ignore them and only consider the partition problem for the first $K^{pp}$ groups. Given a partition that divides the $K^{pp}$ groups into $J$ clusters (super-groups), we can apply the $CP$ result in Section~\ref{sec:cpd} to compute the optimal revenue in Level-2 based on Theorem~\ref{thm:opt}.
\begin{align}
R_{pp}(\boldsymbol{a})
=\sum_{j=1}^{J}{N^{j}}{\theta^j}-\frac{\left(\sum_{j=1}^{J}{N^j}\sqrt{{\theta^j}}\right)^2} {S+\sum_{j=1}^{J}{N^j}}\nonumber \\
=\sum_{i=1}^{K^{pp}}N_i\theta_i-\frac{\left(\sum_{j=1}^{J}{N^j}\sqrt{{\theta^j}}\right)^2}
{S+\sum_{i=1}^{ K^{pp}}N_i}.\label{eq:R_ppd_K0}
\end{align}

\subsection{Solving Level-1}
\label{sub:L1}
\subsubsection{With a given effective market threshold $K^{pp}$}
Based on the previous results, we first simplify the level-1 subproblem, and prove the theorem below. 

\begin{theorem}
For a given threshold $K^{pp}$, the optimal partition of Level-1 is the solution of the following optimization problem.
\begin{eqnarray}
\text{Level-1}'&\underset{a_i^j, N^j, \theta^j}{\text{minimize}}& \sum_{j\in \mathcal{J}} N^j\sqrt{\theta^j}\label{eq:obj}\nonumber\\
&\text{subject to}& N^j=\sum_{i\in\mathcal{K}^{pp}}N_i a_i^j,\;\;j\in\mathcal{J},\nonumber\\
&& \theta^j=\sum_{i\in\mathcal{K}^{pp}}\frac{N_ia_i^j}{N^j}\theta_i\;\;j\in\mathcal{J},\label{eq:theta}\nonumber\\
&&\sum_{j\in \mathcal{J}}a_i^j=1,\;a_i^j\in\{0,1\}\;,i\in\mathcal{K}^{pp}\;j\!\in\!\mathcal{J},\nonumber\\
&& \theta_{K^{pp}}>p^J=\sqrt{\theta^J(\boldsymbol{a})\lambda(\boldsymbol{a})}.\label{eq:pd_threshold_con}
\end{eqnarray}
where
$\mathcal{K}^{pp}\overset{\Delta}{=}\{1,2,\dots,K^{pp}\}$,  $\theta^J(\boldsymbol{a})$ is the value of average willingness to pay of the $J$th group for the partition $\boldsymbol{a}$, and $\lambda(\boldsymbol{a})=\left(\frac{\sum_{j\in\mathcal{J}}{N^j}\sqrt{{\theta^j}}}{S+\sum_{i=1}^{ K^{pp}}N_i}\right)^2.$
\end{theorem}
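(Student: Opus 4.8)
The plan is to fix the effective-market threshold $K^{pp}$ and show that, over all partitions, the Level-1 objective $R_{pp}(\boldsymbol{a})$ is an affine-decreasing function of the single quantity $\sum_{j\in\mathcal{J}} N^j\sqrt{\theta^j}$, so that revenue maximization becomes the minimization in Level-1'.

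Starting from the closed form (\ref{eq:R_ppd_K0}), I would note that for fixed $K^{pp}$ both $\sum_{i=1}^{K^{pp}} N_i\theta_i$ and $S+\sum_{i=1}^{K^{pp}} N_i$ are constants, since they depend only on which groups are effective and not on how those groups are grouped into clusters. Therefore maximizing $R_{pp}(\boldsymbol{a})$ is equivalent to minimizing $\bigl(\sum_{j} N^j\sqrt{\theta^j}\bigr)^2$, and as $N^j,\theta^j\ge 0$ this is equivalent to minimizing $\sum_{j} N^j\sqrt{\theta^j}$, giving the Level-1' objective. The equality constraints for $N^j$ and $\theta^j$ are the super-group definitions (\ref{eq:ppd_level_3_condition}); the unified-threshold property of Lemma~\ref{le:consecutive_index} is what lets me replace the per-cluster restriction $i\le K^j$ by the global one $i\in\mathcal{K}^{pp}$, and the assignment constraints carry over unchanged.

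The constraint (\ref{eq:pd_threshold_con}) is the consistency condition that the chosen partition really induces the effective market $\{1,\dots,K^{pp}\}$. To derive it I would use Theorem~\ref{thm:opt} applied to the super-groups: once Level-2 and Level-3 are merged into a $CP$ problem over super-groups, cluster $j$ is charged the single price $p^j=\sqrt{\theta^j\lambda(\boldsymbol{a})}$ and a constituent group $i$ is effective iff $\theta_i>p^{j}$, i.e.\ $\theta_i^2>\theta^j\lambda(\boldsymbol{a})$. Because clusters occupy consecutive index blocks at the optimum, the last cluster $\mathcal{C}^J$ has the lowest average valuation and therefore the lowest price, and it contains the boundary group $K^{pp}$; requiring this group to be effective is exactly $\theta_{K^{pp}}>\sqrt{\theta^J(\boldsymbol{a})\lambda(\boldsymbol{a})}$.

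The hard part will be justifying that imposing this inequality on $K^{pp}$ alone is enough, i.e.\ that the minimizer of Level-1' never leaves an interior group $i<K^{pp}$ ineffective while keeping $K^{pp}$ effective. This is not automatic from the index ordering: since a group is effective iff $\theta_i^2>\theta^j\lambda$, a high-valuation group placed in a cluster with large internal spread can have a smaller ratio $\theta_i^2/\theta^j$ than $\theta_{K^{pp}}^2/\theta^J$, so at an arbitrary feasible partition the binding group need not be $K^{pp}$. I would close this gap with an exchange argument showing that any partition making some interior group ineffective is strictly dominated in the Level-1' objective by a reassignment of that group (intuitively, isolating very high-valuation groups lowers $\sum_j N^j\sqrt{\theta^j}$ by concavity of the square root), so that the minimizer is always a consistent partition with $K^{pp}$ as its binding group. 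Establishing this domination rigorously — and thereby certifying that the minimizer of Level-1' coincides with the revenue-maximizing partition of the original Level-1 problem — is where the real work lies; the objective reduction and the constraint bookkeeping in the earlier steps are routine by comparison.
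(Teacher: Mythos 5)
Your proposal follows the paper's own route: the objective reduction (drop the constants $\sum_{i=1}^{K^{pp}}N_i\theta_i$ and $S+\sum_{i=1}^{K^{pp}}N_i$, so maximizing $R_{pp}$ in (\ref{eq:R_ppd_K0}) becomes minimizing $\sum_j N^j\sqrt{\theta^j}$), the identification of the first three constraints with the super-group definitions, and the reading of (\ref{eq:pd_threshold_con}) as the condition that the boundary group $K^{pp}$ in the lowest-priced cluster $\mathcal{C}^J$ has positive demand are all exactly what the paper does. You are also right that the genuinely nontrivial point is the one you flag: for an arbitrary feasible partition, $\theta_{K^{pp}}>\sqrt{\theta^J\lambda}$ does \emph{not} by itself force every interior group $i<K^{pp}$ to be effective, since a high-valuation group sitting in a cluster with large internal spread can fail $\theta_i>\sqrt{\theta^j\lambda}$ even when $K^{pp}$ passes its test.

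The only real difference is that you leave this step as ``the real work,'' whereas the paper discharges it by appealing to Lemma~\ref{le:consecutive_index} (itself resting on Lemma~\ref{le:add_one_group}): if an interior group were ineffective at a candidate optimum while some lower-valuation group is effective, merging the ineffective group into the cluster of that effective group strictly increases revenue, equivalently strictly decreases $\sum_j N^j\sqrt{\theta^j}$, contradicting optimality of the Level-1$'$ minimizer. That is precisely the exchange argument you sketch, so you should simply invoke the lemma you already cite rather than treat the step as open. One caution on your proposed shortcut: the bare concavity/Cauchy--Schwarz observation that splitting a cluster lowers $\sum_j N^j\sqrt{\theta^j}$ does not close the gap on its own, because $J$ is fixed --- you cannot isolate a group without merging elsewhere --- so the argument must be the targeted reassignment of the ineffective group (as in the proof of Lemma~\ref{le:consecutive_index}), not a generic refinement of the partition. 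With that substitution your proof is complete and matches the paper's.
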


\begin{proof}
The objective function and the first three constraints in Level-1$'$ are easy to understand:
if the effective market threshold $K^{pp}$ is given, then the objective function of the Level-1 subproblem, maximizing $R_{pp}$ in (\ref{eq:R_ppd_K0}) over $\boldsymbol{a}$, is as simple as minimizing  $\sum_{j=1}^{J}{N^j}\sqrt{{\theta^j}}$ as the level-1$'$ problem suggested;
the first three constraints are given by the definition of the partition.

Constraint (\ref{eq:pd_threshold_con}) is the threshold condition that supports  (\ref{eq:R_ppd_K0}), which means that the least willingness to pay users in the effective market has a positive demand.  It ensures that calculating the revenue by (\ref{eq:R_ppd_K0}) is valid. Remember that the solution of the $CP$ problem of Level-2 and Level-3 is threshold based, and Lemma~\ref{le:consecutive_index} indicates that (\ref{eq:pd_threshold_con}) is sufficient for that all groups with willingness larger than group $K^{pp}$ can have positive demands. Otherwise, we can construct another partition leading to a larger revenue (please refer to the proof of Lemma~\ref{le:consecutive_index}), or equivalently leading to a less objective value of Level-1$'$. This leads to a contradiction.
\end{proof}

The level-1$'$ problem is still a combinatorial optimization problem with a large feasible set of $\boldsymbol{a}$ (similar as the original Level-1). The following result can help us to reduce the size of the feasible set.
\begin{theorem}
\label{th:consecutive}
For any effective market size $K^{pp}$ and number of prices $J$,  an optimal partition of the $PP$ problem involves consecutive group indices within clusters.
\end{theorem}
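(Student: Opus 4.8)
The plan is to work entirely with the already-simplified Level-1$'$ objective. For a fixed threshold $K^{pp}$, maximizing $R_{pp}(\boldsymbol a)$ in (\ref{eq:R_ppd_K0}) is equivalent to minimizing $\sum_{j\in\mathcal J}N^j\sqrt{\theta^j}$, since $\sum_{i=1}^{K^{pp}}N_i\theta_i$ and $S+\sum_{i=1}^{K^{pp}}N_i$ are constants. Writing $V^j\triangleq\sum_{i\in\mathcal C^j}N_i\theta_i$, each cluster contributes $N^j\sqrt{\theta^j}=\sqrt{N^jV^j}$, so the quantity to minimize is $\Phi(\boldsymbol a)=\sum_{j\in\mathcal J}\sqrt{N^jV^j}$. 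First I would record the key analytic fact that $g(x,y)=\sqrt{xy}$ is concave on the nonnegative orthant (its Hessian is negative semidefinite), so that $\sqrt{N^jV^j}$ is a concave function of the aggregate cluster coordinates $(N^j,V^j)$.

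Second, I would prove a two-cluster exchange lemma: in any optimal partition no two clusters have interleaving willingness-to-pay ranges, i.e. one cannot have $i,k\in\mathcal C^a$ and $m\in\mathcal C^b$ with $\theta_i>\theta_m>\theta_k$. The argument freezes every cluster except a pair $\mathcal C^a,\mathcal C^b$ and re-partitions their union $T=\mathcal C^a\cup\mathcal C^b$ into two clusters. Along this family $N^a+N^b$ and $V^a+V^b$ equal the constants $N_T,V_T$, so the two-cluster cost is $\phi(N^a,V^a)=\sqrt{N^aV^a}+\sqrt{(N_T-N^a)(V_T-V^a)}$, which is concave (a sum of $\sqrt{xy}$ composed with affine substitutions). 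Identifying each admissible sub-cluster $\mathcal C^a\subseteq T$ with the point $(N^a,V^a)=\big(\sum_{i\in\mathcal C^a}N_i,\ \sum_{i\in\mathcal C^a}N_i\theta_i\big)$, the cost depends only on this point, and the minimum of a concave function over a finite point set is attained at a vertex of its convex hull. A short computation identifies those vertices as exactly the \emph{threshold subsets}: maximizing a linear functional $c_1N^a+c_2V^a=\sum_{i\in\mathcal C^a}N_i(c_1+c_2\theta_i)$ over subsets includes group $i$ iff $c_1+c_2\theta_i>0$, which (as $N_i>0$ and the $\theta_i$ are distinct) selects either the groups with $\theta_i$ above a cutoff or those below it. Hence the optimal split of $T$ places a top block of $\theta$-values in one cluster and the complementary bottom block in the other, with no interleaving.

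Third, I would globalize. Since the restriction of a globally optimal partition to any pair $\mathcal C^a,\mathcal C^b$ must itself minimize the two-cluster cost on $T$ (otherwise substituting the better split strictly lowers $\Phi$, contradicting optimality), every pair of clusters is non-interleaving. The relation ``all $\theta$-values of $\mathcal C^a$ exceed all of $\mathcal C^b$'' is then a strict total order on the clusters, because disjoint value-intervals cannot form a cycle; ordering the clusters accordingly partitions $\{1,\dots,K^{pp}\}$ into consecutive index blocks, which is precisely the claim of Theorem~\ref{th:consecutive}.

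The main obstacle I anticipate is the second step: converting ``interleaving is suboptimal'' into a clean statement needs the extreme-point characterization, and the delicate point is that the groups are atomic with distinct sizes $N_i$, so one cannot simply swap equal masses between clusters. Phrasing the move as a re-partition of the combined pool $T$ and invoking concavity together with the vertex/threshold identification is exactly what sidesteps the unequal-mass difficulty. A secondary, more routine check is that the improving consecutive partition remains feasible, i.e. still satisfies (\ref{eq:pd_threshold_con}); since the exchange only makes the partition ``more sorted'' in the direction already dictated by Lemma~\ref{le:consecutive_index}, this condition is preserved and the exchange stays within the feasible set.
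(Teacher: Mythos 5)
Your argument is correct in its essentials but takes a genuinely different route from the paper's. The paper also reduces the claim to a two-cluster exchange, but its engine is a concrete four-group, two-price lemma (Lemma~\ref{le:K=4}, whose computation is deferred to the technical report): a gap in the group indices lets one rewrite the offending pair of clusters as four compound groups and invoke that lemma to produce a strictly better rearrangement. You replace this computation with a structural argument: the per-cluster cost is $\sqrt{N^jV^j}$, which is concave in the aggregate coordinates $(N^j,V^j)$; the achievable aggregates of sub-clusters of a pooled pair form the vertex candidates of a zonotope generated by the segments from the origin to $(N_i,N_i\theta_i)$; and the extreme points of that zonotope are exactly the threshold subsets, so the concave two-cluster cost is minimized at a non-interleaving split. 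This is cleaner, avoids case analysis, and neatly sidesteps the unequal-mass swap difficulty, at the price of two loose ends you should tie up. First, $\sqrt{xy}$ is only weakly concave (it is linear along rays through the origin), so the restriction of a global optimum to a cluster pair is guaranteed to be \emph{a} minimizer of the two-cluster cost but not necessarily a vertex; since the theorem asserts existence of a consecutive optimal partition, you need either a termination argument for the sequence of exchanges (e.g., a sorting potential that strictly decreases) or to select, among all global optima, one minimizing such a potential. Second, preservation of the threshold condition (\ref{eq:pd_threshold_con}) under the exchange deserves more than the one-line appeal to being ``more sorted''; the paper handles this by first proving the statement for Level-1$'$ without that constraint and then verifying separately that imposing the constraint does not alter the conclusion.
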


\rev{The proof of Theorem~\ref{th:consecutive} is given in Appendix~\ref{proof_th:consecutive}.} We first prove this result is true for Level-1$'$ without constraint (\ref{eq:pd_threshold_con}), and further show that this result will not affected by (\ref{eq:pd_threshold_con}). The intuition is that high willingness to pay users should be allocated positive resources with priority. It implies that groups with similar willingness to pays should be partitioned in the same cluster, instead of in several far away clusters. Or equivalently, the group indices within each cluster should be consecutive.

We define $\mathcal{A}$ as the set of all partitions with consecutive group indices within each cluster, and  $v(\boldsymbol{a})=\sum_{j\in\mathcal{J}}N^j\sqrt{\theta^j}
$ is the value of objective of Level-1$'$ for a partition $\boldsymbol{a}$. Algorithm ~\ref{al:pd_simplifiedlevel-1} finds the optimal solution of  Level-1$'$. The main idea for this algorithm is to enumerate every possible partition in set $\mathcal{A}$, and then check whether the threshold condition (\ref{eq:pd_threshold_con}) can be satisfied. The main part of this algorithm is to enumerate all partitions in set $\mathcal{A}$ of $C(K^{pp}-1, J-1)$ feasible partitions. Thus the
complexity of Algorithm~\ref{al:pd_simplifiedlevel-1} is no more than $\mathcal{O}((K^{pp})^{J-1})$.
\begin{algorithm}[htb]
\caption{Solve the level-1$'$ problem with fixed $K^{pp}$}
\label{al:pd_simplifiedlevel-1}
\begin{algorithmic}[1]
    \Function{Level-1}{$K^{pp}$, $J$}
    \State $k\gets K^{pp}$
    \State $v^*\gets \sqrt{\sum_{i=1}^{k}N_i\sum_{i=1}^{k}N_i\theta_i}$, $\boldsymbol{a}^*=\boldsymbol{0}$ 
    \For{$\boldsymbol{a}\in\mathcal{A}$}
			\If{$\theta_{k}> \sqrt{\theta^J(\boldsymbol{a})\lambda(\boldsymbol{a})}$} 
				\If{$v(\boldsymbol{a})<v^*$}
					\State $v^*\gets v(\boldsymbol{a})$, $\boldsymbol{a}^*\gets\boldsymbol{a}$
				\EndIf
			\EndIf
    \EndFor
    \State \textbf{return} $\boldsymbol{a}^*$
    \EndFunction
\end{algorithmic}
\end{algorithm}

\subsubsection{Search the optimal effective market threshold  $K^{pp}$}
We know the optimal market threshold $K^{pp}$ is upper-bounded, i.e., $K^{pp}\le K^{cp}\le I$. Thus we can first run Algorithm~{\ref{alg:1}} to calculate the effective market size for the $CP$ scheme $K^{cp}$. Then, we search the optimal $K^{pp}$ iteratively  using Algorithm~\ref{al:pd_simplifiedlevel-1} as an inner loop. We start by letting  $K^{pp}=K^{cp}$ and run Algorithm~\ref{al:pd_simplifiedlevel-1}. If there is no solution, we decrease $K^{pp}$ by one and run Algorithm~\ref{al:pd_simplifiedlevel-1} again. The algorithm will terminate once we find an effective market threshold where Algorithm~\ref{al:pd_simplifiedlevel-1} has an optimal solution.
Once the optimal threshold and the partition of the clusters are determined, we can further run  Algorithm~{\ref{alg:1}} to solve the joint optimal resource allocation and pricing scheme. The pseudo code is given in Algorithm~\ref{alg:3} as follows.

\begin{algorithm}[htb]                      
\caption{Solve Partial Price Differentiation Problem}     
\begin{algorithmic}[1]
		\State $p_i\gets\theta_i$ 
    \State $k\Leftarrow$ {CP}($\{N_i,\theta_i\}_{i\in{\mathcal{I}}}$, $S$)$\_1$, $\boldsymbol{a}^*\Leftarrow$ Level-1($k$,$J$)
		\While{$ \boldsymbol{a}^\ast = =\boldsymbol{0}$} 
			\State $k\gets k-1$, $\boldsymbol{a}^*\Leftarrow$ Level-1($k$,$J$)
		\EndWhile  
		\For{$j\gets 1, J$} 
\State $N^j\gets\sum_{i=1}^k N_i a_i^j$, $\theta^j\gets\sum_{i=1}^k\frac{N_ia_i^j}{N^j}\theta_i$
		\EndFor
		\State $\lambda\Leftarrow$ CP($\{N^j,\theta^j\}_{i\in{\mathcal{J}}}$, $S$)$\_2$
	\For {$i\gets 1,k$} 
		\State $p_i\gets \sum_{j=1}^{J}a_i^j\sqrt{\theta^j\lambda}$
	\EndFor
   \State \textbf{return} $\{p_i\}_{i\in\mathcal{I}}$
\end{algorithmic}
\label{alg:3}                      
\end{algorithm}
In Algorithm~\ref{alg:3}, it invokes two functions: CP($\{N_i\theta_i\}_{i\in{\mathcal{I}}}$, $S$) as described in Algorithm~\ref{alg:1} and  and Level-1($k,J$) as in Algorithm~\ref{al:pd_simplifiedlevel-1}. CP($\{N_i\theta_i\}_{i\in{\mathcal{I}}}$, $S$) returns a vector with two elements: CP($\{N_i\{\theta_i\}_{i\in{\mathcal{I}}}$, $S$)$\_1$ denotes the first element $K^{cp}$, and {CP}($\{N_i\theta_i\}_{i\in{\mathcal{I}}}$, $S$)$\_2$ denotes the second element $\lambda^*$ in the $CP$ problem.

The above analysis leads to the following theorem:
\begin{theorem}
The solution obtained by Algorithm~\ref{alg:3} is optimal for the $PP$ problem.
\end{theorem}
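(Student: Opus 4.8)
The plan is to prove optimality by verifying that Algorithm~\ref{alg:3} faithfully executes the three-level decomposition and that its outer search locates the globally optimal effective-market threshold. Most of the per-level correctness is already in hand: Lemma~\ref{le:sp} collapses each cluster into a super-group, Theorem~\ref{thm:opt} solves the resulting $CP$ problem over super-groups (Level-2 and Level-3 jointly) and yields the revenue expression (\ref{eq:R_ppd_K0}), and Algorithm~\ref{al:pd_simplifiedlevel-1}, justified by Theorem~\ref{th:consecutive} and the Level-1$'$ reformulation, returns the optimal partition for any \emph{fixed} feasible threshold. What remains, and what the proof must supply, is the correctness of the reduction to a single consecutive effective market together with the claim that the outer loop of Algorithm~\ref{alg:3} returns the right value of $K^{pp}$.

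First I would record that, by Lemma~\ref{le:consecutive_index}, every optimal solution of $PP$ has a unified, consecutive effective market described by a single threshold $K^{pp}$, and that $K^{pp}\le K^{cp}$ by the same effective-market monotonicity behind Proposition~\ref{pro:comparison_sp_cd}. Hence the optimal threshold lives in $\{1,\dots,K^{cp}\}$, which is exactly the range Algorithm~\ref{alg:3} scans, starting from $K^{cp}$ and decreasing. For each candidate $k$, the inner call to Algorithm~\ref{al:pd_simplifiedlevel-1} either certifies that groups $1,\dots,k$ can be partitioned into $J$ clusters all receiving positive resource (constraint (\ref{eq:pd_threshold_con}) satisfied) and returns the revenue-optimal such partition, or reports infeasibility ($\boldsymbol{a}^*=\boldsymbol{0}$), in which case the loop decrements $k$. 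The algorithm halts at the largest feasible $k$.

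The crux --- and the step I expect to be the main obstacle --- is justifying that this largest feasible threshold is revenue-maximal. Writing the optimal value at a feasible threshold $k$ as $R^*(k)=\sum_{i\le k}N_i\theta_i-\bigl(\sum_j N^j\sqrt{\theta^j}\bigr)^2/(S+\sum_{i\le k}N_i)$ from (\ref{eq:R_ppd_K0}), I would prove $R^*(k+1)>R^*(k)$ whenever both thresholds are feasible, so that optimality forces the chosen threshold to be the largest feasible one: any larger feasible threshold would yield strictly more revenue and contradict optimality at a smaller one. My approach is the effective-market-size argument of Section~\ref{sec:single_price}: enlarging the set of paying users strictly raises revenue precisely because the marginal group's willingness to pay exceeds the price it is charged, which is exactly the content of the feasibility condition (\ref{eq:pd_threshold_con}). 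The delicate point is that the revenue-optimal partition may change with $k$, so the comparison cannot be made cluster-by-cluster naively; I would handle this by relating the optimal partition at one threshold to an admissible partition at the adjacent threshold and invoking the strict monotonicity of the common revenue term $S\bar{\theta}N_{eff}/(S+N_{eff})$ in the number of effective users, exactly as in the $CP$-versus-$SP$ comparison.

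Finally I would close the loop on the pricing step: once the optimal threshold and partition $\boldsymbol{a}^*$ are fixed, Algorithm~\ref{alg:3} forms the super-group parameters $(N^j,\theta^j)$ via (\ref{eq:ppd_level_3_condition}), recomputes $\lambda^*$ through the $CP$ routine of Algorithm~\ref{alg:1}, and sets $p_i=\sum_j a_i^j\sqrt{\theta^j\lambda^*}$. By Theorem~\ref{thm:opt} applied to the super-groups together with Lemma~\ref{le:sp}, these are precisely the optimal $CP$ prices for the realized partition, hence the optimal $PP$ prices. Combining the threshold search, the inner partition optimization, and this pricing step shows that the output of Algorithm~\ref{alg:3} attains the maximum of $R_{pp}$ over all partitions and thresholds, which is the optimal value of $PP$~Problem.
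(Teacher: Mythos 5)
Your proposal follows the same basic route as the paper's proof of this theorem---optimality by exhaustive search, with the correctness of the subroutines delegated to Lemma~\ref{le:sp}, Lemma~\ref{le:consecutive_index}, Theorem~\ref{th:consecutive}, and the super-group application of Theorem~\ref{thm:opt}---but you push one step further than the paper does, and that step is the interesting part. The paper's proof is two sentences: it asserts that Algorithm~\ref{alg:3} ``enumerates every possible value of the effective market size'' and that Algorithm~\ref{al:pd_simplifiedlevel-1} enumerates every partition in $\mathcal{A}$. Taken literally, the first assertion is not what the algorithm does: the outer while-loop terminates at the \emph{largest} feasible threshold and never evaluates smaller feasible ones, so one must additionally argue that the largest feasible $k$ is revenue-maximal (equivalently, that $R^*(k)$ computed from (\ref{eq:R_ppd_K0}) is nondecreasing over feasible thresholds). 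You correctly isolate this as the crux and propose exactly the right missing lemma, to be proved by the effective-market-enlargement argument behind the $CP$-versus-$SP$ comparison. However, your resolution of the difficulty you yourself flag is only gestured at: if you take the optimal partition at threshold $k$ and append group $k+1$ to some cluster, the resulting configuration need not satisfy the threshold condition (\ref{eq:pd_threshold_con}) at $k+1$ even when some \emph{other} partition does, and the reverse construction (deleting group $k+1$ from the optimal partition at $k+1$) bounds $R^*(k)$ from \emph{below}, which is the wrong direction. So the comparison between $R^*(k)$ and $R^*(k+1)$ when their optimizing partitions differ still has to be carried out; an argument in the spirit of Lemma~\ref{le:add_one_group} applied within the receiving cluster, combined with re-optimization of the Level-2 allocation, is the natural candidate but is not yet a proof. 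In short: same approach as the paper, plus an extra monotonicity claim that the paper's proof silently assumes; the claim is correctly identified and almost certainly true, but your proof of it is incomplete at precisely the point you call delicate.
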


\begin{proof}
It is clear that Algorithm~\ref{alg:3} enumerates every possible value of the effective market size for the $PP$ problem $K^{pp}$, and for a given $K^{pp}$ its inner loop Algorithm~\ref{al:pd_simplifiedlevel-1} enumerates every possible partition in set $\mathcal{A}$. Therefore, the result in Theorem~\ref{alg:3} follows.
\end{proof}

Next we discuss the complexity of Algorithm~\ref{alg:3}. The complexity of Algorithm~\ref{alg:1} is $\mathcal{O}(I)$, and we run it twice in Algorithm~\ref{alg:3}. The worst case complexity of  Algorithm~\ref{al:pd_simplifiedlevel-1}  is $\mathcal{O}({I}^{J-1})$, and we run it no more than $I-J$ times. Thus the whole complexity of Algorithm~\ref{alg:3} is no more than $\mathcal{O}({I}^{J})$, which is polynomial of $I$.

\section{Price Differentiation under Incomplete Information}
\label{sec:incomplete_information}
In Sections~\ref{sec:cpd}, \ref{sec:single_price}, and \ref{sec:PPD},
we discuss various pricing schemes with different implementational complexity level under complete information, the revenues of  which can be viewed as the benchmark of practical pricing designs. In this section, we further study the incomplete information scenario, where the \rev{service provider} does not know the group association of each user. The challenge for pricing in this case is that the \rev{service provider} needs to provide the right incentive so that a group $i$ user does not want to pretend to be a user in a different group. It is clear that the $CP$ scheme in Section~\ref{sec:cpd} and the $PP$ scheme in Section~\ref{sec:PPD} cannot be directly applied here. The $SP$ scheme in Section~\ref{sec:single_price} is a special case, since it does not require the user-group association information in the first place and thus can be applied in the incomplete information scenario directly. On the other hand, we know that the $SP$ scheme may suffer a considerable revenue loss compared with the $CP$ scheme. Thus it is natural to ask whether it is possible to design an incentive compatible differentiation scheme under incomplete information.
In this section, we design a quantity-based price menu to incentivize users to make the right self-selection and to achieve the same maximum revenue of the $CP$ scheme under complete information with proper technical conditions.  We name it as the Incentive Compatible Complete Price differentiation ($ICCP$) scheme.

In the $ICCP$ scheme, the service provider publishes a quantity-based price menu, which consists of several step functions of resource \rev{quantities}. Users are allowed to freely choose their quantities. The aim of this price menu is to make the users \emph{self-differentiated}, so that to mimic the same result (the same prices and resource allocations) of the $CP$ scheme under complete information.
Based on Theorem \ref{thm:opt}, there are only $K$ (without confusion, we remove the superscript ``cp'' to simplify the notation) effective groups of users receiving non-zero resource allocations, thus there are $K$ steps of unit prices, $p^*_{1}>p^*_{2}>\cdots
>p^*_{K}$ in the price menu. These prices are exactly the same optimal prices that the \rev{service provider} would charge for $K$ effective groups as in Theorem \ref{thm:opt}. Note that for the $K+1,\dots,I$ groups, all the prices in the menu are too high for them, then they will still demand zero resource.
The quantity is divided into $K$ intervals by $K-1$ thresholds, $s_{th}^{1}>s_{th}^{2}>\cdots>s_{th}^{K-1}$. The $ICCP$ scheme can specified as
follows:
\begin{equation}
p(s)= \left\{ {\begin{array}{ll}
   {p^*_{1}\quad\;\;\; {\rm{when}}\; s > s^1_{th}   }\\
   {p^*_2\quad\;\;\; {\rm{when}}\;  s^1_{th}  \ge s > s^2_{th}  } \\
    \vdots   \\
   {p^*_{K}\;\;\quad {\rm{when}}\; s^{K-1}_{th}  \ge s > 0.  }\\
\end{array} }\right.
\label{IC_P}
\end{equation}

A four-group example is shown in Fig.~\ref{fig:IC_pricing}.
\begin{figure}[ht]
\centering
\includegraphics[scale=0.65]{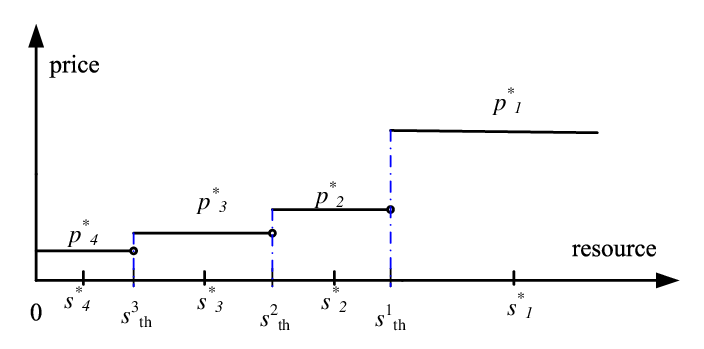}
\caption{A four-group example of the $ICCP$ scheme: where the prices
$p^*_1>p^*_2>p^*_3>p^*_{4}$ are the same as the $CP$ scheme. To mimic the same resource
allocation as under the $CP$ scheme, one necessary  (but not sufficient) condition is $s^{j-1}_{th}\ge s^*_{j}$ for all $j$, where $s^*_j$ is
the optimal resource allocation of the $CP$ scheme.}
\label{fig:IC_pricing}
\end{figure}

Note that in \rev{contrast} to the usual ``volume discount'', here the price is non-decreasing in quantity. This is motivated by the resource allocation in Theorem \ref{thm:opt}, that a user with a higher $\theta_{i}$ is charged a higher price for a larger resource allocation. Thus the observable quantity can be viewed as an indication of the unobservable users' willingness to pay, and help to realize price differentiation under incomplete information.

The key challenge in the $ICCP$ scheme is to properly set the quantity thresholds so
that users are perfectly segmented through self-differentiation.
This is, however, not always possible. Next we derive the necessary
and sufficient conditions to guarantee the perfect segmentation.

Let us first study the self-selection problem between two groups:
group $i$ and group $q$ with $i<q$. Later on we will generalize the
results to multiple groups. Here group $i$ has a higher willingness to pay, and will be charged with a higher price $p_i^\ast$ in the $CP$ case.   The incentive compatible constraint is that a high willingness to pay user can not get more surplus by pretending to be a low willingness to pay user, i.e.,
$\max\limits_{s}U_i(s;p^*_i)\ge \max\limits_{s}U_i(s;p^*_q),$
where $U_i(s;p)=\theta_i\ln(1+s)-ps$ is the surplus of
a group $i$ user when it is charged with price $p$.

Without confusion, we still use $s^*_i$ to denote the optimal resource allocation
under the optimal prices in Theorem \ref{thm:opt}, i.e.,
$s^*_i=\arg\max\limits_{s_i\geq 0}U_i(s_i;p^*_i).$ We define $s_{i\rightarrow q}$ as the quantity satisfying
\begin{equation}
\label{eq:trans}
\left\{\!\!{\begin{array}{ll}
 \!\!  U_i(s_{i\rightarrow q};p^*_q)= U_i(s^*_i;p^*_i)\!\!\\
 \!\!  s_{i\rightarrow q}< s^*_i
\end{array}} \right.
. \end{equation}
In other words, when a group $i$ user is charged with a lower price
$p^*_q$ and demands resource quantity $s_{i\rightarrow q}$, it
achieves the same as the maximum surplus under the optimal price of the $CP$ scheme $p^*_{i}$,
as showed in Fig.~\ref{fig:utility}.  Since the there two solutions of the first equation of (\ref{eq:trans}), we constraint $s_{i\rightarrow q}$ to be the one that is smaller than $s_i^\ast$.
\begin{figure}[ht]
 \centering
\includegraphics[scale=0.45]{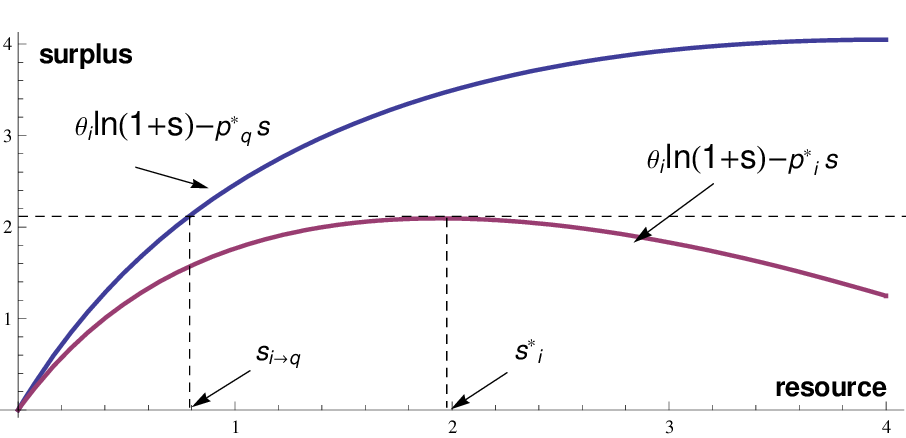}
\caption{When the threshold $s^{q-1}_{th}<s_{i\rightarrow q}$, the
group $i$ user can not obtain $U(s_i^*,p^*_i)$ if it chooses the
lower price $p_q$ at a quantity less than $s^{q-1}_{th}$. Therefore
it will automatically choose the high price $p^*_i$ to maximize its
surplus. } \label{fig:utility}
\end{figure}

To maintain the group $i$ users' incentive to choose the higher
price $p^*_i$ instead of $p^*_q$, we must have $s^{q-1}_{th}\le
s_{i\rightarrow q}$, which means a group $i$ user can not obtain
$U_i(s_i^*,p^*_{i})$ if it chooses a quantity less than
$s^{q-1}_{th}$. In other words, it will automatically choose the
higher (and the desirable) price $p^*_i$ to maximize its surplus.
On the other hand, we must have $s_{th}^{q-1} \geq s_q^\ast$ in order to maintain the optimal resource allocation and allow a group $q$ user to choose the right quantity-price combination (illustrated in Fig.~\ref{fig:IC_pricing}).

Therefore, it is clear that the \emph{necessary and sufficient}
condition that the $ICCP$ scheme under incomplete information achieves the same maximum revenue of the $CP$ scheme
under complete information is
\begin{equation}
    s^*_q\le s_{i\rightarrow q}, \;\forall\;i<q, \forall\; q\in \{2,\dots,K\}.
\label{IC2}
\end{equation}
By solving these inequalities, we can obtain the following theorem
(\rev{detailed proof in Appendix~\ref{sub_appendix_IC}}).

\begin{theorem}
\label{the:IC}
There exist unique thresholds $\{t_{1},\ldots,\!t_{K\!-1}\!\}$, such that the $ICCP$ scheme achieves
the same maximum revenue as in the complete information case if
$$\sqrt{\frac{\theta_{q}}{\theta_{q+1}}}\geq t_{q}\quad\text{for}\;q\!=\!1,\dots,K-1.$$
 Moreover, $t_{q}$ is the unique solution of
the equation
$$t^2\ln t-(t^2-1)+
\frac{t\sum_{k=1}^{q}N_k+N_{q+1}}{S+\sum_{k=1}^{K^{cp}}N_k}(t-1)=0$$
over the domain $t>1$.
\end{theorem}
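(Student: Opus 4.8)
The plan is to start from the necessary-and-sufficient condition (\ref{IC2}) and convert each inequality $s_q^\ast\le s_{i\to q}$ into a surplus comparison solvable in closed form. Fix a pair $i<q$. Since $U_i(s;p_q^\ast)=\theta_i\ln(1+s)-p_q^\ast s$ is strictly concave in $s$ with maximizer $\theta_i/p_q^\ast-1$ strictly above $s_q^\ast$ (because $\theta_i>\theta_q$), it is strictly increasing on the branch containing $s_q^\ast$. By definition $s_{i\to q}$ is the \emph{smaller} of the two points where $U_i(\cdot;p_q^\ast)$ meets the level $U_i(s_i^\ast;p_i^\ast)$, and it sits on this increasing branch. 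Hence $s_q^\ast\le s_{i\to q}$ is equivalent to the scalar inequality $U_i(s_q^\ast;p_q^\ast)\le U_i(s_i^\ast;p_i^\ast)$, which eliminates the implicitly defined quantity $s_{i\to q}$.

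Next I would substitute the closed forms from Theorem~\ref{thm:opt}, $s_k^\ast=\sqrt{\theta_k/\lambda^\ast}-1$ and $p_k^\ast=\sqrt{\theta_k\lambda^\ast}$, so that $1+s_k^\ast=\sqrt{\theta_k/\lambda^\ast}$ and $p_k^\ast s_k^\ast=\theta_k-\sqrt{\theta_k\lambda^\ast}$. Writing $r\triangleq\sqrt{\theta_i/\theta_q}>1$, a direct computation collapses the surplus inequality to $f(r)\ge 0$, where $f(r)\triangleq r^2\ln r-(r^2-1)+c\,(r-1)$ and $c\triangleq p_q^\ast/\theta_q=\sqrt{\lambda^\ast/\theta_q}\in(0,1)$. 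Thus every incentive constraint has been reduced to the sign of a single auxiliary function evaluated at a willingness-to-pay ratio.

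The core of the argument is a convexity analysis of $f$. One checks $f(1)=0$, $f'(1)=c-1<0$ (using $p_q^\ast<\theta_q$), and $f''(r)=2\ln r+1>0$ for all $r\ge 1$. Hence $f'$ is increasing with a single zero, so $f$ first dips below $0$ and then rises back through $0$ exactly once at some $r^\ast>1$; therefore $f(r)\ge 0$ iff $r\ge r^\ast$. This yields two things at once: the threshold exists and is unique, and—since $c$ depends on the pair only through the lower group while $r=\sqrt{\theta_i/\theta_q}$ is smallest for the nearest higher group—the whole family (\ref{IC2}) is binding only at \emph{consecutive} pairs. This collapses all constraints to $\sqrt{\theta_q/\theta_{q+1}}\ge t_q$ for $q=1,\dots,K-1$, with $t_q$ the relevant root. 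I expect this reduction to the nearest-neighbour constraint, together with the sign bookkeeping for $f$, to be the main obstacle.

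Finally I would put the defining equation into the advertised form by eliminating $\lambda^\ast$ through the water-filling identity behind Theorem~\ref{thm:opt}, $\sqrt{\lambda^\ast}\,(S+\sum_{k=1}^{K^{cp}}N_k)=\sum_{k=1}^{K^{cp}}N_k\sqrt{\theta_k}$, which rewrites $c$ in terms of the $N_k$ and $S$ and turns $f(t)=0$ into the displayed transcendental equation $t^2\ln t-(t^2-1)+\frac{t\sum_{k=1}^{q}N_k+N_{q+1}}{S+\sum_{k=1}^{K^{cp}}N_k}\,(t-1)=0$. Denoting its left-hand side by $F(t)$, the identical three facts $F(1)=0$, $F'(1)=\bigl(\sum_{k=1}^{q}N_k+N_{q+1}\bigr)/\bigl(S+\sum_{k=1}^{K^{cp}}N_k\bigr)-1<0$, and $F''(t)>0$ on $[1,\infty)$ give a unique root $t_q>1$ with $F<0$ on $(1,t_q)$ and $F>0$ beyond, so that $\sqrt{\theta_q/\theta_{q+1}}\ge t_q$ is precisely the condition that the corresponding incentive constraint is slack. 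The only care needed here is the algebraic collection of the ratio terms when passing from $c$ to the coefficient in $F$.
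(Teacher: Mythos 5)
Your reduction of (\ref{IC2}) to the surplus comparison $U_i(s_q^*;p_q^*)\le U_i(s_i^*;p_i^*)$, the closed-form collapse to $f(r)=r^2\ln r-(r^2-1)+c\,(r-1)\ge 0$ with $c=p_q^*/\theta_q=\sqrt{\lambda^*/\theta_q}$, the convexity bookkeeping ($f(1)=0$, $f'(1)=c-1<0$, $f''>0$ on $[1,\infty)$), and the observation that for fixed lower group the binding constraint is the nearest-neighbour pair, all match the paper's own proof (cf.\ (\ref{n_s}) and (\ref{tij1_sim})) and check out.

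The gap is in your final step. You assert that the water-filling identity "rewrites $c$ in terms of the $N_k$ and $S$ and turns $f(t)=0$ into the displayed transcendental equation." It does not: substituting $\sqrt{\lambda^*}=\frac{\sum_{k=1}^{K}N_k\sqrt{\theta_k}}{S+\sum_{k=1}^{K}N_k}$ gives $c=\frac{\sum_{k=1}^{K}N_k\sqrt{\theta_k/\theta_{q+1}}}{S+\sum_{k=1}^{K}N_k}$, which still depends on \emph{every} ratio $\theta_k/\theta_{q+1}$, not only on $t=\sqrt{\theta_q/\theta_{q+1}}$, the $N_k$ and $S$. So $f(t)=0$ is not the advertised equation, and the root of $f$ is not a threshold determined by the population sizes alone, as the theorem claims. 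The missing idea is a bounding step: because the multiplier $(t-1)$ is positive, one may replace $c$ by the lower bound obtained from $\sqrt{\theta_k/\theta_{q+1}}\ge t$ for $k\le q$, $=1$ for $k=q+1$, and $\ge 0$ for $k>q+1$, namely $c\ge \frac{t\sum_{k=1}^{q}N_k+N_{q+1}}{S+\sum_{k=1}^{K}N_k}$, which gives $f(t)\ge F(t)$ and hence that $F(t)\ge 0$ suffices. This is precisely the paper's passage from (\ref{tij1_sim}) to (\ref{tij2}), and it is the very reason the condition in Theorem~\ref{the:IC} is only sufficient (not necessary) when $K>2$, as the paper remarks; your phrase "the only care needed here is the algebraic collection of the ratio terms" treats an inequality as an identity. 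Once $F$ is obtained this way, your analysis of $F$ ($F(1)=0$, $F'(1)<0$, $F''>0$ on $[1,\infty)$) correctly yields existence and uniqueness of $t_q>1$ and the sign pattern needed to conclude.
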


We want to mention that the condition in Theorem \ref{the:IC} is
necessary and sufficient for the case of $K=2$ effective
groups\footnote{There might be other groups who are not allocated
positive resource under the optimal pricing.}. For $K>2$, Theorem
\ref{the:IC} is sufficient but not necessary. The intuition of Theorem~\ref{the:IC} is that users need to be sufficiently different to achieve the maximum revenue.

The following result immediately follows Theorem \ref{the:IC}.
\begin{corollary}
\label{cor:icn}
The $t_q$s in Theorem \ref{the:IC} satisfy
$t_q<t_{root}$ for $q=1,\ldots,K-1$, where $t_{root}\approx
2.21846$ is the larger root of equation $t^2\ln t-(t^2-1)=0$.
\end{corollary}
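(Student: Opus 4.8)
The plan is to treat the defining equation for $t_q$ as the sum of a fixed function and a strictly positive perturbation, and to locate the root purely from sign information. I would write the left-hand side of the equation in Theorem~\ref{the:IC} as $g_q(t)=f(t)+h_q(t)$, where $f(t)=t^2\ln t-(t^2-1)$ is exactly the function whose larger root defines $t_{root}$, and $h_q(t)=\frac{t\sum_{k=1}^{q}N_k+N_{q+1}}{S+\sum_{k=1}^{K^{cp}}N_k}(t-1)$ collects the remaining term. The key structural observation is that on the relevant domain $t>1$ the perturbation $h_q$ is strictly positive: its numerator $t\sum_{k=1}^{q}N_k+N_{q+1}$, its denominator $S+\sum_{k=1}^{K^{cp}}N_k$, and the factor $(t-1)$ are each positive (using $N_{q+1}>0$, $S>0$). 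Hence any root of $g_q$ must occur where $f$ is strictly negative.

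The main work is then a one-variable sign analysis of $f$. First I would compute $f'(t)=t(2\ln t-1)$, which shows $f$ is strictly decreasing on $(1,\sqrt{e})$ and strictly increasing on $(\sqrt{e},\infty)$. Since $f(1)=0$, the function dips below zero immediately past $t=1$, attains its minimum at $t=\sqrt{e}$, and rises back to zero at $t=t_{root}$ (the larger root of $f$, which satisfies $t_{root}>\sqrt{e}$ since $t_{root}\approx 2.21846$). This yields the sign picture I need: $f(t)<0$ on $(1,t_{root})$ and, because $f$ is increasing beyond $\sqrt{e}$, $f(t)\ge 0$ for all $t\ge t_{root}$, with equality only at $t_{root}$.

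Combining the two pieces finishes the argument. For every $t\ge t_{root}$ we have $f(t)\ge 0$ and $h_q(t)>0$, so $g_q(t)=f(t)+h_q(t)>0$ and in particular $g_q(t)\neq 0$; thus no root of $g_q$ can lie at or above $t_{root}$. Since Theorem~\ref{the:IC} guarantees that $t_q$ is the unique solution of $g_q(t)=0$ on $t>1$, it must satisfy $t_q<t_{root}$, and this holds uniformly for every $q=1,\dots,K-1$. I do not expect a genuine obstacle here: the only quantitative input beyond the given uniqueness is the elementary monotonicity of $f$, and the positivity of $h_q$ is immediate. If one prefers a self-contained argument, the same bookkeeping shows $g_q$ is negative just above $t=1$ (because $f(t)\approx -(t-1)$ there while $h_q$ contributes a coefficient strictly below $1$ since $\sum_{k=1}^{q+1}N_k<S+\sum_{k=1}^{K^{cp}}N_k$), so existence of a root in $(1,t_{root})$ also follows directly from the intermediate value theorem.
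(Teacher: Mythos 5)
Your proof is correct and is exactly the argument the paper intends: the paper states only that the corollary ``immediately follows'' from Theorem~\ref{the:IC}, and the implicit reasoning is precisely your decomposition $g_q=f+h_q$ with $h_q(t)>0$ for $t>1$, forcing the unique root of $g_q$ to lie where $f<0$, i.e.\ in $(1,t_{root})$. Your sign analysis of $f$ via $f'(t)=t(2\ln t-1)$ and the supplementary IVT check near $t=1$ are both sound and simply fill in details the paper omits.
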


The Corollary~\ref{cor:icn} means that the users do not need to be extremely different to achieve the maximum revenue.

When the conditions in Theorem \ref{the:IC} are not satisfied, there
may be revenue loss by using the pricing menu in (\ref{IC_P}). Since it is difficult to explicitly solve the parameterized transcend equation
(\ref{eq:trans}), we are not able to characterize the loss in a closed form yet.

\rev{
\subsection{Extensions to Partial Price Differentiation under Incomplete Information}
For any given system parameters, we can numerically check whether a partial price differentiation scheme can achieve the same maximum revenue under  both the complete and incomplete information scenarios. The idea is similar as we described in this section. Since the $PP$ problem can be viewed as the $CP$ problem for all effective super-groups, then we can check the $ICCP$ bound in Theorem~\ref{the:IC} for super-groups (once the super-group partition is determined by the searching using Algorithm~\ref{alg:3}).
Deriving an analytical sufficient condition (as in Theorem~\ref{the:IC}) for an incentive compatible partial price differentiation scheme, however, is highly non-trivial and is part of our future study.}

\section{Connections with the Classical Price Differentiation Taxonomy}

In economics, price differentiation is often categorized by the first/second/third degree price differentiation taxonomy\cite{pashigian1995price}. This taxonomy is often used in the context of unlimited resources and general pricing functions. The proposed schemes in our paper have several key differences from these standard concepts, mainly due to the assumption of limited total resources and the choice of linear usage-based pricing.

In the first-degree price differentiation, each user is charged a price based on its willingness to pay. Such a scheme is also called the perfect price differentiation, as it captures users' entire surpluses (\ie leaving users with zero payoffs). For the complete price differentiation scheme under complete information in Section~\ref{sec:cpd}, the service provider does not extract all surpluses from users, mainly due to the choice of linear price functions. All effective users obtain positive payoffs.

In the second-degree price differentiation, prices are set according to quantities sold (\eg the volume discount). The pricing scheme under incomplete information in Section~\ref{sec:incomplete_information} has a similar flavor of quantity-based charging. However, our proposed pricing scheme charges a \emph{higher} unit price for a \emph{larger} quantity purchase, which is opposite to the usual practice of volume discount. This is due to our motivation of mimicking the optimal pricing differentiation scheme under the complete information. Our focus is to characterize the sufficient conditions, under which the revenue loss due to incomplete information (also called ``information rent'' \cite{mussa1978monopoly, stokey1979intertemporal, maskin1984monopoly,mas1995microeconomic}) is zero.

In the third-degree price differentiation, prices are set according to some customer segmentation. The segmentation is usually made based on users' certain attributes such as  ages, occupations, and genders. The partial price differentiation scheme in Section~\ref{sec:PPD} is analogous to the third-degree price differentiation, but here the user segmentation is still based on users' willingness to pay. The motivation of our scheme is to reduce the implementational complexity.

\section{Numerical Results}
We provide numerical examples to quantitatively study several key properties of
price differentiation strategies in this section.
\subsection{When is price differentiation most beneficial?}
\begin{definition}
(Revenue gain) We define the revenue gain $G$ of one pricing scheme as the ratio of the revenue difference (between this pricing scheme and the single pricing scheme) normalized by  the revenue of single pricing scheme.
\end{definition}

In this subsection, we will study the revenue gain of the $CP$ scheme, i.e.,
$G(\boldsymbol{N},\boldsymbol{\theta},S)\overset{\Delta}{=}\frac{R_{cp}-R_{sp}}{R_{sp}}$, where  $\boldsymbol{N}\overset{\Delta}{=}\{N_i, \forall i\in\mathcal{I}\}$ denotes the number of users in each groups, $\boldsymbol{\theta}\overset{\Delta}{=}\{\theta_i, \forall i\in\mathcal{I}\}$ denotes their willingness to pays, and  $S$ is the total resource. Notice that this gain is the maximum possible differentiation gain among all $PP$ schemes.

We first study a simple two-group case.
According to Theorems~\ref{thm:opt} and \ref{thm:sp}, the
revenue under the $SP$ scheme and the $CP$ scheme can be calculated as follows:
$$
R^{sp}=\left\{ {\begin{array}{ll}
   \frac{S(N_1\theta_1+N_2\theta_2)}{N_1+N_2+S}, & 1\le t<\sqrt{\frac{S+N_1}{N_1}};\\
  \frac{S N_1\theta_1}{N_1+S}, &  t\ge \sqrt{\frac{S+N_1}{N_1}};\\
\end{array}} \right.
$$
and
$$
R^{cp}=\left\{ {\begin{array}{ll}
   \frac{S(N_1\theta_1+N_2\theta_2)+N_1N_2(\sqrt{\theta_1}-\sqrt{\theta_2})^2}{N_1+N_2+S}, & 1\le t<\frac{S+N_1}{N_1};\\
  \frac{S N_1\theta_1}{N_1+S}& t\ge \frac{S+N_1}{N_1}.\\
\end{array}} \right.
$$
where $t=\sqrt{\frac{\theta_1}{\theta_2}}>1$.

The revenue gain will depend on five parameters, $S$, $N_1$, $\theta_1$, $N_2$ and $\theta_2$.
To simplify notations,  let $N=N_1+N_2$ be the total
number of the users, $\alpha=\frac{N_1}{N}$ the percentage of
group 1 users, and $\bar{s}=\frac{S}{N}$ the level of normalized
available resource. Thus the revenue gain can be expressed as
\begin{equation}
    G(t,\alpha,\bar{s})=
 \begin{cases}
   \frac{\alpha(1-\alpha)(t-1)^2}{\bar{s}(1+\alpha(t^2-1))} & 1<t<\sqrt{\frac{\bar{s}+\alpha}{\alpha}}, \\
   \frac{(1-\alpha)(\bar{s}+\alpha-t\alpha)^2}{\alpha\bar{s}(1+\bar{s})t^2} & \sqrt{\frac{\bar{s}+\alpha}{\alpha}}\le t\le\frac{\bar{s}+\alpha}{\alpha}.
  \end{cases}
\end{equation}
Next we discuss the impact of each parameter.
\begin{observation}
In terms of the parameter $t$, $G$ monotonically increases in  $\left(1,\sqrt{\frac{\bar{s}+\alpha}{\alpha}}\right)$ and decrease in $\left(\sqrt{\frac{k+\alpha}{\alpha}},\frac{k+\alpha}{\alpha}\right)$. The maximum is obtained at $t_{G-max}=\sqrt{\frac{k+\alpha}{\alpha}}$, when the resource allocated to the group 2 user just becomes zero in the $SP$ scheme.
\end{observation}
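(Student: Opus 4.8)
The plan is to treat $G$ as a piecewise function of $t$ (with $\alpha$ and $\bar s$ fixed, where the statement's $k$ is the normalized resource $\bar s$) and analyze each branch through its first derivative, then glue the two branches at the breakpoint $t_0\triangleq\sqrt{(\bar s+\alpha)/\alpha}$. A useful first remark is that $t_0$ is exactly the threshold appearing in Theorem~\ref{thm:sp}: in normalized variables $\sqrt{(S+N_1)/N_1}=t_0$, so $t_0$ is precisely the value of $t$ at which group~2's allocation under the $SP$ scheme drops to zero. This identifies $t_0$ as the natural candidate for $t_{G-max}$ and is the economic content of the claim.

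First I would handle the lower branch $1<t<t_0$. Writing $G=\frac{\alpha(1-\alpha)}{\bar s}\cdot\frac{(t-1)^2}{D(t)}$ with $D(t)=1-\alpha+\alpha t^2>0$, the quotient rule gives a derivative whose sign equals that of $2(t-1)D(t)-(t-1)^2(2\alpha t)$. The key algebraic step is that this collapses to $2(t-1)\bigl(1-\alpha+\alpha t\bigr)$, which is strictly positive for every $t>1$ and $\alpha\in(0,1)$. Hence $G$ is strictly increasing on $(1,t_0)$.

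Next I would treat the upper branch $t_0\le t\le(\bar s+\alpha)/\alpha$. Here it is cleanest to write $G=c\,f(t)^2$ with positive prefactor $c=\frac{1-\alpha}{\alpha\bar s(1+\bar s)}$ and $f(t)\triangleq\frac{\bar s+\alpha}{t}-\alpha$. Since $f'(t)=-(\bar s+\alpha)/t^2<0$ and, on this interval, the constraint $t\le(\bar s+\alpha)/\alpha$ forces $f(t)\ge 0$, the chain rule gives $G'=2\,c\,f(t)f'(t)\le 0$, so $G$ is decreasing. The role of the domain constraint is exactly to keep $f(t)\ge0$; without it the squared term could be increasing, so this sign bookkeeping is the only subtle point in the argument.

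Finally I would verify continuity at $t_0$ to stitch the two monotonicity results together. At $t=t_0$ one has $\alpha(t_0^2-1)=\bar s$, hence $D(t_0)=1+\bar s$, and $\bar s+\alpha-\alpha t_0=\alpha t_0(t_0-1)$; substituting into both expressions shows each branch reduces to $\frac{\alpha(1-\alpha)(t_0-1)^2}{\bar s(1+\bar s)}$, so $G$ is continuous at $t_0$. Since $G$ rises on $(1,t_0)$, is continuous at $t_0$, and falls on $(t_0,(\bar s+\alpha)/\alpha)$, its unique maximum over the whole domain is attained at $t_{G-max}=t_0=\sqrt{(\bar s+\alpha)/\alpha}$, which is the claimed result. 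I expect no genuine obstacle beyond the factorization in the lower branch and the nonnegativity-of-$f$ argument in the upper branch, both of which are elementary once the expressions are arranged as above.
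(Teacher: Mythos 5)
Your proof is correct, and it actually supplies more than the paper does: the paper states this as an unproved \emph{Observation}, supported only by the plot in Fig.~9 and an informal economic discussion, so there is no proof in the paper to compare against. Your branch-by-branch calculus argument checks out in every detail. On the lower branch the derivative's sign is governed by $2(t-1)D(t)-(t-1)^2(2\alpha t)=2(t-1)\bigl(1-\alpha+\alpha t\bigr)>0$ for $t>1$, $\alpha\in(0,1)$, as you claim. On the upper branch, writing $G=c\,f(t)^2$ with $f(t)=\tfrac{\bar s+\alpha}{t}-\alpha$ and noting $f\ge 0$ precisely on $t\le(\bar s+\alpha)/\alpha$ gives $G'\le 0$ there; this is indeed the one place where the domain restriction matters. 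The continuity check at $t_0=\sqrt{(\bar s+\alpha)/\alpha}$ via $\alpha(t_0^2-1)=\bar s$ and $\bar s+\alpha-\alpha t_0=\alpha t_0(t_0-1)$ is right, and both branches evaluate to $\tfrac{\alpha(1-\alpha)(t_0-1)^2}{\bar s(1+\bar s)}$ (which also confirms that the paper's displayed $G_{max}$ formula has a sign typo, $(\alpha-1)$ in place of $(1-\alpha)$). Your identification of $t_0$ with the normalized form of the $SP$ threshold $\sqrt{(S+N_1)/N_1}$ correctly establishes the final clause of the statement, and you rightly read the paper's stray $k$ as $\bar s$.
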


One example is showed in Fig.\ref{fig:G_t}.
\begin{figure}
\centering
\includegraphics[scale=0.45]{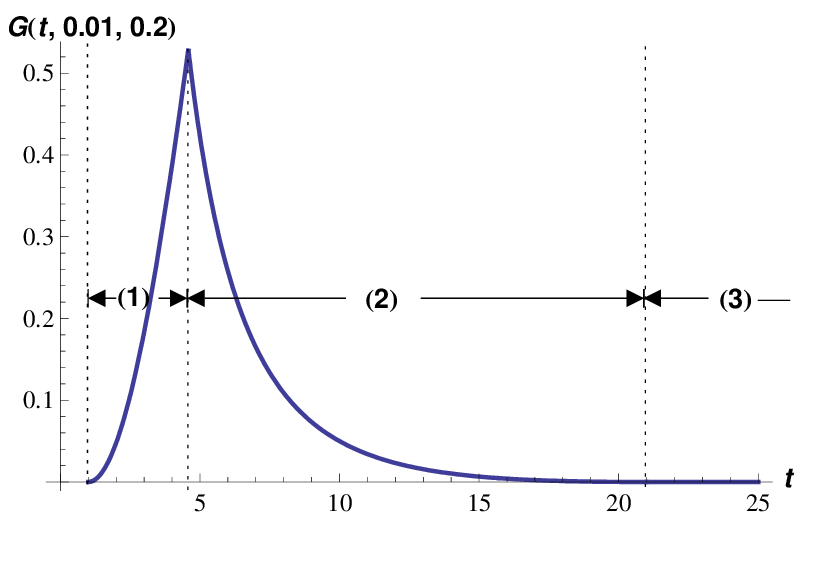}
\caption{One example of the revenue gain $G(t,0.01,0.2)$  for the $CP$ scheme. It is clear that the revenue gain can be divided into three regions.  Region(1), increasing region, where $K^{cp}=K^{sp}=2$, and the revenue gain comes from the differentiation gain. Region(2), decreasing region,  where $K^{cp}=2$, $K^{sp}=1$, and the revenue gain comes from larger effective market and differentiation gain. Region(3), zero region, where $K^{cp}=K^{sp}=1$, and is a degenerating case where two pricing scheme coincide. }
\label{fig:G_t}
\end{figure}

It is clear that the revenue gain is not monotonic in the willingness to pay ratio. Its behavior can be divided into three regions: the increasing Region (1) with $t\in \left(1,\sqrt{\frac{\bar{s}+\alpha}{\alpha}}\right)$, the decreasing Region (2) with $t\in\left[\sqrt{\frac{k+\alpha}{\alpha}},\frac{k+\alpha}{\alpha}\right)$, and the zero Region (3) with $t\ge\frac{k+\alpha}{\alpha}$.

It is also interesting to note that three regions  are closed related to the effective market sizes: $K^{sp}=K^{cp}=2$ in  Region (1);  $K^{sp}=1$ and $K^{cp}=2$ in Region (2); and  $K^{cp}=K^{sp}=1$ in Region (3) where the $CP$ scheme degenerates to the $SP$ scheme.  The peak point of the revenue gain correspond to the place where the effective market of the $SP$ Scheme changes.

Intuitively, the $CP$ scheme increases the revenue by charging the high willingness groups with high prices, thus the revenue gain increases first when the difference of willingness to pays increase. However, when the difference of willingness to pay is very large, the $CP$ scheme obtain most revenue from the high willingness to pay users, while the $SP$ scheme declines the low willingness to pay users but serves the high willingness to pays only. Both schemes lead to similar resource allocation in this region, and thus the revenue gain decreases as the difference of willingness to pays increases.

Figure~\ref{fig:G_t} shows the revenue gain under usage-based
pricing can be very high in some scenario, e.g., over $50\%$ in this
example. We can define this peak revenue gain as
\begin{equation*}
G_{max}(\alpha,\bar{s})\!=\!\max_{t\ge 1}
G(t,\alpha,\bar{s})\!
=\!\frac{\!(\alpha\!-\!1)(\!\sqrt{\bar{s}\!+\!\alpha}\!-\!\sqrt{\alpha})^2}{\bar{s}(1+\bar{s})\!}.
\end{equation*}
Figure~\ref{fig:gmax} is shown how $G_{max}$ changes in $\bar{s}$ with
different parameters $\alpha$.
\begin{figure}[htb]
\centering
\includegraphics[scale=0.38]{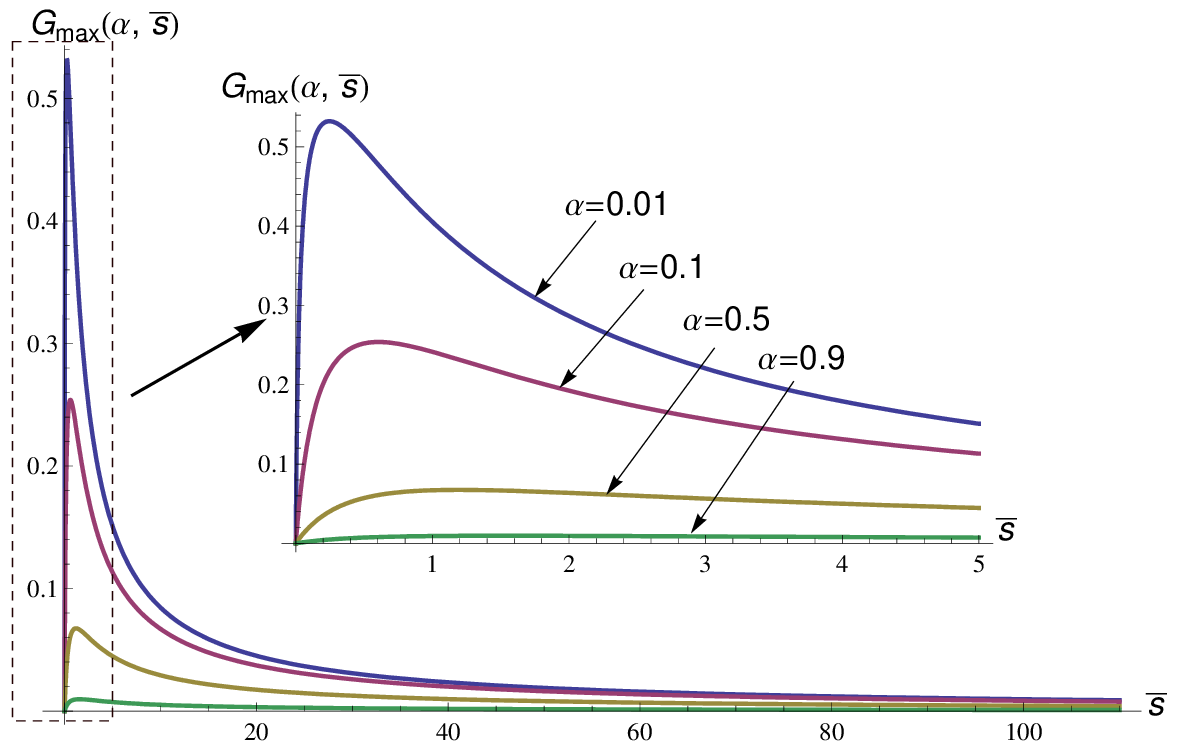}
\caption{For a fixed $\bar{s}$, $G_{max}(\alpha,\bar{s})$ monotonically increases in $\alpha$. For a fixed $\alpha$, $G_{max}(\alpha,\bar{s})$ first increases in $\bar{s}$, and then decreases in $\bar{s}$.} \label{fig:gmax}
\end{figure}
\begin{observation}
For a fixed $\bar{s}$, $G_{max}(\alpha,\bar{s})$ monotonically decreases in $\alpha$ .
\end{observation}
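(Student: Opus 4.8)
The plan is to work directly from the closed-form expression for $G_{max}(\alpha,\bar{s})$ and reduce the monotonicity claim to comparing an explicitly decreasing numerator against an explicitly increasing denominator. Recall that after maximizing $G(t,\alpha,\bar{s})$ over $t$ at $t_{G-max}=\sqrt{(\bar{s}+\alpha)/\alpha}$, one obtains
\begin{equation*}
G_{max}(\alpha,\bar{s})=\frac{(1-\alpha)\left(\sqrt{\bar{s}+\alpha}-\sqrt{\alpha}\right)^2}{\bar{s}(1+\bar{s})},
\end{equation*}
which is nonnegative on the admissible range $\alpha\in(0,1)$, $\bar{s}>0$ (note the numerator is written as $1-\alpha$, not $\alpha-1$, so that the gain is nonnegative as a revenue gain must be). Since $\bar{s}$ is held fixed, the factor $\bar{s}(1+\bar{s})$ is a positive constant and may be ignored for the purpose of determining monotonicity in $\alpha$.

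The key step is to rationalize the difference of square roots. First I would write
\begin{equation*}
\sqrt{\bar{s}+\alpha}-\sqrt{\alpha}=\frac{\bar{s}}{\sqrt{\bar{s}+\alpha}+\sqrt{\alpha}},
\end{equation*}
so that squaring and substituting yields
\begin{equation*}
G_{max}(\alpha,\bar{s})=\frac{\bar{s}}{1+\bar{s}}\cdot\frac{1-\alpha}{\left(\sqrt{\bar{s}+\alpha}+\sqrt{\alpha}\right)^2}.
\end{equation*}
This form makes the dependence on $\alpha$ transparent: the prefactor $\bar{s}/(1+\bar{s})$ is constant, the numerator $1-\alpha$ is strictly positive and strictly decreasing on $(0,1)$, and the denominator $\left(\sqrt{\bar{s}+\alpha}+\sqrt{\alpha}\right)^2$ is strictly positive and strictly increasing in $\alpha$ because both $\sqrt{\bar{s}+\alpha}$ and $\sqrt{\alpha}$ increase with $\alpha$.

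It follows that $G_{max}(\alpha,\bar{s})$ is a positive constant times a strictly decreasing positive function divided by a strictly increasing positive function, hence strictly decreasing in $\alpha$ on $(0,1)$, which is the claim. I do not expect a genuine obstacle here: the only points requiring care are the sign of the numerator and the restriction $\alpha\in(0,1)$ coming from $\alpha=N_1/N$. If one preferred to avoid the rationalization, an equivalent route would be to differentiate the product $(1-\alpha)\left(\sqrt{\bar{s}+\alpha}-\sqrt{\alpha}\right)^2$ directly; the inequality $2\alpha+\bar{s}\ge 2\sqrt{\alpha(\alpha+\bar{s})}$ (arithmetic--geometric mean) then certifies that $\left(\sqrt{\bar{s}+\alpha}-\sqrt{\alpha}\right)^2$ is itself decreasing, so the product of two positive decreasing factors is decreasing. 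The rationalized form is cleaner and sidesteps the derivative computation entirely.
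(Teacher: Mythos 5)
Your proof is correct, and it is worth noting that the paper itself offers no formal argument for this observation at all: it is stated as an empirical/intuitive remark supported by Fig.~\ref{fig:gmax} and a verbal explanation about high willingness to pay users being minorities. Your rationalization
\begin{equation*}
G_{max}(\alpha,\bar{s})=\frac{\bar{s}}{1+\bar{s}}\cdot\frac{1-\alpha}{\left(\sqrt{\bar{s}+\alpha}+\sqrt{\alpha}\right)^2}
\end{equation*}
is a clean way to certify the claim, since it isolates the $\alpha$-dependence into a decreasing positive numerator over an increasing positive denominator, and your fallback via the AM--GM inequality $2\alpha+\bar{s}\ge 2\sqrt{\alpha(\alpha+\bar{s})}$ is an equally valid derivative-free check that $\left(\sqrt{\bar{s}+\alpha}-\sqrt{\alpha}\right)^2$ is itself decreasing. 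You are also right to flag the sign: the paper's displayed formula carries a factor $(\alpha-1)$, which would make the gain negative; substituting $t_{G\text{-}max}=\sqrt{(\bar{s}+\alpha)/\alpha}$ into the first branch of $G(t,\alpha,\bar{s})$ confirms the correct factor is $(1-\alpha)$, as you use. The one small caveat is that your argument establishes strict monotonic decrease on $\alpha\in(0,1)$ for the closed-form expression, which presumes the interior maximizer formula is valid throughout that range; this is consistent with the paper's own use of the formula, so nothing is lost.
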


When $\alpha$ is small, which means high willingness to pay users are minorities in the effective market, the advantage of price differentiation is very evident. As shown in Fig.~\ref{fig:gmax}, when $\alpha= 0.1$, the maximum possible revenue gain can be over than $20\%$; and when $\alpha=0.01$, this gain can be even higher than $50\%$. However, when high willingness to pay users are majority, the price differentiation gain is very limited, for example,  the gain is no larger than $8\%$ and $2\%$  for $\alpha=0.5$ and 0.9, respectively.

Intuitively, high willingness to pay users are the most profitable users in the market. Ignoring them  is detrimental in terms of revenue even if they only occupy a small fraction of the population. Since the $SP$ scheme is set based on the average willingness to pay of the effective market, the high willingness to pay users will be ignored  (in the sense of not charging the desirable high price) when $\alpha$ is small. In contrast, ignoring the low willingness to pay users when $\alpha$ is large is not a big issue.

\begin{observation}
For parameter $k$, $G_{max}(\alpha,\bar{s})$ is not a monotonic function in $\bar{s}$. Its shape looks like a skewed bell. The gain is either small when $\bar{s}$ is very small or very large.
\end{observation}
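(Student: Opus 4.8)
The plan is to treat $G_{max}(\alpha,\bar s)=\dfrac{(1-\alpha)\left(\sqrt{\bar s+\alpha}-\sqrt\alpha\right)^2}{\bar s(1+\bar s)}$ as a one-variable function of $\bar s>0$ for a fixed $\alpha\in(0,1)$, and to establish unimodality by reducing the sign of its derivative to the sign of a single cubic. First I would pin down the two boundary limits. As $\bar s\to 0^+$, writing $\sqrt{\bar s+\alpha}-\sqrt\alpha=\frac{\bar s}{\sqrt{\bar s+\alpha}+\sqrt\alpha}\sim\frac{\bar s}{2\sqrt\alpha}$ gives a numerator of order $\bar s^2$ against a denominator of order $\bar s$, so $G_{max}\to 0$; as $\bar s\to\infty$, the numerator grows like $(1-\alpha)\bar s$ while the denominator grows like $\bar s^2$, so again $G_{max}\to 0$. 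Since $G_{max}>0$ on $(0,\infty)$ and is continuous, it already cannot be monotone, which by itself settles the non-monotonicity claim and forces an interior maximum.

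To upgrade this to the single skewed-bell statement, I would remove the square roots by the substitution $v=\sqrt{\bar s+\alpha}-\sqrt\alpha$, which maps $\bar s\in(0,\infty)$ bijectively and increasingly onto $v\in(0,\infty)$. Writing $a=\sqrt\alpha$ and using $\bar s=(v+a)^2-a^2=v(v+2a)$, the factor $\left(\sqrt{\bar s+\alpha}-\sqrt\alpha\right)^2=v^2$ cancels against one power of $\bar s$, and a short calculation collapses $G_{max}$ to the rational function $f(v)=\dfrac{(1-a^2)\,v}{(v+2a)\bigl(1+2av+v^2\bigr)}$. Because $v\mapsto\bar s$ is increasing, the monotonicity of $G_{max}$ in $\bar s$ is identical to the monotonicity of $f$ in $v$.

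The core step is then to differentiate $f$. Writing the denominator as $D(v)=(v+2a)(1+2av+v^2)=v^3+4av^2+(1+4a^2)v+2a$, the numerator of $f'(v)$ is $D(v)-vD'(v)=-2\bigl(v^3+2av^2-a\bigr)$, so the sign of $f'$ is the opposite of the sign of $g(v)=v^3+2av^2-a$. I would finish by analyzing $g$ on $(0,\infty)$: since $g(0)=-a<0$, $g(v)\to+\infty$, and $g'(v)=3v^2+4av>0$, the function $g$ is strictly increasing and has exactly one positive root $v^\ast$. Hence $f'>0$ for $v<v^\ast$ and $f'<0$ for $v>v^\ast$, so $G_{max}$ strictly increases in $\bar s$ up to the unique value $\bar s^\ast$ corresponding to $v^\ast$ and strictly decreases thereafter, yielding the claimed single-peaked profile that vanishes at both ends.

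I expect the main obstacle to be purely algebraic: differentiating directly in $\bar s$ is unpleasant because of the nested square roots, so the real work is spotting the substitution $v=\sqrt{\bar s+\alpha}-\sqrt\alpha$ that cancels the $\left(\sqrt{\bar s+\alpha}-\sqrt\alpha\right)$ factor and turns the problem into the sign analysis of the single cubic $g$. Once that reduction is in place, everything else is routine calculus, and the asymmetry of $g$ accounts for the ``skewed'' qualifier.
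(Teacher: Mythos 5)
Your argument is correct, and it is substantially stronger than what the paper actually does: the paper never differentiates $G_{max}$ at all, but instead justifies the observation purely qualitatively (when $\bar{s}$ is very small both schemes serve only the highest-willingness-to-pay users, when $\bar{s}$ is very large the prices and allocations of the two schemes converge, hence similar revenues at both extremes) together with the plotted curves in the corresponding figure. Your route---computing the two boundary limits, then substituting $v=\sqrt{\bar s+\alpha}-\sqrt{\alpha}$ to collapse $G_{max}$ into the rational function $\frac{(1-a^2)v}{(v+2a)(1+2av+v^2)}$ and reducing $f'$ to the sign of the cubic $v^3+2av^2-a$, which is strictly increasing with a single positive root---is a genuine proof of strict unimodality (a unique interior maximizer), which the paper only asserts visually; I checked the algebra ($D(v)-vD'(v)=-2(v^3+2av^2-a)$) and it is right. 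Two small remarks: you silently corrected the paper's sign typo, since the displayed formula has $(\alpha-1)$ in the numerator where $(1-\alpha)$ is what actually results from evaluating $G$ at $t=\sqrt{(\bar s+\alpha)/\alpha}$, and it is worth saying so explicitly; and the ``skewed'' qualifier remains informal in both treatments---your appeal to ``the asymmetry of $g$'' does not really quantify skewness, but since the statement itself is an informal observation rather than a theorem, that is acceptable. The trade-off is the usual one: the paper's discussion supplies the economic interpretation and extends heuristically to $I>2$ groups, while your calculation delivers a rigorous guarantee for the two-group closed form.
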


Small $\bar{s}$ means that resource is very limited, and both schemes allocates the resource to high willingness to pay users (see the discussion of the threshold structure in Sections~\ref{sec:cpd} and \ref{sec:single_price}), and thus there is not much difference between two pricing schemes.
While $\bar{s}$ is very large, i.e., the resource is abundant, the prices and the resource allocation with or without differentiation become similar (which can be easily checked from formulations in Theorems~\ref{thm:opt} and \ref{thm:sp}).
In these two scenarios, similar resource allocations lead to similar revenues. These explains the bell shape for parameter $\bar{s}$.

Based on the above observations, we find that the revenue gain can be very high under two conditions. First, the high willingness to pay users are minorities in the effective market. Second, the total resource is comparatively limited.

For cases with three or more groups, the analytical study becomes much more challenging  due to many more parameters.  Moreover, the complex threshold structure of the effective market makes the problem even complicated. We will present some numerical studies to illustrate some interesting insights.

For illustration  convenience, we choose a three-group example and three different sets of parameters as shown in Table~\ref{tab:three_group}. To limit the dimension of the problem, we set the parameters such that the total number of users and the average willingness to pay (i.e., $\bar{\theta} = \sum_{i=1}^3 N_i\theta_i/(\sum_{i=1}^3  N_i)$) of all users are the same across three different parameter settings. This ensures that the $SP$ scheme achieves the same revenue in three different cases when resource is abundant. Figure~\ref{fig:threegroup} illustrates how the  differentiation gain changing changes in  resource $S$.

\begin{figure}[htb]
\begin{minipage}[c]{0.5\textwidth}
\centering
\tabcaption{Parameter settings of a three-group example}
\begin{tabular}{l|cc|cc|cc|c}
  \hline
   & $\theta_1$ & $N_1$ & $\theta_2$ & $N_2$ & $\theta_3$ & $N_3$ & $\overset{}{\bar{\theta}} $\\
  \hline
  Case 1 &9  & 10 & 3 & 10 &1  & 80 &2\\
  \hline
  Case 2 & 3 & 33 & 2 & 33 & 1 & 34 &2\\
  \hline
  Case 3 & 2.2 & 80 & 1.5 & 10 & 1 & 10 &2\\
  \hline
\end{tabular}
\label{tab:three_group}
\end{minipage}\\[4pt]
\begin{minipage}[c]{0.45\textwidth}
\centering
\includegraphics[scale=0.3]{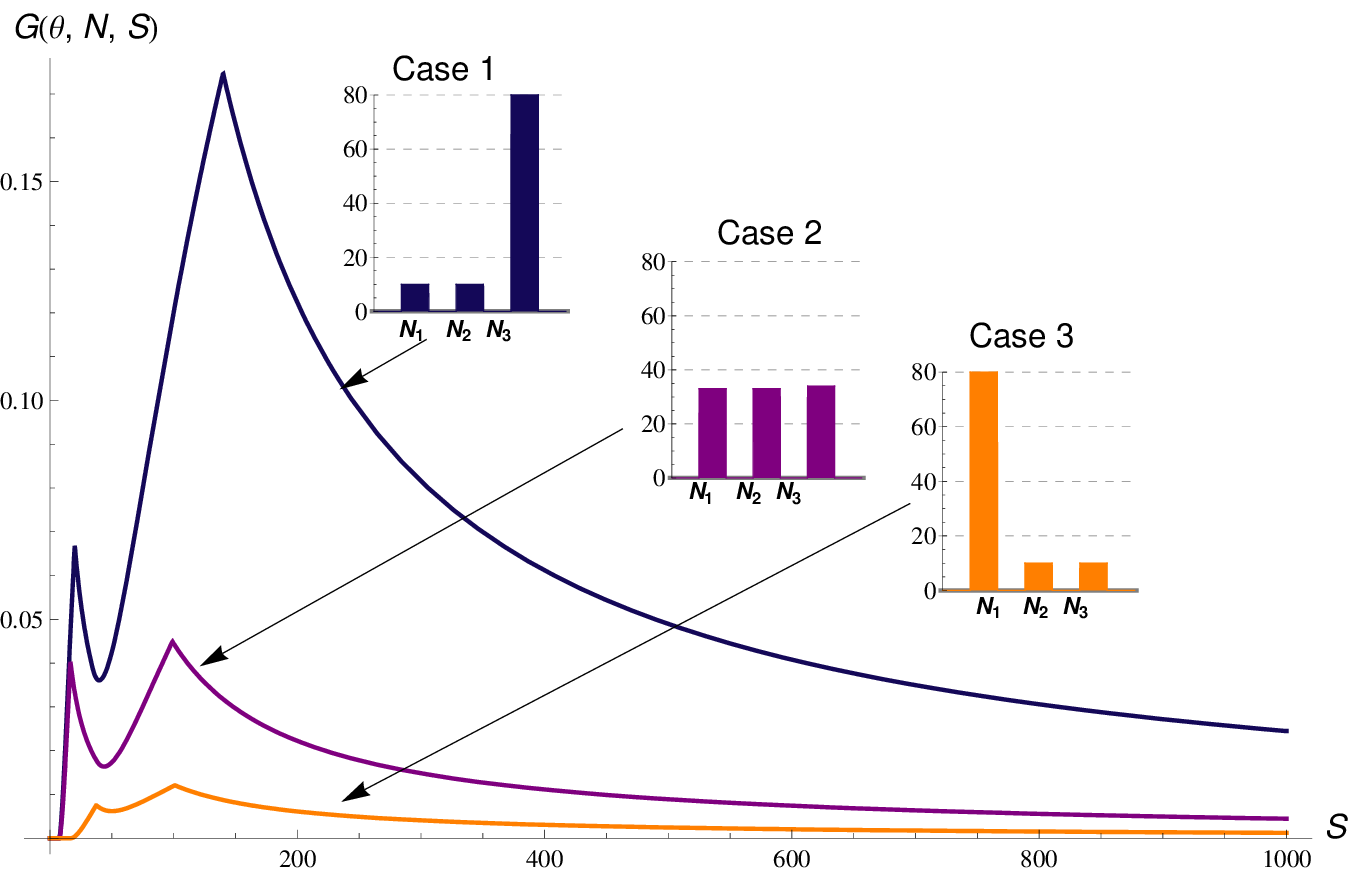}
\caption{An example of the revenue gain of a three-group market with the same average willingness to pay}
\label{fig:threegroup}
\end{minipage}
\end{figure}

Similar as the analytical study of the two-group case, Fig.~\ref{fig:threegroup} shows that the revenue gain is large only when the high willingness to pay users are minorities
(e.g. case 1) in the effective market and the resource is limited  but not too small ($100\le S\le 150$ in all three cases). When resource $S$ is large enough (e.g., $\ge 150$), the gain will gradually diminish to zero as the resource increases. For each curve in Fig.~\ref{fig:threegroup}, there are two peak points. Each peak point represents a change of the effective market threshold in the $SP$ scheme, i.e., when the resource allocation to a group becomes zero. In numerical studies of networks with $I>3$ groups  (not shown in this paper), we have observed the similar conditions for achieving a large differentiation gain and the phenomenon of $I-1$ peak points.

\subsection{What is the best tradeoff of Partial Price Differentiation?}
\label{sub:PPD_Numerical}
In Section~\ref{sec:PPD}, we design Algorithm~\ref{alg:3} that
optimally solves the $PP$ problem with a polynomial complexity.  Here we study the tradeoff between total revenue and implementational complexity.

To illustrate the tradeoff, we consider a five-group example with parameters shown in Table~\ref{tab:ppd}. Note that high willingness to pay users are minorities here. Figure~\ref{fig:ppd_example} shows the revenue gain $G$ as a function of total resource $S$ under different $PP$ schemes (including $CP$ scheme as a special case), and Fig.~\ref{fig:threshold} shows how the effective market thresholds change with the total resource.

\begin{table}[htb]
\centering \caption{Parameter setting of a five-group example}
\label{tab:ppd}
\begin{tabular}{|c|c|c|c|c|c|}
\hline
group  index $i$ &1&2&3&4&5\\
\hline \hline
$\theta_i$&16&8&4&2&1\\
\hline
$N_i$&2&3&5&10&80\\
\hline
\end{tabular}
\end{table}

We enlarge Fig.~\ref{fig:ppd_example} and Fig.~\ref{fig:threshold} within the range of $S\in[0,50]$, which is the most complex and interesting part due to several peak points.
Similar as Fig.~\ref{fig:threegroup}, we observe $I-1=4$ peak points for each curve in Fig.~\ref{fig:ppd_example}. Each peak point again represents a change of effective market threshold of the single pricing scheme, as we can easily verify by comparing Fig.~\ref{fig:threshold} with Fig.~\ref{fig:ppd_example}.

\begin{figure}[htb]
\centering
\includegraphics[scale=0.35]{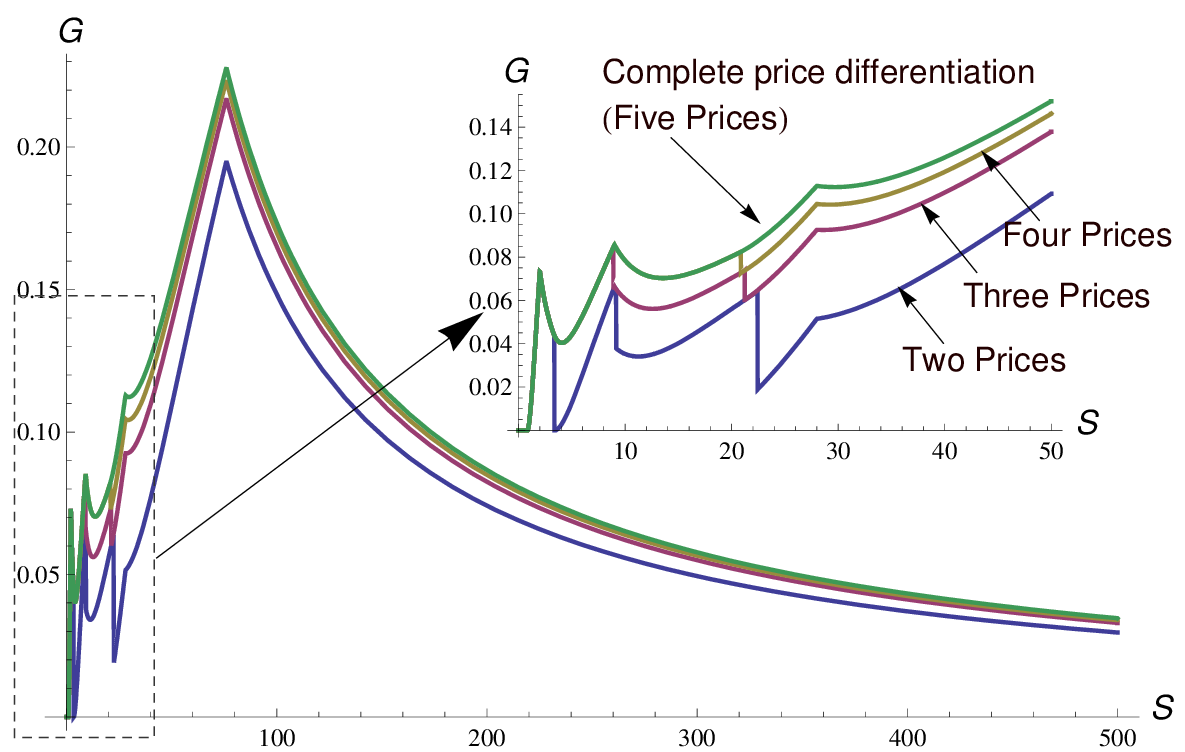}
\caption{Revenue gain of a five-group example under different price
differentiation schemes} \label{fig:ppd_example}
\vspace{3mm}
\includegraphics[scale=0.35]{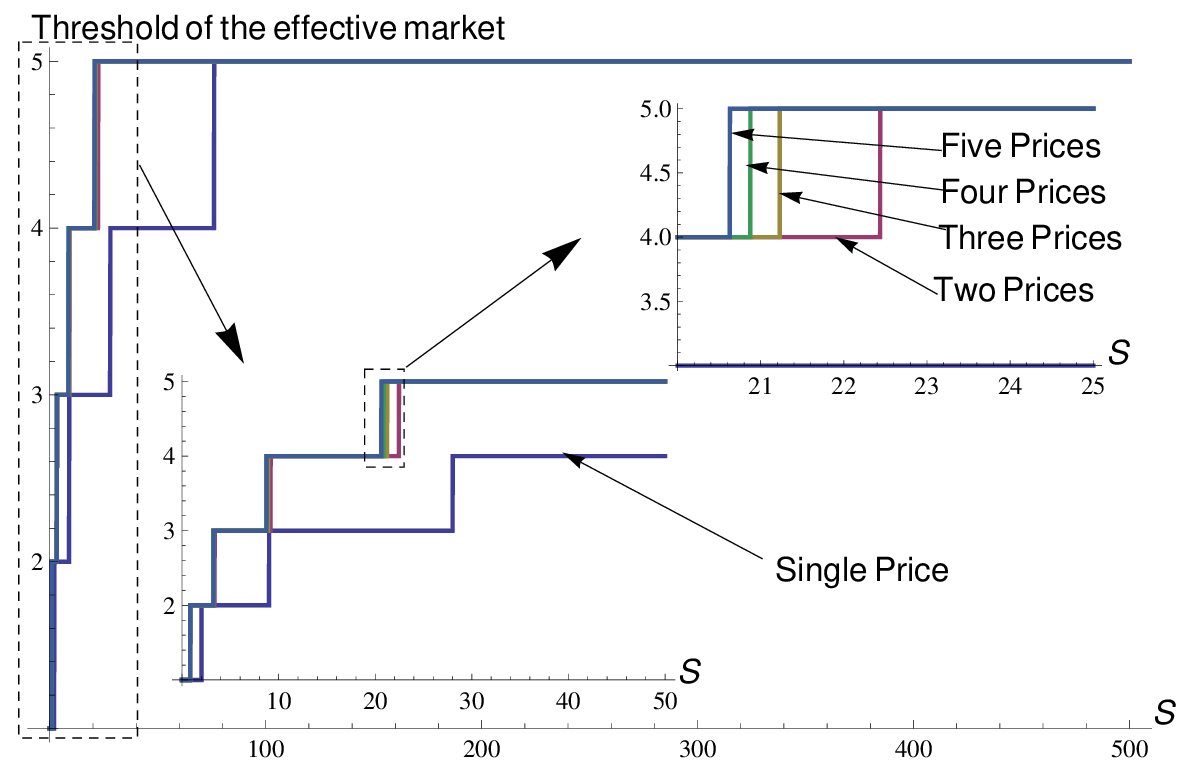}
\caption{Corresponding thresholds of effective markets of Fig.~\ref{fig:ppd_example}'s example} 
\label{fig:threshold}
\end{figure}

As the resource $S$ increases from $0$, all gains in Fig.~\ref{fig:ppd_example}  first overlap with each other, then the two-price scheme (blue curve) separates from the others at $S=3.41$, after that the three-price scheme (purple curve) separates at $S=8.89$, and finally the four-price scheme (dark yellow curve) separates  at near $S=20.84$.
These phenomena are due to the threshold structure of the $PP$ scheme. When the resource is very limited, the effective markets under all pricing scheme include only one group with the highest willingness to pay, and all pricing schemes coincide with the $SP$ scheme. As the resource increases, the effective market enlarges from two groups to finally five groups. The change of the effective market threshold can be directly observed in Fig.~\ref{fig:threshold}. Comparing across different curves in Fig.~\ref{fig:threshold}, we find that the effective market size is non-decreasing with the number of prices  for the same resource $S$. This agrees with our intuition in Section~\ref{sub_sp}, which states that the size of effective market indicates the degree of differentiation.

Figure~\ref{fig:ppd_example} provides the service provider a global picture of choosing the most proper pricing scheme according to achieve the desirable financial target under a certain parameter setting.
For example, if the total resource $S=100$, the two-price
scheme seems to be a sweet spot, as it achieves a differential gain of $14.8\%$ comparing to the $SP$ scheme and is only $2.4\%$ worse than the $CP$ scheme with five prices.

\section{Conclusion}
In this paper, we study the revenue-maximizing problem for a
monopoly \rev{service provider} under both complete and incomplete network information. Under complete information, our focus is to investigate the tradeoff between the  total revenue and the implementational complexity (measured in the number of pricing choices available for users). Among the three  pricing differentiation schemes we proposed (\ie complete,  single, and partial), the partial price differentiation is the most general one and includes the other two as special cases.
By exploiting the unique problem structure, we designed an algorithm that computes the optimal partial pricing scheme  in polynomial time, and numerically quantize the tradeoff between implementational complexity and total revenue.
Under incomplete information, designing an incentive-compatible  differentiation pricing scheme is difficult in general. We show that when the users are significantly different, it is possible to design a quantity-based pricing scheme that achieves the same maximum revenue as under complete information.



\begin{IEEEbiography}[{\includegraphics[width=1in,height=1.25in,clip,keepaspectratio]{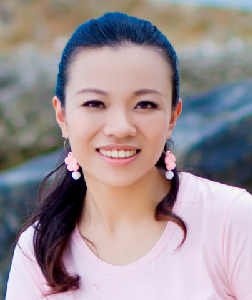}}]{Shuqin Li (S'09)} received Ph.D. in Information Engineering from The Chinese University of Hong Kong in 2012. She now works as a research scientist in Bell labs Shanghai, Alcatel-Lucent Shanghai Bell Co., Ltd. Her research interests include resource allocation, pricing and revenue management in communication networks, applied game theory, contract theory and incentive mechanism design in network economics, network coding and stochastic network optimization.
\end{IEEEbiography}

\begin{IEEEbiography}[{\includegraphics[width=1in,height=1.25in,clip,keepaspectratio]{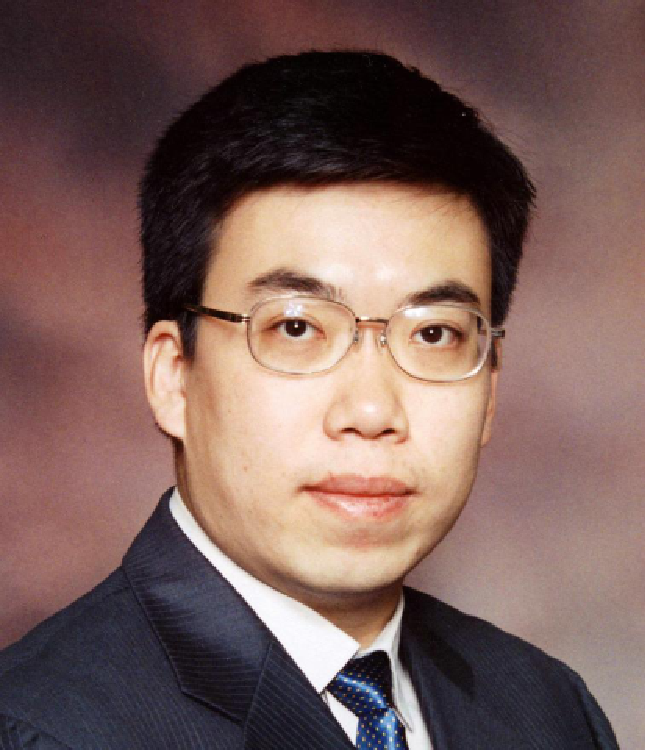}}]{Jianwei Huang (S'01-M'06-SM'11)}
is an Assistant Professor in the Department of Information Engineering at the Chinese University of Hong Kong. He received Ph.D. in Electrical and Computer Engineering from Northwestern University in 2005. He worked as a Postdoc Research Associate in the Department of Electrical Engineering at Princeton University during 2005-2007. 

Dr. Huang has served as the Editor of IEEE Journal on Selected Areas in Communications - Cognitive Radio Series, Editor of IEEE Transactions on Wireless Communications, and Guest Editor of IEEE Journal on Selected Areas in Communications and IEEE Communications Magazine. He is the Chair of IEEE ComSoc Multimedia Communications Technical Committee, a Steering Committee Member of IEEE Transactions on Multimedia and IEEE ICME. He has served as the TPC Co-Chair of IEEE GLOBECOM Selected Areas of Communications Symposium 2013, IEEE WiOpt 2012, IEEE ICCC Communication Theory and Security Symposium 2012, IEEE GlOBECOM Wireless Communications Symposium 2010, IWCMC Mobile Computing Symposium 2010, and GameNets 2009. He is the recipient of five Best Paper Awards, including the IEEE Marconi Prize Paper Award in Wireless Communications in 2011. He also received the IEEE ComSoc Asia-Pacific Outstanding Young Researcher Award in 2009. For more information, please see http://ncel.ie.cuhk.edu.hk. 
\end{IEEEbiography}

\appendix

\subsection{Complete Price Differentiation under complete information with General Utility Functions}
\label{sec:general_utility}
In this section, we extend the solution of complete price differentiation problem to general form of increasing and concave utility functions $u_i(s_i)$. We denote $R_i(s_i)$ as the revenue collected from one user in group $i$. Based on the stackelberg model, the prices satisfy $p_i=u_i'(s_i)$,
$s_i\ge 0$ $i\in\mathcal{I}$, thus
\begin{equation}
\label{eq:revenue}
    R_i(s_i)=u_i'(s_i)s_i, s_i\ge 0.
\end{equation}

Therefore, we can rewrite the Complete Price differentiation problem with General utility function ($CPG$) as follows.
\begin{eqnarray}
    CPG:& \underset{\boldsymbol{s}\geq 0,\boldsymbol{n}}{\text{maximize}}& \sum\limits_{i\in\mathcal{I}} n_i R_i(s_i) \nonumber\\
    &\text{subject to}&  n_i \in\{0,\ldots,N_{i}\} \;\;,\;\;  i\in\mathcal{I}\\
    &&\sum\limits_{i\in\mathcal{I}} n_i s_{i} \le S \label{eq:resource_con}
\end{eqnarray}

By similar solving technique in Sec.~III of the revised paper,
we can solve the $CPG$ problem by decomposing it into two subproblems: resource allocation subproblem $CPG_1$, and admission control subproblem $CPG_2$.
In subproblem $CPG_1$, for given $\boldsymbol{n}$, we solve
\begin{eqnarray*}
    CPG_1:& \underset{\boldsymbol{s}\geq 0}{\text{maximize}} & \sum\limits_{i\in\mathcal{I}} n_i R_i(s_i) \nonumber \\
    &\text{subject to}& \sum\limits_{i\in\mathcal{I}} n_i s_{i} \le S
\end{eqnarray*}
After solving the optimal resource allocation
$s^*_i(\boldsymbol{n})$, $i\in\mathcal{I}$, we further solve admission control
subproblem:
\begin{eqnarray*}
    CPG_2:& \underset{\boldsymbol{n}}{\text{maximize}} & \sum\limits_{i\in\mathcal{I}} n_i R_i(s^*_i(\boldsymbol{n})) \\
    &\text{subject to}:& n_i \in\{0,\ldots,N_{i}\}.
\end{eqnarray*}

We are especially interested in the case that constraint (\ref{eq:resource_con}) is active in the $CPG$ problem, which means the resource bound is tight in the considered problem; otherwise, the $CPG$ problem degenerates to a revenue maximization without any bounded resource constraint. We can prove the following results.
\begin{proposition}
\label{pro:general_utility_1} If the resource constraint
(\ref{eq:resource_con}) is active in the optimal solution of
the $CPG$ problem (or the $CPG_1$ subproblem), then  one of optimal
solutions of the $CPG_2$ subproblem is
\begin{equation}
    n_i^*=N_i,\;i\in \mathcal{I}.
\end{equation}
\end{proposition}

\begin{proof}
We first release the
variable $n_i$ to real number, and calculate the first
derivative as follows:
\begin{equation}
\frac{\partial R_i}{\partial n_i}=R_i(s^*_i)+n_i\,\frac{\partial
R_i(s^*_i)}{\partial s_i}\,\frac{\partial s^*_i}{\partial n_i},
\;\;i\in\mathcal{I}.
\end{equation}

Plugging (\ref{eq:revenue}),
$R'_i(s_i)=u_i''(s_i)\,s_i\,+\,u_i'(s_i)$, and we have
\begin{equation}
\label{eq:revenue_1} \frac{\partial R_i}{\partial
n_i}=u'_i(s_i^*)\left( s_i^*+n_i\,\frac{\partial s^*_i}{\partial
n_i}\right)+n_i\,u''_i(s_i^*)\,s_i^*\,\frac{\partial s^*_i}{\partial
n_i}, \; \;i\in\mathcal{I}.
\end{equation}

Since the resource constraint (\ref{eq:resource_con}) is active in
the optimal solution of the $CPG_1$ subproblem, that is,
$\sum\limits_{i\in\mathcal{I}} n_i s_{i} = S$, by taking derivative
of $n_i$ in both sides of it, we have:
\begin{equation}
\label{eq:derivative}
    s^*_i\,+\, n_i\frac{\partial s^*_i}{\partial n_i}= 0.
\end{equation}

Substituting (\ref{eq:derivative}) into (\ref{eq:revenue_1}), since we assume the
utility function $u_i(s_i)$ is increasing and concave function, then
we have
\begin{equation}
\frac{\partial R_i}{\partial n_i}=-u''_i(s_i^*)\,{s_i^*}^2\ge 0, \;
\;i\in\mathcal{K}.
\end{equation}

Thus we can conclude that one of optimal solutions for the $CPG_2$
subproblem is $n_i^*=N_i$, $i\in \mathcal{I}$.
\end{proof}

Proposition~\ref{pro:general_utility_1} points out that when the resource constraint (\ref{eq:resource_con}) is active, the $CPG$ problem can be greatly simplified: its solution can be obtained by solving the $CPG$ subproblem with
parameters $n_i=N_i$, $i=1,\dots, I$. The following proposition
provides a sufficient condition that the resource constraint (\ref{eq:resource_con}) is active.

\begin{proposition}
If $u''_i(s_i)s_i+u_i(s_i)>0$, $s_i\ge 0$, $i\in \mathcal{I}$, then
the resource constraint is active at the optimal solution.
\end{proposition}

\begin{proof}
Let $\lambda$ and $\mu_i$, $i\in\mathcal{I}$, be the Lagrange
multiplier of constraint (\ref{eq:resource_con}) and
$s_i\ge 0,i\in\mathcal{I}$ respectively, thus the KKT
conditions of the $CGP_1$ subproblem  is given as follows:
\begin{align*}
 n_i\frac{\partial R_i(s^*_i)}{\partial s_i}-n_i\lambda^*+\mu_i^*&=0, \, i\in\mathcal{I};\\
\lambda^*\left(\sum_{i\in\mathcal{I}}n_is_i^*-S\right)&=0;\\
\mu_i^*s_i^*&=0;\\
 \lambda^*&\ge 0;\\
 \mu_i^*&\ge 0,\, i\in\mathcal{I};\\
 s_i^*&\ge 0,\, i\in\mathcal{I}.
\end{align*}

We denote $\mathcal{K}:=\{i\,|\,s_i^*>0\}$, and
$\bar{\mathcal{K}}:=\{i\,|\,s_i^*=0\}$.

For $i\in \mathcal {K}$:
\begin{align}
 \frac{\partial R_i(s^*_i)}{\partial s_i}=\lambda^*,\; i\in\mathcal{I};\label{eq:KKT_derivative}\\
 \lambda^*\left(\sum_{i\in\mathcal{K}}n_is_i^*-S\right)=0. \label{eq:KKT_complementary}
\end{align}

For $i\in \bar{\mathcal {K}}$:
\begin{align}
 \frac{\partial R_i(0)}{\partial s_i}\le\lambda^*,\; i\in\mathcal{I};
\end{align}

Since $u''_i(s_i)s_i+u_i(s_i)>0$, $s_i\ge 0$, $i\in \mathcal{I}$ and
(\ref{eq:KKT_derivative}), we have
$$ \lambda^*=\frac{\partial R_i(s^*_i)}{\partial s_i}=u''_i(s^*_i)s^*_i+u_i(s^*_i)>0.$$

By (\ref{eq:KKT_complementary}), we must have
$\sum_{i\in\mathcal{I}}n_is_i^*-S=0$, that the resource constraint
is active at the optimal solution.
\end{proof}

Next, let us discuss how to calculate the optimal solution. To
guarantee uniqueness resource allocation solution, we assume that revenue in  is a strictly concave function of the demand\footnote{This assumption has been frequently used in the revenue management literature \cite{talluri2005theory}.}, \ie $\frac{\partial^2 R_i(s_i)}{\partial s_i^2}< 0$, $i\in \mathcal{I}$.
Thus we have the following theorem.
\begin{theorem}
\label{thm:general_utility}
If $\frac{\partial^2 R_i(s_i)}{\partial s_i^2}< 0$, $i\in \mathcal{I}$, then there exists an optimal solution of the $CGP$ problem as follows:
\begin{itemize}
    \item All users are admitted: $n^*_i=N_i$ for all $i\in\mathcal{I}.$
    \item There exist a value $\lambda^\ast$ and a group index threshold $K^{cp} \leq I$, such that only the top $K^{cp}$ groups of users receive positive resource allocations,
\begin{equation}
s_i^*=
    \begin{cases}
\frac{\partial R_i}{\partial s_i}^{-1}(\lambda^*),& i\in\mathcal{K}\,;\\
0, & \text{otherwise}.
     \end{cases}
\end{equation}where values of $\lambda^*$ and effective market $\mathcal{K}$ can  be computed as in Algorithm \ref{alg:general_utitlity}.
\end{itemize}
\end{theorem}

In Algorithm~\ref{alg:general_utitlity}, we use notation $f^{-1}$ denotes its inverse function, and rearrange the group index
satisfying $\frac{\partial R_{(1)}}{\partial s_{(1)}}^{-1}(0)\ge
\frac{\partial R_{(2)}}{\partial s_{(2)}}^{-1}(0)\ge\dots\ge
\frac{\partial R_{(I)}}{\partial s_{(I)}}^{-1}(0)$.
\begin{algorithm}[ht]
\caption{Search the threshold for general utility function}
\label{alg:general_utitlity}
\begin{algorithmic}[1]
\State $k\gets I$, $\lambda \gets \frac{\partial R_{(k)}}{\partial
s_{(k)}}^{-1}(0)$ \While{$\sum\limits_{i=1}^k \,
n_{(i)}\left(\frac{\partial R_{(i)}}{\partial
s_{(i)}}^{-1}(\lambda)\right)^+\,\ge\,S,$} \State $k \gets k-1$
\State $\lambda \gets \frac{\partial R_{(k)}}{\partial
s_{(k)}}^{-1}(0)$ \EndWhile \State \textbf{Return}
$\mathcal{K}=\{(1),(2),\dots,(k)\}$
\end{algorithmic}
\end{algorithm}

\begin{remark}
The complexity of Algorithm~\ref{alg:general_utitlity} is also  $\mathcal{O}(I)$,  \ie linear in the number of user groups (not the number of users).
\end{remark}

\begin{remark}
There are several functions satisfying the technical conditions in Theorem~\ref{thm:general_utility}, \eg the standard  $\alpha$-fairness functions
$$
u_i(s_i)=
    \begin{cases}
(1-\alpha)^{-1}s_i^{1-\alpha},& 0\le\alpha< 1;\\
\log s_i, & \alpha=1.
     \end{cases}
$$
\end{remark}


\subsection{Proof of Proposition~\ref{pro:comparison_sp_cd}}
\label{appendix_cp_sp}
\begin{proof}
We first focus on the key water-filling problems that we solve for the two
pricing schemes (the $CP$ scheme on the LHS and the $SP$ scheme on the RHS):
\begin{equation}
\label{eq:initial}
\sum_{i\in \mathcal{I}} N_i\left(\sqrt{\frac{\theta_i}{\lambda^*}}-1\right)^+
\;=\;S\;=\;
\sum_{i\in \mathcal{I}} N_i\left({\frac{\theta_i}{p^*}}-1\right)^+
\end{equation}

Let $\theta=\frac{{p^*}^2}{\lambda^*}$ be the solution of the equation of
$\sqrt{\frac{\theta}{\lambda^*}}=\frac{\theta}{p^*}$. By comparing it with $\theta_i$,
$i\in \mathcal{I}$, there are three cases:
\begin{itemize}
    \item Case 1:

$\theta> \theta_1\Rightarrow \sqrt{\frac{\theta_i}{\lambda^*}}\; = \; \frac{\sqrt{\theta_i}\sqrt{\theta}}{p^*}\; > \;\frac{\theta_i}{p^*},\; \forall\, i\in\mathcal{I}.$

 This case can not be possible. Since if every term in the left summation
 is strictly larger than its counterpart in the right
summation, then (\ref{eq:initial}) can not hold.

\item Case 2:
$\theta_I\ge \theta\Rightarrow\;\sqrt{\frac{\theta_i}{\lambda^*}}\; =
\;
\frac{\sqrt{\theta_i}\sqrt{\theta}}{p^*}\;\; \le\; \;\frac{\theta_i}{p^*},\; \forall\, i\in\mathcal{I}.$
Similarly as Case 1, it can not hold, either.

\item Case 3:
$\exists\,k,  \;\; s.t.\;1\le k<I\;{\rm{and}}\;\theta_k\ge\theta\ge\theta_{k+1}$
$$\Rightarrow\left\{{\begin{array}{ll}
\sqrt{\frac{\theta_i}{\lambda^*}}\;=&\underbrace{ \frac{\sqrt{\theta_i}\sqrt{\theta}}{p^*}\; \le \;\frac{\theta_i}{p^*}, i=1,2,\dots,k;}_\text{The equality holds only when $\theta=\theta_k$ and $i=k$.} \\
\\
\sqrt{\frac{\theta_i}{\lambda^*}} \;=&\underbrace{\frac{\sqrt{\theta_i}\sqrt{\theta}}{p^*}\; \ge \;\frac{\theta_i}{p^*},\;  i=k+1,\dots,I.}_\text{The equality holds only when $\theta=\theta_{k+1}$ and $i=k+1$.} \\
\end{array}} \right.$$
Similar argument as the above two case,  we have $K^{cp}\ge k$ and $K^{sp}\ge k$, otherwise
(\ref{eq:initial}) can not hold. Further, $K^{cp}\ge
K^{sp}$, since if $\frac{\theta_{K^{sp}}}{p^*}-1>0$, then
$\sqrt{\frac{\theta_{K^{cp}}}{\lambda^*}}-1>0$.
\end{itemize}
By Theorems~\ref{thm:opt} and \ref{thm:sp}, we prove the proposition.
\end{proof}

\subsection{Proof of Lemma~\ref{le:consecutive_index}}
\label{proof_le:consectutive_index}

We can first prove the following lemma.
\begin{lemma}
\label{le:add_one_group}
Suppose an effective market of the single pricing scheme is denoted as $\mathcal{K}=\{1,2,\dots,K\}$. If we add a new group $v$ of $N_v$ users with $\theta_v>\theta_K$, then the revenue strictly
increases.
\end{lemma}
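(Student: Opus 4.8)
The plan is to reduce the statement to a one-line comparison of the single-pricing revenue formula evaluated on two closely related effective markets. Writing $\phi(\mathcal{T})\triangleq\frac{S\sum_{i\in\mathcal{T}}N_i\theta_i}{S+\sum_{i\in\mathcal{T}}N_i}$, Theorem~\ref{thm:sp} together with \eqref{eq:sp_r_simplified} tells me that the optimal $SP$ revenue for any collection of groups equals $\phi$ evaluated at its effective market. The first and most important step is to upgrade this into the characterization $R^{sp}=\max_{k}\phi(\{1,\dots,k\})$, where the maximum runs over all prefixes (top-$k$ sets) of the groups sorted by decreasing $\theta$. I would obtain this from unimodality of the sequence $\phi(\{1,\dots,k\})$: a direct calculation gives $\phi(\{1,\dots,k\})>\phi(\{1,\dots,k-1\})$ if and only if $\theta_k>p(k-1)$, with $p(k)$ the price from Algorithm~\ref{alg:2}, and a second short calculation shows that $p(k)$ is the weighted average of $p(k-1)$ and $\theta_k$ with weights $S+\sum_{i<k}N_i$ and $N_k$. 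Hence once $\theta_k\le p(k-1)$ holds, both $p(k)\le p(k-1)$ and $\theta_{k+1}\le p(k)$ follow, and the decreasing phase propagates to all larger indices. Thus $\phi$ rises up to the threshold $K^{sp}$ and falls thereafter, which is exactly the behaviour exploited by Algorithm~\ref{alg:2} and yields the max-over-prefixes formula.

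Next I instantiate this on the two instances. For the original market the effective set is $\{1,\dots,K\}$, so $R_{old}=\phi(\{1,\dots,K\})$ and the optimal price $p^{*}=\frac{\sum_{i=1}^{K}N_i\theta_i}{S+\sum_{i=1}^{K}N_i}$ satisfies $\theta_K>p^{*}$ because group $K$ receives strictly positive resource. Since $\theta_v>\theta_K$, the set $\{1,\dots,K,v\}$ is precisely the top $K+1$ groups after inserting $v$ into the sorted order, hence a legitimate prefix of the augmented market. The max-over-prefixes characterization then yields the lower bound $R_{new}\ge\phi(\{1,\dots,K,v\})=\frac{S\left(\sum_{i=1}^{K}N_i\theta_i+N_v\theta_v\right)}{S+\sum_{i=1}^{K}N_i+N_v}$, and crucially I never need this prefix to be the \emph{actual} new effective market.

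Finally I compare the two values. Cross-multiplying the (positive) denominators, $\phi(\{1,\dots,K,v\})>\phi(\{1,\dots,K\})$ is equivalent to $\theta_v>\frac{\sum_{i=1}^{K}N_i\theta_i}{S+\sum_{i=1}^{K}N_i}=p^{*}$, which holds since $\theta_v>\theta_K>p^{*}$. Chaining the bounds gives $R_{new}\ge\phi(\{1,\dots,K,v\})>\phi(\{1,\dots,K\})=R_{old}$, the desired strict increase.

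The step that needs the most care, and the reason a naive argument fails, is that adding a high-$\theta$ group raises the optimal single price and can push the lowest originally-served groups out of the \emph{new} effective market; so one cannot simply keep serving $\{1,\dots,K,v\}$ at the old price $p^{*}$, as that violates the resource constraint~\eqref{resource_constraint}. The max-over-prefixes characterization is exactly what circumvents this difficulty, because it lets me use $\{1,\dots,K,v\}$ merely as a lower-bounding prefix rather than as the optimizer. Establishing that characterization rigorously, i.e. the unimodality of $\phi$, is therefore the real work; everything after it is a single algebraic comparison.
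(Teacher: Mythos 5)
Your proof is correct. The paper itself gives no argument for this lemma (it defers to the technical report \cite{tech_rep}), so there is no in-paper proof to compare against; I can only assess the proposal on its own terms. The reduction to $\phi(\mathcal{T})=\frac{S\sum_{i\in\mathcal{T}}N_i\theta_i}{S+\sum_{i\in\mathcal{T}}N_i}$ is consistent with Theorem~\ref{thm:sp} and \eqref{eq:sp_r_simplified} (the resource constraint binds at the optimum, so $R^{sp}=p^*S$), and your two computations check out: $\phi(\{1,\dots,k\})>\phi(\{1,\dots,k-1\})$ iff $\theta_k>p(k-1)$, and $p(k)$ is the stated weighted average, so once $\theta_k\le p(k-1)$ one gets $p(k)\in[\theta_k,p(k-1)]$ and hence $\theta_{k+1}<p(k)$, propagating the decreasing phase. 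This gives unimodality, identifies the peak with the Algorithm~\ref{alg:2} threshold, and justifies $R^{sp}=\max_k\phi(\{1,\dots,k\})$. You also correctly identify and avoid the one genuine pitfall: $\phi$ of an arbitrary prefix is not achievable revenue in general (at price $p(k)$ with $k$ past the threshold the demand falls short of $S$), so one cannot simply "keep serving $\{1,\dots,K,v\}$"; using that prefix only as a lower bound under the max-over-prefixes characterization, and noting $\theta_v>\theta_K>p^*$ for the strict final comparison, closes the argument cleanly.
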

\begin{proof}
We denote the single price before joining group~$v$ is $p$, the
price after joining group~$v$ is $p'$, the effective market become
$\mathcal{K}'$. By Theorem~\ref{thm:sp}, we have
$$p=\frac{\sum_{i=1}^K N_i\theta_i}{S+\sum_{i=1}^KN_i}\; \text{with}\;\;\theta_K>p\;\text{and}\;\;\theta_{K+1}\le p.$$
Since the optimal revenue is obtained by selling out the total
resource $S$, thus to prove the total revenue strictly increases if and only if we can prove $p'>p$. We consider the following two cases.
\begin{itemize}
    \item If after group~$v$ joining in, the new effective market satisfies $\mathcal{K}'=\mathcal{K}\cup \{v\}$, then we have
$$p'=\frac{\sum_{i=1}^K N_i\theta_i+N_v\theta_v}{S+\sum_{i=1}^KN_i+N_v}.$$
Since $\theta_v>\theta_K>p$, we have $p'>p$, due to the following
simple fact.
\begin{fact}
For any $a_1,\,b_1,\,a_2,\,b_2>0$, the following two inequality are
equivalent:
\begin{equation}
\label{eq:simple_fact}
\frac{a_1}{b_1}\ge\frac{a_2}{b_2}\Leftrightarrow
\frac{a_1}{b_1}\ge\frac{a_1+a_2}{b_1+b_2}\ge\frac{a_2}{b_2}.
\end{equation}
\end{fact}
     \item If after group~$v$ joining in, the new effective market shrinks, namely, $\mathcal{K}'\subset \mathcal{K}\cup \{v\},\;\mathcal{K}'\ne \mathcal{K}\cup \{v\}$, then we have $p'>\theta_{K}> p$.
\end{itemize}
\end{proof}

By the above Lemma~\ref{le:add_one_group}, we further prove Lemma~\ref{le:consecutive_index}.
\begin{proof}
We prove Lemma~\ref{le:consecutive_index} by contradiction. Suppose that the group indices of the
effective market under the optimal partition~$\boldsymbol{a}$ is not
consecutive. Suppose that group~$i$ is
not an effective group, and there exists some group~$j$, $j>i$,
which is an effective group. We consider a new
partition~$\boldsymbol{a}'$ by putting group~$i$ into the cluster to
which group~$j$ belongs, and keeping other groups unchanged.
According to Lemma~\ref{le:add_one_group}, the revenue under
partition~$\boldsymbol{a}'$ is greater than that under
partition~$\boldsymbol{a}$, thus partition~$\boldsymbol{a}$ is not optimal.
This contradicts to our assumption and thus completes the proof.
\end{proof}

\subsection{Proof of Theorem~\ref{th:consecutive}}
\label{proof_th:consecutive}
For convenience, we use the notation $(\cdots\cup\cdots|\cdots\cup\cdots|\cdots)$ to denote a partition with the groups between bars connected with ``$\cup$'' representing a cluster, e.g., three partitions for $J=2, K^{pp}=3$ are
$(1|2\cup3)$, $(1\cup 2\,|\,3)$ and $(1\cup3\,|\,2)$. In addition, we introduce the \emph{compound group} to simplify the notation of complex clusters with multiple groups. A cluster containing group $i$ can be simply represented as $Pre(i)\cup i\cup Post(i)$, where $Pre(i)$ (or $Post(i)$) refers as a compound group composing of all the groups with willingness to pay larger (or
smaller) than that of group~$i$ in the cluster. Note that the compound groups can be empty in certain cases.

Before we prove the general case in Theorem~\ref{th:consecutive},  we first prove the results is true for the following two special cases in Lemma~\ref{le:K_0=3} and Lemma~\ref{le:K=4}.
\begin{lemma}
\label{le:K_0=3}
For a three-group effective market with two prices, i.e., $K^{pp}=3$, $J=2$, an optimal partition involves consecutive group indices within clusters.
\end{lemma}
\begin{proof}
There are three partitions for  $K^{pp}=3$, $J=2$, and only $(1\cup
3\,|\,2)$ is with discontinuous group index within clusters. To show our result, we only need to prove one of partitions with group consecutive is better than $(1\cup 3\,|\,2)$.
We have two main steps in this proof, first we prove this result is true for $PP$ problem without consider Constraint (\ref{eq:pd_threshold_con}). Further, we show that Constraint (\ref{eq:pd_threshold_con}) will not affect the optimality of partitions with consecutive group indices within each cluster.

\textbf{\emph{Step 1:} (Without Constraint (\ref{eq:pd_threshold_con}))}

Without considering Constraint (\ref{eq:pd_threshold_con}), we want show that $\boldsymbol{a}_1=(1\cup 2\,|\,3)$ is
always better than $\boldsymbol{a}_2=(1\cup 3\,|\,2)$. Mathematically, what we try to
prove is:
\begin{equation}
v(\boldsymbol{a}_2)>v(\boldsymbol{a}_1).
\label{eq:12>13}
\end{equation}
where $v(\boldsymbol{a}_2)=(N_1+N_3)\sqrt{\frac{N_1\theta_1+N_3\theta_3}{N_1+N_3}}+N_2\sqrt{\theta_2}$, and
$v(\boldsymbol{a}_1)=(N_1+N_2)\sqrt{\frac{N_1\theta_1+N_2\theta_2}{N_1+N_2}}+N_3\sqrt{\theta_3}$.
With the new notation
$$\Delta V(i,j):=(N_i+N_j)\sqrt{\frac{N_i\theta_i+N_j\theta_j}
{N_i+N_j}}-N_i\sqrt{\theta_i}-N_j\sqrt{\theta_j},$$
it is easy to see that (\ref{eq:12>13}) is equivalent to the following inequality: 
\begin{equation}
\Delta V(1,3)>\Delta V(1,2). \label{eq:v_13}
\end{equation}

We prove the inequality (\ref{eq:v_13}) by considering the following
two cases.

a) If $N_1\le N_2$, we define a function of $x$ as follows,
\begin{eqnarray*}
g(j;x)\!:=\!(\!N_j\!+\!N_1\!)\!\sqrt{\!\frac{N_j\theta_j\!+\!N_1(\theta_j\!+\!x)}{N_j+N_1}}\!-\!N_1\!\sqrt{\theta_j\!+\!x}
\!-\!N_j\!\sqrt{\theta_j}.
\end{eqnarray*}
It is easy to check that
$$g(j;x)|_{x=\theta_{1}-\theta_{j}}=\Delta V(1,j),\;\text{and}\;\, g(j;x)|_{x=0}=0;$$
and if $x>0$, then
\begin{equation*}
\label{eq:closed_neighbor_D}
 g'(j;x)\!=\!\frac{\partial g(j;x)}{\partial x}\!=\!\frac{N_{1}}{2}  \left(\!\!\frac{1}{\sqrt{\frac{N_j\theta_j+N_1(\!\theta_j+x\!)}{N_j+N_1}}}\!-\!\frac{1}{\sqrt{\theta_{j}+x}}\!\!\right)\!> \!0,
\end{equation*}
\begin{equation*}
\text{and}\;\label{eq:closed_neighbor_D_D}
 \frac{\partial g'(j;x)}{\partial \theta_j}=\frac{N_{1}}{4}\!\! \left(\!\!\frac{1}{\left(\theta_{j}+x\right)\!^{1.5}}\!-\!\frac{1}{\left(\frac{N_j\theta_j+N_1(\theta_j+x)}{N_j+N_1}\right)\!^{1.5}}\!\!\right)\!\!<\!0.
\end{equation*}
Since $\theta_2>\theta_3$, it immediately follows that
$$    g'(3;x)\ge
\frac{N_{1}}{2}
\left(\frac{1}{\sqrt{\frac{N_3\theta_2+N_1(\theta_2+x)}{N_1+N_3}}}-\frac{1}{\sqrt{\theta_{2}+x}}\right),
$$
Since $N_2\ge N_1$, then we have
$$    g'(3;x)\ge
\frac{N_{1}}{2}\!
\left(\!\frac{1}{\sqrt{\frac{N_3\theta_2+N_1(\theta_2+x)}{N_1+N_3}}}-\frac{1}{\sqrt{\theta_{2}+x}}\!\right)\!\ge
g'(2;x),
$$
Thus, it follows
\begin{equation*}
    \Delta V(1,3)=\int^{\theta_1\!-\!\theta_3}_0\!\!\!\!\!\!\!\!\!g'(3;x)dx >\int^{\theta_1\!-\!\theta_2}_0\!\!\!\!\!\!\!\!\!g'(2;x)dx=\Delta V(1,2),
\end{equation*}
i.e., (\ref{eq:v_13}) is obtained.

Let us see a special case of  (\ref{eq:v_13}). When  $N_1=N_2$, then
$$\Delta V(1,2)\!=\!(N_1\!\!+\!\!N_1)\sqrt{\frac{N_1\theta_2\!\!+\!\!N_1\theta_1}{N_1+N_1}}\!-\!N_1\sqrt{\theta_1}\!-\!N_1\sqrt{\theta_2},$$
then we have
\begin{equation}
\label{eq:3group_N1N2}
    \Delta V(1,3)\!>\!(N_1\!\!+\!\!N_1)\sqrt{\frac{N_1\theta_2\!\!+\!\!N_1\theta_1}{N_1+N_1}}\!-\!N_1\sqrt{\theta_1}\!-\!N_1\sqrt{\theta_2}.
\end{equation}
Notice that although (\ref{eq:3group_N1N2}) is defined with the
assumption that $N_1\leq N_2$, it also holds for the case $N_1>N_2$
as (\ref{eq:3group_N1N2}) does not contain the parameter $N_2$. This
result will be used in the proof later.

b) If $N_1> N_2$, we define a function of $m$ as
$$f(m):=(N_1+m)\sqrt{\frac{N_1\theta_1+m\theta_2}{N_1+m}}-N_1\sqrt{\theta_1}-m\sqrt{\theta_2}.$$

It is easy to obtain that
\begin{eqnarray}
\frac{df(m)}{dm}
=\frac{\left(\sqrt{\frac{N_1\theta_1+m\theta_2}{N_1+m}}-\sqrt{\theta_2}\right)^2}{2\sqrt{\frac{N_1\theta_1+m\theta_2}{N_1+m}}}>0,\nonumber
\end{eqnarray}

i.e., the function $f$ is an increasing function of $m$.

Thus it follows that
\begin{equation*}
    \Delta V(1,2)=f(N_2)<f(N_1)\overset{(a)}{<}\Delta V(1,3),
\end{equation*}
where $(a)$ results from (\ref{eq:3group_N1N2}), the right hand side
of which is equal to $f(N_1)$.

\textbf{\emph{Step 2:} (Checking Constraint (\ref{eq:pd_threshold_con}))}

We want to prove that $\boldsymbol{a}_1$ satisfying Constraint (\ref{eq:pd_threshold_con}) is the sufficient condition of $\boldsymbol{a}_2$ satisfying (\ref{eq:pd_threshold_con}).

Consider if  $\boldsymbol{a}_1$ does not satisfy (\ref{eq:pd_threshold_con}), it means
\begin{equation*}
\sqrt{\theta_3}\le \sqrt{\lambda(\boldsymbol{a}_1)} =\sqrt{\frac{v(\boldsymbol{a}_1)}{S+\sum_{i=1}^3N_i}}.
\end{equation*}
By the result in Step 1, we know that $v(\boldsymbol{a}_1)<v(\boldsymbol{a}_2)$, then we have
\begin{equation*}
\sqrt{\theta_3}< \sqrt{\lambda(\boldsymbol{a}_2)} =\sqrt{\frac{v(\boldsymbol{a}_2)}{S+\sum_{i=1}^3N_i}},
\end{equation*}
and further
\begin{equation*}
{\theta_3}< \sqrt{\theta_3\lambda(\boldsymbol{a}_2)}<\sqrt{\frac{N_3\theta_3+N_1\theta_1}{N_1+N_3}\lambda(\boldsymbol{a}_2)} =\sqrt{\theta^1\lambda(\boldsymbol{a}_2)}.
\end{equation*}
It means $\boldsymbol{a}_2$ can not satisfy (\ref{eq:pd_threshold_con}) either. Thus we see that constraint (\ref{eq:pd_threshold_con}) actually does not affect the result in Step 1.
In conclusion, we show that in a simple case with $K^{pp}=3$, $J=2$,  an optimal partition involves consecutive group indices within clusters.
\end{proof}

Further, based on Lemma~\ref{le:K_0=3} we prove another simple special case.
\begin{lemma}
\label{le:K=4}
For a four-group effective market with two prices, i.e., $K^{pp}=4$, $J=2$, an optimal partition involves consecutive group indices within clusters.
\end{lemma}
\begin{proof}
For $K^{pp}=4$ and $J=2$ case, there are total seven possible partitions.
Three among them are with consecutive group index, $(1\,|\,2\cup
3\cup 4)$, $(1\cup 2\,|\,3\cup 4)$ and $(1\cup 2\cup 3\,|\,4)$. We
denote a set composed by these three partitions as $\Sigma_c$. We need to show the remaining four partitions are no better than some partition in $\Sigma_c$. To show this, we only need to transform them to some three-group case and apply the result of Lemma~\ref{le:K_0=3}.
\begin{itemize}
    \item \textit{Case 1}: $(1\cup 4,\,2\cup 3)$ is not optimal since we can prove $(1\cup 2\cup 3,4)\in \Sigma_c$ is better. To show it, we take $2\cup 3$ as a whole, then by Lemma~\ref{le:K_0=3}, it follows that $\Delta V(1,4)>\Delta V(1,\,2\cup 3)$.

     \item \emph{Case 2}: $(2,\, 1\cup 3\cup 4)$ is not optimal, since we can prove $(1\cup 2, 3\cup 4)\in\Sigma_c$ is better. To show it, we take $3\cup 4$ as a whole, then by Lemma~\ref{le:K_0=3}, it follows $\Delta V(1,\,3\cup 4)>\Delta V(1, 2).$

     \item \emph{Case 3}: $(3,\, 1\cup 2\cup 4)$ is not optimal, since we can prove $(1\cup 2\cup 3, 4)\in\Sigma_c$ is better. To show it, we take $1\cup 2$ as a whole, then by Lemma~\ref{le:K_0=3}, it follows that $\Delta V(1\cup 2,4)>\Delta V(1\cup 2,3).$

     \item \emph{Case 4}: $( 1\cup 3, 2\cup 4)$ is not optimal, since we can prove $(1\cup 2\cup 3, 4)\in\Sigma_c$ is better. To show it, by Lemma~\ref{le:K_0=3}, it follows that $\Delta V(2,4)>\Delta V(2,3)$, and that $\Delta V(1,3)\overset{(b)}{>}\Delta V(1,2\cup 3)$. Here inequality~(b) is also easily obtained, if we notice that $\theta_1>\theta_{2\cup 3}>\theta_3$, thus group~$2\cup 3$ can be also treated as the role of group~2 in Lemma~\ref{le:K_0=3}.
\end{itemize}
\end{proof}

Now Let us prove Theorem~\ref{th:consecutive}.
For convenience, we introduce the notation \emph{Compound group},
such as $Pre(i)$ or $Post(i)$, which represents some part of a
cluster with ordered group indices. For a group~$i$ in some
cluster, $Pre(i)$ (or $Post(i)$) refers as a compound group
composing of all the groups with willingness to pay larger (or
smaller) than that of group~$i$. For example, in a cluster
$1\cup 2\cup 3\cup 5\cup 7\cup 8$, $Pre(3)=1\cup 2$, $Post(3)=5\cup
7\cup 8$.  Note that compound groups can be empty, denoted as
$\emptyset$. In last example, $Pre(1)=Post(8)=\emptyset$. Since all
the groups within the compound group belong to one cluster, we
can apply  Lemma~\ref{le:consecutive_index}. For example, with the
previous cluster setting, $N_{Pre(3)}=N_1+N_2$, and
$\theta_{Pre(3)}=\frac{N_1\theta_1+N_2\theta_2}{N_1+N_2}$. By this
equivalence rule, a compound group actually has not much difference
with one original group. The conclusions of Lemma~\ref{le:K_0=3} and
Lemma~\ref{le:K=4} can be easily extended to compound groups.
\begin{proof}
Without loss of generality, suppose that the group indices order
within each cluster is increasing.

Now consider one partition with discontinuous group indices within
some clusters. We can check the group indices continuity for every
single group. For example, a group~$c$ belonging to a cluster~$\mathcal{C}$,
and its next neighbor in this cluster is group~$d$, if $c-d=1$,
then the group indices until $c$ are consecutive, and if $c-d>1$, then
the group indices are discontinuous, and we find a gap between $c$ and
$d$.

Suppose that checking group indices
continuity for each group following the increasing indices
order (or equivalently decreasing willingness to pay order) from group~$1$ to group~$K^{pp}$.
We do not find any gap until group~$u_1$ in cluster~$\mathcal{U}$. We denote group~$u_1$
next neighbor in cluster~$\mathcal{U}$ is group~$u_2$. Since there is a gap between $u_1$ and $u_2$, there exists a group
$v$ in another cluster $\mathcal{V}$ and satisfying $v=u_1+1<u_2$. Now we
can construct a better partition by rearranging the two clusters
$\mathcal{U}$ and $\mathcal{V}$, while keeping other clusters unchanged. We can view
$\mathcal{U}$ as $(Pre(u_2)\cup Post(u_1))$, and $\mathcal{V}$ as $(v\cup Post(v))$,
since there is no group before $v$ in super-group $\mathcal{V}$,  otherwise it
contradicts with the fact that we do not find any gap until group~$u_1$.
It is easy to show that there is some new partition better than
the original one by Lemma~\ref{le:K_0=3} and Lemma~\ref{le:K=4}.
There are two cases depending on whether $Post(v)$ is empty or
not. If $Post(v)=\emptyset$, according to Lemma~\ref{le:K_0=3}, we find another partition with $\mathcal{U}'=Pre(u_2)\cup v$, $\mathcal{V}'=Post(u_1)$
better than the original $\mathcal{U}$ and $\mathcal{V}$. If $Post(v)\ne\emptyset$, no matter
$\theta_{Post(v)}$ is larger than $\theta_{Post(u_2)}$ or not,
according to Lemma~\ref{le:K=4}, it is easy to construct other
partitions better than the original $\mathcal{U}$ and $\mathcal{V}$, since the compound groups in these original clusters
($\mathcal{U}=(Pre(u_2)\cup Post(u_1))$, and $\mathcal{V}=(v\cup Post(v))$) does not satisfy property of consecutive
group indices within each cluster.

In conclusion, we show that for general cases, if there is any gap in
the partition, then we can construct another partition that is
better, which is equivalent to that the optimal partition must satisfy
consecutive group indices within each cluster.
\end{proof}
\subsection{Proof of Theorem \ref{the:IC}}
\label{sub_appendix_IC}
\begin{proof}
Since $U_i(s,p_q)$ is a strictly increasing function in the interval $[0,s_i^*]$, then (\ref{IC2}) holds, if and only if the following inequality holds:
\begin{equation}
    U_i(s^*_q,p_q)\le U_i(s_{i\rightarrow q},p_q),\;\forall i<q.
\label{n_s}
\end{equation}
Since $t_{1q}>\dots>t_{Kq}$, (\ref{n_s}) can be simplified to

\begin{equation}
    t_{q-1q}^2\ln t_{q-1q}-(t_{q-1q}^2-1)+ \frac{\sum_{k=1}^{K}N_kt_{kq}}{\sum_{k=1}^{K}N_k+S}(t_{q-1q}-1)\ge 0,
\label{tij1_sim}
\end{equation}
where $t_{iq}=\sqrt{\frac{\theta_i}{\theta_q}}$. With a slight abuse of notation, we abbreviate $t_{q-1q}$ as $t_q$, ($q=2,\dots,K$) in the sequel.
It is easy to see that the following inequality is the necessary and sufficient condition of (\ref{tij1_sim}) for $q=2$, and sufficient condition of (\ref{tij1_sim}) for $q>2$:
\begin{equation}
    t_{q}^2\ln t_{q}-(t_{q}^2-1)+ \frac{t_{q}\sum_{k=1}^{q-1}N_k+N_q}{\sum_{k=1}^{K}N_k+S}(t_{q}-1)\ge 0.
\label{tij2}
\end{equation}
Let $g(t)$ be the left hand side of the inequality (\ref{tij2}). It is easy to check that $g(t)$ is a convex function, with $g(1)=0$, $g(\infty)=\infty$ and $g'(1)<0$. So there exists a root $t_q>1$. When $t>t_{q}$, the inequality (\ref{tij2}) holds, thus (\ref{n_s}) holds, and the conclusion in Theorem \ref{the:IC} follows.
\end{proof}

\end{document}